\newtheorem{theorem}{Theorem}
\newtheorem{lemma}[theorem]{Lemma}
\theoremstyle{definition}
\newtheorem{definition}{Definition}
\newcolumntype{C}{>{$}c<{$}}
\newcommand\mybigstrut[1][4pt]{\setlength\bigstrutjot{#1}\bigstrut[t]}
\newcommand\mynegbigstrut[1][2pt]{\setlength\bigstrutjot{#1}\bigstrut[b]}
\begin{document}

\preprint{APS/123-QED}

\title{Scalable and self-correcting photonic computation using balanced photonic binary tree cascades}

\author{Sunil Pai}
\email{sunilpai@stanford.edu}
\affiliation{Department of Electrical Engineering, Stanford University, Stanford, CA 94305, USA}
\author{Olav Solgaard}
\affiliation{Department of Electrical Engineering, Stanford University, Stanford, CA 94305, USA}
\author{Shanhui Fan}
\affiliation{Department of Electrical Engineering, Stanford University, Stanford, CA 94305, USA}
\author{David A.B. Miller}
\affiliation{Department of Electrical Engineering, Stanford University, Stanford, CA 94305, USA}

\begin{abstract}
Programmable unitary photonic networks that interfere hundreds of modes are emerging as a key technology in energy-efficient sensing, machine learning, cryptography, and linear optical quantum computing applications. In this work, we establish a theoretical framework to quantify error tolerance and scalability in a more general class of ``binary tree cascade'' programmable photonic networks that accept up to tens of thousands of discrete input modes $N$. To justify this scalability claim, we derive error tolerance and configuration time that scale with $\log_2 N$ for balanced trees versus $N$ in unbalanced trees, despite the same number of total components. Specifically, we use second-order perturbation theory to compute phase sensitivity in each waveguide of balanced and unbalanced networks, and we compute the statistics of the sensitivity given random input vectors. We also evaluate such networks after they self-correct, or self-configure, themselves for errors in the circuit due to fabrication error and environmental drift. Our findings have important implications for scaling photonic circuits to much larger circuit sizes; this scaling is particularly critical for applications such as principal component analysis and fast Fourier transforms, which are important algorithms for machine learning and signal processing.
\end{abstract}

\maketitle


\section{Introduction}

Reconfigurable photonic networks or meshes of interferometers are capable of transforming optical modes for quantum computation, energy-efficient deep learning, mode unscrambling, sensing, and beamforming \cite{Bogaerts2020ProgrammableCircuits}. In such applications, a set of $N$ waveguide modes containing data or bits are fed into a mesh network of reconfigurable photonic interferometers and phase shifters whose output represents a matrix multiplication, where the matrix is represented by the transmission matrix of the optical device in the basis of waveguide modes. The ability to perform matrix multiplication in an ultra-fast and energy efficient manner is an attractive property of reconfigurable photonic circuits, which significantly broadens the application space for such devices. This approach also promises a new way of performing self-aligning optics with no moving parts which has applications in high-speed sensing and optical phased arrays \cite{Miller2013Self-aligningCoupler, Miller2020AnalyzingNetworks}. However, a key problem with this architecture is the presence of systematic errors that build up over various stages of the photonic circuit, requiring the definition of a theory of sensitivity.

In this paper, we discuss a common class of reconfigurable photonic networks that contain a network or ``mesh'' of Mach-Zehnder interferometers (MZIs), which can be used to progressively interfere pairs of waveguide modes. These modes may be generated in a number of ways; machine learning and photonic computing might use integrated modulators on a photonic chip operated ideally at sub-nanosecond time scales using electro-optic modulation \cite{Wang2018NanophotonicModulatorsb} and using a laser source tuned to a wavelength of interest (e.g. 1550 nm). Alternatively, these modes can arrive directly from free space as outputs of photonic sensors. This idea is still relatively unexplored but is a major focus of recent work \cite{Miller2020AnalyzingNetworks} showing that multimode fields can be analyzed directly by training self-configurable networks of interferometers arranged in a ``binary tree'' configuration. Such modes convert a continuous free space intensity pattern into a discrete sequence of complex numbers with amplitude and phase characterizing light in the fundamental mode of a single-mode waveguide. Pairs of modes can be constructively and/or destructively interfered sequentially as they propagate through feedforward meshes of interferometers. This can be achieved programmatically by adjusting active phase shifting elements within MZIs placed in the photonic circuit. Such phase shifting elements, the pathway for programming analog matrix multiplication and signal processing in photonics, can operate by modulating the effective mode index thermally \cite{Harris2014EfficientSilicon}, electromechanically \cite{Errando-Herranz2020MEMSCircuits, Edinger2020CompactPhotonics}, using phase change co-integration \cite{Wuttig2017Phase-changeApplications}, or electrooptically \cite{Wang2018NanophotonicModulators}.

As shown in Fig. \ref{fig:node}(a), a triangular network of MZIs can be programmed such that the process of light propagating forward through the network is a physical (analog) implementation of any desired unitary matrix multiplication \cite{Reck1994ExperimentalOperator, Miller2013Self-configuringInvited} \footnote{The triangular unitary model was actually first discovered in 1897 by Hurwitz \cite{Hurwitz1897UberIntegration} and only popularized for physics applications over a century later.}. Here, we propose an alternate method for implementing matrix multiplication in classical photonic circuits. Our approach relies on ``nonlocal'' interactions that can be recursively defined using binary trees, a data structure commonly deployed in computer science applications. Depictions of ``unbalanced'' and``balanced'' binary trees of MZIs are shown in Fig. \ref{fig:node}(d). The recursive definition of such structures is further elaborated in Fig. \ref{fig:tree}, demonstrating the construction of arbitrary binary tree designs by connecting every MZI to an output or up to two binary subtrees.

At a high level, this binary tree ``vector unit'' construction can be programmed to route any incoming set of waveguide modes into a single waveguide \cite{Miller2013Self-aligningCoupler} (Fig. \ref{fig:node}(e, f, g)). The reverse is also true by reciprocity as light can be sent into a single mode waveguide and routed out such that any complex vector output may be generated in the device basis (Fig. \ref{fig:node}(a, b, c)). We specifically show that a vector unit can be represented by a recursive binary tree definition (Fig. \ref{fig:node}(d, h)), which can ultimately be used to explore a deeper mathematical framework for the error tolerance and dispersion of such networks. Such units have important roles for state preparation and readout of optical modes which can be invaluable for machine learning in hybrid neural networks \cite{Pai2022ExperimentallyNetworks} and cryptocurrency hash functions \cite{Pai2022ExperimentalCryptocurrency} as well as telecommunications and optical phased arrays.

Photonic networks that are universal can implement any unitary matrix, contain $N^2$ degrees of freedom, and typically include triangular and rectangular meshes \cite{Clements2016AnInterferometers}, the latter of which is more compact but not self-configurable. However, these networks typically only interact waveguide modes locally (only modes of neighboring waveguides interact), which can limit overall photonic bandwidth for which a certain matrix accuracy is achievable. This motivates exploring whether nonlocal architectures increase the bandwidth in a photonic mesh, which is explored in some detail in Ref. \citenum{Hamerly2021InfinitelyInterferometers}. While crossings for nonlocal connections may be a concern, recent advances in multilayer photonics and silicon nitride-on-silicon in commercial CMOS foundries make such designs plausible \cite{Chiles2017Multi-planarLoss}. Balanced ``binary tree''-based architectures are nonlocal architectures that are a key building block for low-depth photonic circuits such as the butterfly architecture, the Benes network (an arbitrary switching network that is two butterfly architectures back-to-back). They also have mappings to quantum architectures such as state preparation circuits \cite{Araujo2021APreparation}, quantum fast Fourier transform (FFT) \cite{Flamini2017BenchmarkingProcessing}, and the cosine-sine architecture, which is a useful architecture in quantum computation \cite{Mottonen2004QuantumGates}.

Our main contribution is to provide a new error model to explain why increased bandwidth and robustness arises from nonlocal connectivity in balanced binary trees. We generally propose ``wide'' or ``splay'' photonic architectures that provide more tolerance than ``deep'' universal architectures because light has to propagate through fewer devices on average, motivating a core design principle for photonic circuits. To prove these claims, we organize the paper as follows:
\begin{enumerate}
    \item In Section \ref{sec:node}, we define the key unit cell or node for the vector unit, specifically the self-configuration or self-error correction step, which automatically corrects for any fabrication and/or hardware error without requiring off-chip calculations \cite{Miller2013Self-configuringInvited, Bandyopadhyay2021HardwarePhotonics, Hamerly2021AccurateInterferometers}.
    \item In Section \ref{sec:network}, we define the balanced and unbalanced vector units mathematically, laying the foundation for our error analysis.
    \item In Section \ref{sec:error}, we define a second-order perturbation theory error model for general feedforward networks. More strikingly, we show that sensitivity of each phase shifter is proportional to the power going through it, which is the basis of our error model. We can leverage this property to non-invasively monitor powers in any feedforward photonic circuit using binary trees at the input and output of the device.
    \item In Section \ref{sec:results}, we bring together the results of Sec. \ref{sec:network} and Sec. \ref{sec:error} to derive a statistical power model that explains the increased robustness scaling of balanced compared to unbalanced vector units. We also perform simulations that compare various phase and coupling errors in the mesh that agree with the derived scaling properties.
    \item In Section \ref{sec:apps}, we propose a new ``binary tree cascade'' model, a generalization of the the triangular architecture \cite{Reck1994ExperimentalOperator, Miller2013Self-configuringInvited} which can be useful for error-tolerant mode conversion and principal component analysis-based signal processing. The Appendix later expands on the binary tree cascade to propose other new error-tolerant splay architectures and relates our work to existing well-known butterfly architectures.
\end{enumerate}

\section{Component model} \label{sec:node}

Before we discuss a network model, we propose a component model for the individual nodes or building blocks of the network. First we note that a node does not have to be explicitly an MZI, which has been recognized more recently \cite{Bogaerts2020ProgrammableCircuits} with the development of tunable coupling elements, so we need to describe the basic functionality of the node component that abstracts away the MZI functionality. 

\subsection{Photonic node core functionality}

In general the core functionality of a photonic node in the network is:
\begin{enumerate}
    \item The node is a $ 2 \times 2 $ component (2 left ports and two right ports).
    \item Given any input into the left ports of a node, the node arbitrarily redirects the light entirely into either one of its right ports.
    \item By reciprocity, given an input into either right port, the node arbitrarily redirects the light into its left ports in both amplitude and \textit{relative} or \textit{differential} phase.
\end{enumerate}
Here, we refer to left and right ports separately from input and output ports, following the naming convention of Ref. \citenum{Miller2020AnalyzingNetworks} because the above definition assumes that there is a phase shifter present on the left side of the device and that light can enter from either direction. The ratio of normalized powers in the output ports is known as a ``split ratio.'' We define the reflectivity $r = 1 - s$ and transmissivity $s = 1 - r$ respectively as the fractional power in the bar port and cross port from the port in which light is sent in. ``Bar state'' means $r = 1, s = 0$ (all light goes to ``same side'' port, e.g. lower input to lower output port), and ``cross state'' means $s = 1, r = 0$ (all light goes to the ``opposite side'' port, e.g. lower input to upper output port). Ultimately, the main idea of this paper is to work with photonic circuits that connect these nodes up to implement arbitrary multi-waveguide modes, as discussed in Ref. \citenum{Miller2020AnalyzingNetworks}, as well as other photonic mesh feedforward networks \cite{Pai2020ParallelNetwork}.

In this paper, our error models work for two implementations of nodes depicted in Fig. \ref{fig:node}(b): the tunable directional coupler (TDC) node and the Mach Zehnder interferometer (MZI) node \cite{Bogaerts2020ProgrammableCircuits}, the latter of which is already quite ubiquitous in commercial and academic implementations of these systems \cite{Shen2017DeepCircuits, Annoni2017UnscramblingModes, Harris2018LinearProcessors, Taballione20188x8Waveguides}. The theme here is in general that there are two elements to a node: a phase shifter and a tunable split ratio. In the MZI case, the tunable ratio consists of two directional couplers and a phase shifter. In the TDC case, there is an explicit physical mechanism to modulate the splitting ratio of a $2 \times 2$ coupler by directly perturbing the active coupling region, side-stepping potential errors in the individual passive directional couplers of the MZI node, though there are nodes that can be corrected for these errors in triangular and rectangular architectures \cite{Hamerly2021InfinitelyInterferometers}.

\begin{figure*}
    \centering
    \includegraphics[width=\linewidth]{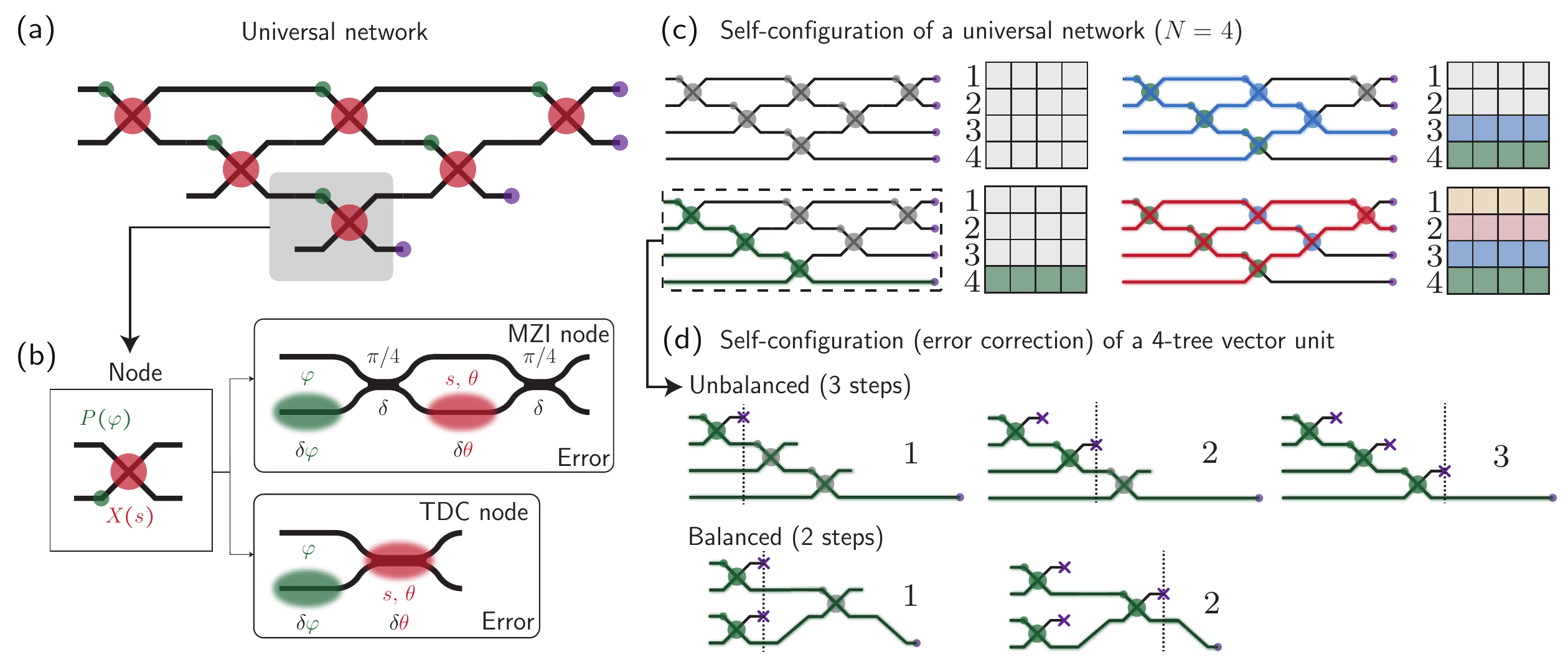}
    \caption{(a) We show a universal $4 \times 4$ photonic network that arises from the well-known Reck architecture. (b) The individual nodes of the universal network are either MZIs or TDCs with errors in phase shifts and coupling. A benefit of the TDC is that it does not rely on fundamental limits based on errors in the fixed splitters $\delta$. (c) Self-configuration of a universal network of four inputs consists of three steps, where each step involves self-configuration to encode a row vector of $U$, a $4 \times 4$ unitary device operator. (d) Self-configuration of a 4-tree vector unit involves nullifying columns in parallel, similar to the proposals in Refs. \cite{Miller2020AnalyzingNetworks, Pai2020ParallelNetwork}.}
    \label{fig:node}
\end{figure*}

\subsection{Node transmission matrix}

It is often mathematically convenient to represent the transmission matrix acting on the two element vector representing the input modes approaching from the left side $\bm{x} \equiv (x_0, x_1) = (x_\mathrm{lower}, x_\mathrm{upper})$. When nodes are connected together, this approach ultimately allows us to define a matrix $U$ to represent the transmission matrix. This may sound familiar, as this formulation is a condensed version of the $S$-matrix formulation, which considers both reflection into the input ports and transmission into the output ports. Since we assume the reflection is sufficiently small, the transmission matrix formulation is also sufficient since it considers transmission into the output ports given an input port excitation. Therefore, based on our definition above, the general ideal $2 \times 2$ node matrix is defined in terms of a phase matrix component $P$ and coupling component $X(s)$:
\begin{equation}\label{eqn:node}
    T(s, \phi) \equiv X(s) P(\phi) \equiv  \begin{bmatrix} -\sqrt{1 - s} & \sqrt{s}\\ \sqrt{s} & \sqrt{1 - s} \end{bmatrix} \begin{bmatrix} e^{i \phi} & 0\\ 0 & 1 \end{bmatrix},
\end{equation}
where any node implements some functionally equivalent form of the $XP$ representation above. In this paper, we abstract away the details of exactly how $s$ and $\phi$ are implemented, though generally it may take the form of an MZI or a tunable coupler \cite{Bogaerts2020ProgrammableCircuits}. In such cases, we typically can find a physical parameter that behaves like a phase (e.g. an arm of the MZI) or the inverse beat length (difference between the first and second mode propagation constants in an MMI or directional coupler) such that the transmissivity $s = \cos^2 \frac{\theta}{2}$ for $\theta \in [0, \pi]$. More explicitly, we will also consider the following more standard parametrization which for MZIs, assumes a ``differential mode phase shift'' \cite{Pai2020ParallelNetwork}:
\begin{equation}\label{eqn:mzinode}
    T_{\mathrm{MZI}}(\theta, \phi) \equiv X(\theta) P(\phi) \equiv  ie^{i \frac{\theta}{2}}\begin{bmatrix} -\sin \frac{\theta}{2} & \cos \frac{\theta}{2}\\ \cos \frac{\theta}{2} & \sin \frac{\theta}{2} \end{bmatrix} \begin{bmatrix} e^{i \phi} & 0\\ 0 & 1 \end{bmatrix}.
\end{equation}
A TDC node, on the other hand, has the parametrization, based on tuning a beat length of the directional or MZI:
\begin{equation}\label{eqn:tdcnode}
    T_{\mathrm{TDC}}(\theta, \phi) \equiv X(\theta) P(\phi) \equiv  ie^{i f(\theta)}\begin{bmatrix} \cos \frac{\theta}{2} & i\sin \frac{\theta}{2}\\ i\sin \frac{\theta}{2} & \cos \frac{\theta}{2} \end{bmatrix} \begin{bmatrix} e^{i \phi} & 0\\ 0 & 1 \end{bmatrix},
\end{equation}
where we include a $e^{i f(\theta)}$ overall phase term that depends on the exact TDC design and coupled mode theory \cite{Huang1994Coupled-modeOverview}. Ultimately, we will focus specifically on architectures of MZI nodes, but our analysis can be extended to uncover error models of architectures of TDC nodes as well.

\subsection{Feedforward architectures}

In this paper, we are specifically interested in how error from individual nodes affects the overall error in entire ``feedforward networks'' of nodes given by Eq. \ref{eqn:node}  \cite{Pai2020ParallelNetwork}, as opposed to networks containing no cyclic loops as in Ref. \citenum{Perez2017HexagonalInterferometers}. As previously discussed, binary tree architectures, butterfly/Benes nonlocal architectures, and rectangular or triangular universal architectures all fall under the umbrella of feedforward networks. As with the component definition above, we assume monochromatic light and light propagating from left-to-right in the network.

When a feedforward photonic circuit is operated, an $N$-dimensional input vector (or input data) $\bm{x}$ enters the left waveguide ports of the device and propagates forward (left-to-right) through the MZI network until it reaches the right side of the network where output amplitude and phase $\bm{y}$ is measured at a set of photodetectors \cite{Miller2013Self-configuringInvited}. Since ideally no light is lost in the circuit (the coupling nodes are ideally just rearranging the light), the propagation of the $N$-dimensional input through the network may be modelled as a \textit{unitary} or norm preserving transformation $U$. Thus our measured $\bm{y}$ is a result of a change in mode basis operation performed by the photonic circuit, which can be represented mathematically as a matrix product $\bm{y} = U \bm{x}$. To recover the mode basis, the column vectors of $U$ can be determined by measuring outputs given inputs into individual input waveguides of the circuit. The row vectors are recovered by sending light back in from the right and measuring amplitudes and conjugate phases on the left of the circuit. Note that there is a final set of phase shifts (a tunable phase screen) placed at the end of the feedforward network that can be generally useful to achieve unitary architectures. For simplicity, we generally do not consider the contribution of these phase shifts as they are not required for self-configuration of vector unit architectures \cite{Miller2020AnalyzingNetworks}.


As shown in Fig. \ref{fig:node}(c, d), the property that MZIs can guide any incoming mode into a single waveguide means that the universal triangular architecture in Fig. \ref{fig:node}(a) can self-configure itself to program any unitary operator in an $N \times N$ optical spatially multiplexed system \cite{Miller2013Self-configuringInvited}. The protocol consists of three steps to self-configure a $4 \times 4$ unitary $U$, where each step self-configures a vector unit of successively smaller inputs (4, 3, 2). This protocol is useful because it can also correct for any fabrication errors in the device, owing to the model-free optimizations of phase shifts at each step of the process, discussed in further detail in the Appendix. Fig. \ref{fig:node}(d) suggests that depending on the architecture, self-configuration can be completed in fewer steps (here shown for a ``balanced'' vector unit example).

As it pertains to this paper, self-configuration uses model-free feedback optimizations to automatically correct for any error while programming, a convenient property we will use later to analyze error-corrected bounds of binary tree architectures. These errors appear in the form of phase errors (wavelength tuning or environmental perturbations), coupling errors and optical losses due to fabrication process variation. In Refs. \cite{Bandyopadhyay2021HardwarePhotonics, Hamerly2021InfinitelyInterferometers}, strategies for error tolerance are proposed that enable error correction in the presence of realistic errors in the splitters, which can actually be mapped to corrected phase errors. These issues are potentially avoided by using a tunable TDC node because such nodes include couplers that are nominally tunable from cross to bar state. We will focus specifically on modelling the MZI node in this paper for simplicity, because as previously mentioned, we avoid a global phase term in the TDC definition that varies with $\theta$ and is thus less straightforward to model. However, assuming such a model is found, similar concepts can be transferred from our analysis in this paper.

\section{Binary tree networks} \label{sec:network}

\begin{figure*}
    \centering
    \includegraphics[width=\textwidth]{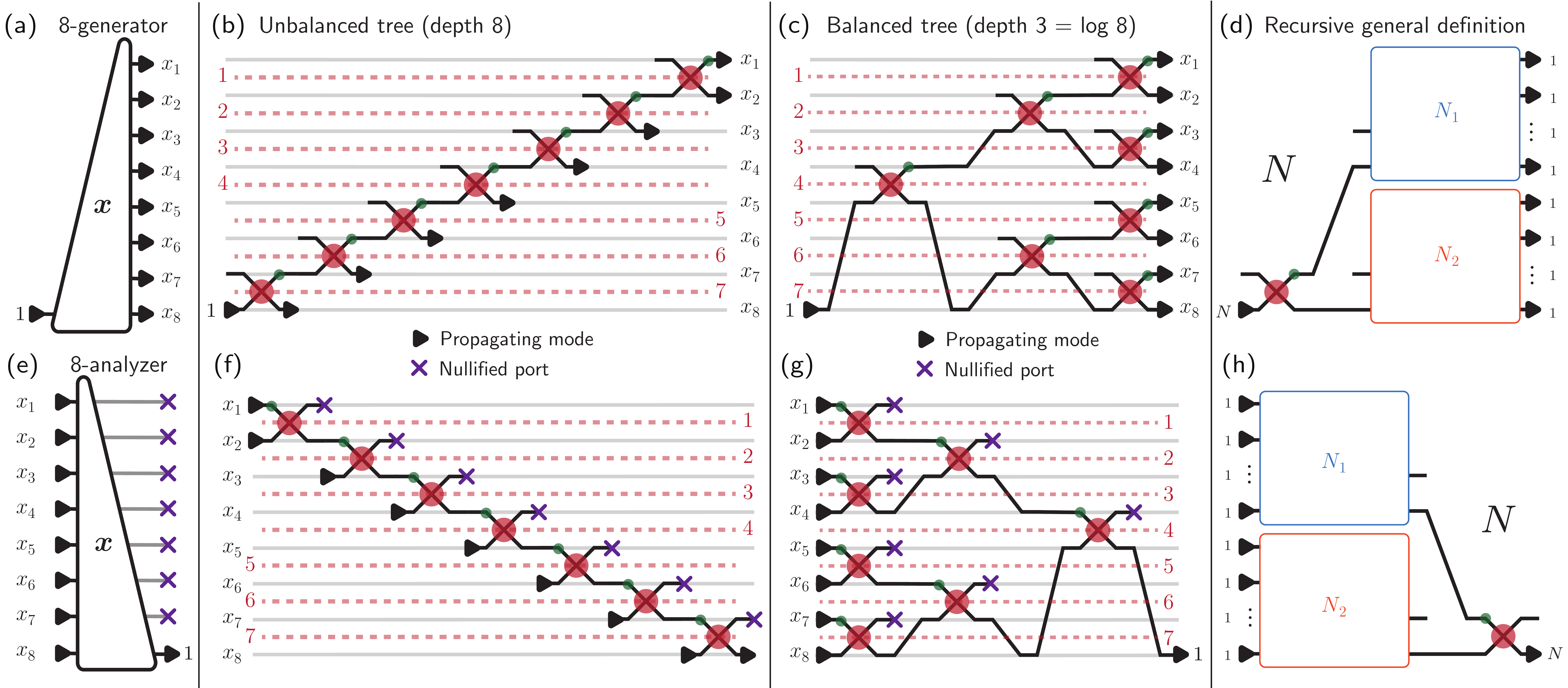}
    \caption{(a) Generator symbol for $N = 8$. (b) Unbalanced generator network for $N = 8$. (c) Balanced generator network for $N = 8$. (d) Recursive generator network definition to generate any vector unit structure. (e) Analyzer symbol for $N = 8$, including nullified ports indicated by purple crosses. (f) Unbalanced analyzer network for $N = 8$. (g) Balanced analyzer network for $N = 8$. (h) Recursive analyzer network definition to generate any vector unit structure. Note: all red dotted lines and red labels denote the index assignments for the individual nodes (both the $\theta_n, \phi_n$ phase shifters in those nodes) compatible with the definition in our Simphox framework \cite{Pai2022Simphox:Library}.}
    \label{fig:tree}
\end{figure*}

In this section, we describe a particular implementation of $U$ that allows for self-configuration. For this, we assume lossless and ideal nodes as described in Section \ref{sec:node}.

\subsection{Embedding nodes in a larger circuit}
In an $N$-waveguide photonic circuit that is feedforward, we can consider modes propagating through a circuit along $N$ ``rails'' indexed $1$ to $N$. This framework is defined explicitly in our previous work Ref. \cite{Pai2020ParallelNetwork} and also in the original proposal for photonic networks \cite{Reck1994ExperimentalOperator}.

We now need some formalism to place $2 \times 2$ elements along these rails in a specified order for light to propagate. For this, we define an embedding matrix for $T$ of Eq. \ref{eqn:node} of the form $T_{m, n}^{[N]}$, where modes $m$ and $n$ are the waveguide indices for the modes to interfere and $N$ is the number of waveguides or ``circuit size.'' This is a unitary operator known in the mathematics literature as a \textit{Givens rotation}:

\begin{equation}\label{eqn:givensrotation}
T_{m, n}^{[N]} \,:=\,\, \begin{blockarray}{ccccccccc}
 &  & \small{m} & & \small{n} &  & N & \\
\begin{block}{[ccccccc]cl}
1   & \cdots &    0   &   0  &   0   & \cdots &    0 & & \mybigstrut\\
\vdots & \ddots & \vdots &  \vdots &  \vdots &        & \vdots & &\\
0   & \cdots &    T_{11}  & \cdots &   T_{12}   & \cdots &    0 & & \small{m} \\
0   & \ddots &    \vdots  & \ddots &   \vdots   & \ddots &    0 & &  \\
0   & \cdots &    T_{21}  & \cdots &   T_{22}  & \cdots &    0   & & \small{n}\\
\vdots &   & \vdots  & \cdots & \vdots & \ddots & \vdots & &\\
0   & \cdots &   0 & 0 &    0   & \cdots &    1 & & N\mynegbigstrut\\
\end{block}
 &  & & &  & & \\
\end{blockarray},\\
\end{equation}
where $T$ is defined as in Eq. \ref{eqn:node}.

In summary, we have a unitary matrix that is an identity matrix, with 1's along the diagonal except in row/column $m, n$ where there are also off-diagonal terms $T_{21}, T_{12}$. This deviates from the definition for embedded photonic node in Ref. \citenum{Pai2020ParallelNetwork}, but ultimately simplifies the representation for defining tree networks.

We refer to a sequence of embedded nodes $\{..., T_{m, n}^{[N]}, ...\}$ that commute, or equivalently do not share any common index $m, n$ as a ``level'' or ``column'' of nodes. We showed previously in Ref. \citenum{Pai2020ParallelNetwork} that any feedforward circuit can be decomposed into efficient columns of nodes that can be tuned (or self-configured) in parallel. For example, the nodes $T_{1, 2}^{[N]}$ and $T_{2, 3}^{[N]}$ \textit{cannot} be evaluated simultaneously (do not commute) because they share a common index $2$ and thus are connected to each other differently depending on the specified order, i.e. $T_{1, 2}^{[N]}T_{2, 3}^{[N]} \neq T_{2, 3}^{[N]}T_{1, 2}^{[N]}$. On the other hand, $T_{1, 2}^{[N]}$ and $T_{3, 4}^{[N]}$ operators do commute ($T_{1, 2}^{[N]}T_{3, 4}^{[N]} = T_{3, 4}^{[N]}T_{1, 2}^{[N]}$) and equivalently can be applied simultaneously. Grouping all columns of nodes is critical in any feedforward architecture to build compact designs and representations for self-configuration or calibration, and this can be achieved via topological sorting by time-order traversal \cite{Pai2020ParallelNetwork}.

\subsection{Definition}

Vector units are architectures that encode arbitrary complex vectors up to an overall magnitude. A vector unit is also capable of implementing a single row or column vector of $U$ in a photonic circuit, the first step in generating an architecture that computes $\bm{y} = U \bm{x}$ matrix-vector products. Such vector units can be fully specified using a \textit{binary tree} data structure consisting of nodes of the form in Eq. \ref{eqn:givensrotation}. At a high level, binary tree structure ensures a single waveguide has a connection path to any $N$ waveguides, and can be used to generate either a generator or an analyzer configuration \cite{Miller2020AnalyzingNetworks}. Vector units are currently primarily deployed as arbitrary state preparation architectures as an input to a general photonic circuit. State preparation (generation) is the inverse of self-configuration (analysis). Thus, by reciprocity as in Fig. \ref{fig:tree}(a), a tree network self-configured (programmed) to an $N$-dimensional complex vector $\bm{y}_N \in \mathbb{C}^N$ can be physically flipped (input enters from root node) and used as a generator to implement $\bm{y}_N^*$. In other words, we can send light into the root node of a balanced tree circuit in generator orientation and achieve any prepared state. In this vein, we will refer to the ``orientation'' of a tree network as an ``analyzer configuration'' if the root node is at the output of the network, and as a ``generator configuration'' if the root node is at the input of the network, assuming light always goes from left to right, following the convention of Ref. \citenum{Miller2020AnalyzingNetworks}.

Balanced vector units in photonic circuits generally require $\log_2 N$ layers, and consist of either a generator binary tree \cite{Miller2020AnalyzingNetworks} or a splitting tree with programmable phase and attenuating elements to define the individual elements of the vector. A generator binary tree (as we will define in this paper) is less lossy than the attenuating element version, which can incur an additional loss of up to $-20$ dB for $N = 100$ when ``one-hot" vectors (standard basis vectors) are programmed on the device.

\subsection{Binary tree}
Analyzer-oriented and generator-oriented vector units are ``binary tree'' graph networks decorated with input and output node-edge pairs, so it is useful to define a binary tree in this context. In this paper, we define a ``binary tree'' using the standard constructive definition definition: starting from a root node, we add at most two child nodes and repeat this process on the child nodes which can also be considered binary trees (or ``subgraphs'') of the original tree. If node $A$ is a child of node $B$, then there is an edge from node $B$ to node $A$. Optionally, we may use directed edges in binary trees, which in our context, are equivalent in function to a waveguide with photons or light travelling in the direction of the edge. Any node that has no children is referred to as a ``leaf'' of the binary tree. We refer to an $N$-tree as any tree that has $N - 1$ nodes. For instance, a $2$-tree is just a single root node with no children. We will show later that an $N$-tree can fully parametrize a vector unit that parametrizes any $\bm{y}_N$.

As shown in Fig. \ref{fig:tree}(b, c), there are two structural or graph-topological extremes for defining an $N$-tree. On one extreme, each node has a single child node: this is also referred to as a ``unbalanced'' binary tree network since the resulting structure is a maximally unbalanced tree. On the other extreme, each node above the ``leaf level'' has two children, in which case we get a \textit{balanced} binary tree, which is also known as a ``divide and conquer'' scheme. Note our definition of binary tree diverges from that of Ref. \citenum{Miller2020AnalyzingNetworks}, in that linear chains are also considered to be extreme forms of binary trees and ``hybrid'' architectures are just binary trees of varying structure or graph topology. We will also use the terms ``balanced tree'' and ``unbalanced tree'' in place of ``binary tree'' and ``diagonal line.'' In other words, our use of ``binary tree'' here maps to a broader definition of a binary tree than in Ref. \citenum{Miller2020AnalyzingNetworks}, where we allow varying extremes of balance. 

Binary trees, as defined here, are the minimal representation to define any self-configuring layer. (This representation is consistent with the criteria for self-configuring layers in Ref. \citenum{Miller2020AnalyzingNetworks} (Appendix).) However, such data structures do not consider other types of nodes in the actual photonic circuit that need to be defined such as input and output nodes (representing optical interconnects such as edge or grating couplers), because these nodes are not necessary to define to arrive at a specific MZI network topology. This is because once a binary tree subgraph is defined, we ``fill'' any missing edges and assign them to input (node with outgoing edge) or output (node with incoming edge) node-edge pairs respectively. In particular, all nodes should have a $2 \times 2$ structure (two incoming and two outgoing edges), and the remaining unfilled edges must be ``filled'' by any missing input or output node-edge pairs. In the final circuit graph, all nodes with a parent have one output node-edge pair filled and the root node has two such pairs filled. All nodes with a single child node have missing outgoing edges filled by an input node-edge pair, and all nodes with no children (leaf nodes) have both missing outgoing edges filled by input node-edge pairs.

\subsection{Generator and calibration}
The elements of a generated vector can be expressed in terms of the phases in a recursive manner, where a node $D_N = T_{N_1, N}^{[N]}(\theta_{N_1}, \phi_{N_1})$ splits vector magnitudes across subtrees that have $N_1$ and $N_2$ outputs as follows:
\begin{equation} \label{eqn:generator}
    \begin{aligned}
        \bm{x}_N &:= D_N^\dagger \bm{e}_N = \begin{bmatrix}
               \cos\left(\frac{\theta_{N_1}}{2}\right) e^{i \phi_{N_1}} \bm{x}_{N_1} \\
               \sin\left(\frac{\theta_{N_1}}{2}\right) \bm{x}_{N_2}
        \end{bmatrix},
    \end{aligned}
\end{equation}
where we define the device operator $D_N^\dagger$ with input to the $N$th (``bottom'') input of the generator device, and we prune off $\theta, \phi$ for the root node of $\bm{x}$ at each recursive step. This simple representation gives a direct formula for each vector element and allows for straightforward calculation of error sensitivities with respect to phases in the specific programmed vector. This is also a convenient formula to use when computing the sensitivity model of the architecture in terms of the control voltage of the phase shifters rather than the phase shifts alone.

The generator configuration can also be used to \textit{calibrate} the $s$ split ratio or $\theta$ phase shifters of the device to generate a lookup table for each of the phase shifter settings in the network as a function of voltages or some other control parameter. Such procedures are defined in Ref. \citenum{Miller2020AnalyzingNetworks} for any binary tree network. One key point in the paper is that rearranging phase shifts into a symmetric configuration (with phase shifts in the two arms of the MZI), can allow one to effectively also calibrate the $\phi$ phase shifters via a reparametrization of the system. However, the $\phi$ phase shifters as defined in this paper cannot be calibrated this way. That can instead be achieved using an analyzer configuration and self configuration as we now discuss.

\subsection{Analyzer and self-configuration}
We now consider the self-configuring analyzer architecture, first proposed in Ref. \citenum{Miller2013Self-aligningCoupler}. We can write a more explicit formula for the unitary matrix $D_N$ first used in Eq. \ref{eqn:generator} in terms of products of matrices of the form in Eq. \ref{eqn:givensrotation}.

Shifting from a graph topology to a matrix framework, we define a formula for \textit{any} binary tree resulting in the unitary matrix $D_N$ (unitary binary tree formulation with $N$ outputs) by a recursive definition:
\begin{equation} \label{eqn:binarytree}
    \begin{aligned}
        D_{N}(\mathbf{s}_N,\bm{\phi}_N) &= T_{N_1, N}^{[N]}(s_{N_1}, \phi_{N_1})\begin{bmatrix}
                D_{N_1} & 0 \\
                0 & D_{N_2}
        \end{bmatrix} \\
        \mathbf{s}_N &= [\ldots, s_{N_1}, \ldots] \\
        \bm{\phi}_N &= [\ldots, \phi_{N_1}, \ldots]
    \end{aligned}
\end{equation}
where the only requirement is that $1 < N_1 < N$ and $N_1 + N_2 = N$ ($N_1$ decides how many outputs each subtree gets at each split, the rail index of the top waveguide, and can be considered the index of the node within a given subtree). We refer to $D_{N_1}$ as the ``top subtree'' and $D_{N_2}$ as the ``bottom subtree.'' To label the nodes further down in the tree (the ``descendants'' of node $N_1$), we apply an offset $N_1$ to any node index in the bottom subtree $D_{N_2}$ as part of the recursive step. As a part of defining the block matrix, we also let ``$0$'' denote the setting of all off-diagonal block-matrix elements to zero. In general $\mathbf{s}_N$ contains $N - 1$ values in the range $[0, 1]$ and $\bm{\phi}_N$ contains $N - 1$ values in the range $[0, 2\pi)$. This, along with the overall magnitude or power and the overall phase accumulated, accounts for all the degrees of freedom in a complex vector, thus $2N$ total degrees of freedom to define an $N$-dimensional complex vector.

Any self-configuring graph architecture may be defined by specifying $N_1$ for recursively defined subtree branches until the base case $D_1 = 1$ is reached. For a (``fully'') balanced tree, we have $N_1 = \lfloor \frac{N}{2} \rfloor$ (i.e., the ``floor'' or integer part of $\frac{N}{2}$), and for a (``fully'') unbalanced tree (chain or ``diagonal line'') we have $N_1 = N - 1$. Variations on the binary tree structure are specified by recursively specifying $N_1$ starting from the root node into repeated invocations of Eq. \ref{eqn:binarytree} until the base case is reached. We will mostly be concerned with the extreme two cases in this paper, because they exhibit key differences in how they scale, but arbitrary structures may also be defined, whose structures fall in the spectrum between balanced and (fully) unbalanced.

A proof of self-configuration can also be done straightforwardly using recursion and the methods of Eqs. \ref{eqn:phisc} and \ref{eqn:ssc}. Consider a lab setting where we are given some complex vector $\bm{x}_N \in \mathbb{C}^N$ and we do not know anything about the settings of the network implementing $D_N$. Through successive minimizations of powers in detectors, we can experimentally self-configure the device so that all the light exits the bottom waveguide indexed at $N$. In mathematical terms $D_N \bm{x}_N = \widetilde{x}_{N} \bm{e}_N$, where $\bm{e}_N$ is the $N$th standard basis vector in $\mathbb{C}^N$ and $\widetilde{x}_{N}$ represents some  phase and amplitude output by the self-configuration with $|\widetilde{x}_{N}|^2 = \|\bm{x}\|^2 \equiv P_N$ (power conservation). Along with these equations, we can use induction to prove that Eq. \ref{eqn:binarytree} implements self-configuration. Specifically, we first split $\bm{x} = [\bm{x}_{N_1}, \bm{x}_{N_2}]$ and apply our inductive hypothesis that $D_{N_1} \bm{x}_{N_1} = \widetilde{x}_{N_1}\bm{e}_{N_1}$ and $D_{N_2} \bm{x}_{N_2} = \widetilde{x}_{N_2}\bm{e}_{N_2}$. Note that power conservation ensures $P_N = P_{N_1} + P_{N_2}$, i.e. the total power in the first and second subtrees of the overall binary tree is always additive (no power is lost or gained).

To complete the proof, we note that we can apply the procedure in the methods of Eqs. \ref{eqn:phisc} and \ref{eqn:ssc} to program the matrix $T_{N_1, N}^{[N]}$ to achieve the result:
\begin{equation} \label{eqn:selfconfiguringproof}
    \begin{aligned}
        D_N \bm{x}_N &= T_{N_1, N}^{[N]}(D_{N_1}, \phi_{N_1})\begin{bmatrix}
                D_{N_1} & 0 \\
                0 & D_{N_2}
        \end{bmatrix} \bm{x}_N\\
        &= T_{N_1, N}^{[N]}(s_{N_1}, \phi_{N_1}) (\widetilde{x}_{N_1}\bm{e}_{N_1} + \widetilde{x}_{N_2}\bm{e}_{N}) \\
        &\equiv  \widetilde{x}_{N} \bm{e}_{N}.
    \end{aligned}
\end{equation}

Note that the base case for the inductive proof is trivial: $D_1 \bm{x}_1 = \bm{x}_1 = \widetilde{x}_{1} \bm{e}_{1}$ since $\bm{e}_{1} = U_1 = 1$. We have now proven that using our recursive definition, any binary tree (not just ``fully" balanced or ``fully" unbalanced) can be self-configured according to the physical process outlined in Eqs. \ref{eqn:phisc} and \ref{eqn:ssc}.

Now that we have proven self configuration is always possible for a (general) binary tree network, such a network can be parametrized in one of two ways: in terms of node settings (which we have defined above) or in terms of the inputs into the layer $\bm{u}_N$ satisfying the condition $D_N(\mathbf{s}_N, \bm{\phi}_N) \bm{u}_N = \bm{e}_{N}$. To define the latter, we define a new matrix function $R_N(\bm{u}_N) = D_N(\mathbf{s}_N, \bm{\phi}_N)$, where $\bm{u}_N \in \mathbb{C}$ satisfies $\|\bm{u}_N\|^2 = 1$ and $\mathrm{arg}(u_1) = 0$ so that the remaining $2N - 2$ degrees of freedom match the number provided by $N - 1$ nodes ($s, \phi$ for each node).

Finally, the time to self-configure relies on the number of columns in the vector unit since MZIs in a column can be self-configured simultaneously \cite{Pai2020ParallelNetwork}. The self-configuration time would need to be sufficiently small to adjust to any incoming training signal, e.g. for a sensor reading modulated optical modes. The number of columns in a binary tree is given by $\log_2 N$, whereas the number of columns in a fully unbalanced tree is given by $N$, so therefore self-configuration is faster by an order of $N / \log_2 N$, which grows quickly with $N$. Hereafter we will use $\log N$ to refer to $\log_2 N$ for ease of notation.

\section{Error model} \label{sec:error}

In this section, we set up the core results of this paper by deriving phase error sensitivity properties for feedforward photonic networks rigorously defined using the ``Hessian'' of a least squares error function describing the intended and measured behavior of the devices. This more general result is needed to ultimately compare the overall performance of various vector unit architectures, including both phase and coupling errors. Unlike in previous sections, here we assume that the feedforward network is already programmed to some desired setting of phase shifts, so the actual implemented unitary operator on the device is already intended to be some $U$. Given this ideal state, we would like to analyze some perturbed ``error'' state $\widehat{U}$ where we vary one or two of the phases.

\subsection{Error function and Hessian}

Here, we proceed to define architecture-dependent ``error sensitivities'' of vector units. We define the error sensitivity of a component to be a ratio between the overall mean square error (due to that component alone) and component error. Here, we seek to describe how various errors (other than loss) in various elements of the circuit affect the \textit{overall} circuit error. Such errors (e.g., fabrication and environmental errors) ultimately can be modelled as coupling errors and phase errors; since coupling errors can be effectively reduced to phase errors in an ideal node \cite{Bandyopadhyay2021HardwarePhotonics}, the phase error can effectively be used to describe either.

We begin by defining the mean square circuit error: 
\begin{equation} \label{eqn:vectorerror}
    \epsilon^2 = \|\bm{y} - \hat{\bm{y}}\|^2= 2 - 2\mathcal{R}(\bm{y}^\dagger \hat{\bm{y}}),
\end{equation}
where $\hat{\bm{y}}$ is the measured output vector, $\bm{y}$ is the predicted output vector, and $\mathcal{R}$ denotes taking the real part. Note again that we are now dealing with the output of a general feedforward mesh.

The first step in calculating the sensitivity of any device is to realize the gradient of the error function in Eq. \ref{eqn:vectorerror} is zero when $\bm{y} = \hat{\bm{y}}$, hence the use of a second-order term called a ``Hessian'' to describe the errors in the circuit. Given that we are aiming for generality, note that this holds regardless of whether the intended input/output behavior $\bm{x}, \bm{y}$ is known a priori. We could in principle run an experiment using an ideal analyzer and generator to determine the full matrix $U$ implemented by the device without any knowledge of the internal settings of the device (assuming Hermitian operator). However, the internal settings of the device, whatever they are, will still have some sensitivity based on the \textit{current} implemented settings.

Assume that the vectors $\bm{\theta}, \bm{\phi}$ are the true phases and the vectors $\widetilde{\bm{\theta}}, \widetilde{\bm{\phi}}$ are the phases with error. We define the error vector $\bm{\Delta} := [\bm{\Delta}_{\theta}, \bm{\Delta}_{\phi}] := [\bm{\theta} - \widetilde{\bm{\theta}}, \bm{\phi} - \widetilde{\bm{\phi}}] := \bm{\eta} - \bm{\eta}'$. This gives the following expression for the error in the network phases:
\begin{equation}\label{eqn:hessian}
    \begin{aligned}
        \epsilon^2(\bm{\Delta}) &= \cancelto{0}{\epsilon^2(\bm{0})} + \bm{\Delta}^T \cancelto{0}{\frac{\partial \epsilon^2}{\partial \bm{\Delta}}} + \frac{1}{2}\bm{\Delta}^T \mathcal{H}_{\epsilon^2} \bm{\Delta} + \cdots \\
        &= \frac{1}{2}\bm{\Delta}^T \mathcal{H}_{\epsilon^2} \bm{\Delta} \\
        &:= \frac{1}{2}[\bm{\Delta}^T_\theta, \bm{\Delta}^T_\phi]\begin{bmatrix}
               \mathcal{H}_{\theta \theta} & \mathcal{H}_{\theta \phi} \\
               \mathcal{H}_{\phi \theta} & \mathcal{H}_{\phi \phi}
        \end{bmatrix} \begin{bmatrix}
               \bm{\Delta}_{\theta} \\
               \bm{\Delta}_{\phi}
        \end{bmatrix},
    \end{aligned}
\end{equation}
where as we have just claimed, the first order gradient term evaluates to zero leaving us with a Hessian $\mathcal{H}_{\epsilon^2}$ where the first rows and columns specify all $\theta$ phase shifts and the final rows and columns specify all $\phi$ phase shifts.

We now can determine the matrix elements of $\mathcal{H}_{\epsilon^2}$, which describes how both the individual phase shifter sensitivities and correlations among phase shifters contribute to the overall error. Note that based on the properties of the second-order derivative terms, we have $\mathcal{H}_{\phi\theta} = \mathcal{H}_{\theta\phi}^T$.

For phase shifters indexed at $i, j$ (corresponding to the concatenated $\bm{\phi}, \bm{\theta}$ vectors $\bm{\eta}$), we can write the formula for each element of the Hessian. We will also need to compare to simulation. To evaluate the Hessian given errors $\delta_i = \eta_i - \hat{\eta_i} \ll \eta_i$, the Hessian can be written in terms of central finite difference (which is required to compute the diagonal terms correctly):
\begin{equation}\label{eqn:hessianfd}
    \begin{aligned}
        \mathcal{H}_{ij} &:= \frac{\partial^2\epsilon^2}{\partial\delta_i \partial\delta_j} \approx \frac{\epsilon^2(\delta_i\bm{e}_i + \delta_j\bm{e}_j) - \epsilon^2(\delta_i\bm{e}_i - \delta_j\bm{e}_j)}{2\delta_i\delta_j}.
    \end{aligned}
\end{equation}
Note that to compute the Hessian here we are subtracting error contributions where perturbations go in the same direction from those where they go in the opposite direction. Note that if $i = j$, the first term is nonzero and the second term is zero.

However, we can also use the generator formula of Eq. \ref{eqn:generator} to derive an exact formula for the Hessian to avoid needing to perform highly computationally intensive tasks as Hessian finite differences in Eq. \ref{eqn:hessianfd}. We will now explicitly evaluate the Hessian matrix elements, and in the process derive the sensitivity and correlations across many phase elements in any given feedforward mesh network and expected input/output mode pair.

\begin{figure*}
    \centering
    \includegraphics[width=\linewidth]{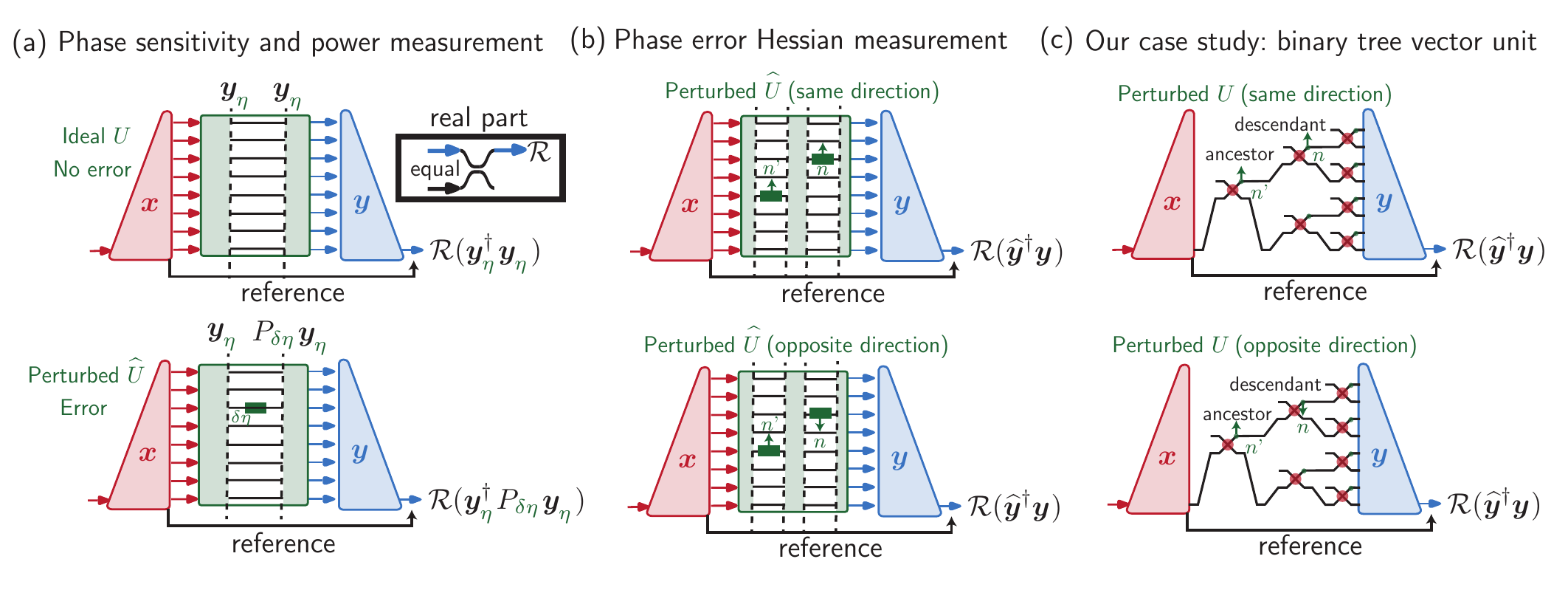}
    \caption{We show how we might experimentally implement a phase sensitivity or power monitor in a feedforward mesh (green) using an error detection circuit (blue and red). The error detection circuit assumes a perfectly ideal input generator (red) and output (blue) analyzer set to input/output mode pair $\bm{x}, \bm{y}$ respectively. As shown in the ``real part'' inset in (a), we also use an extra reference path to interferometrically determine the real part of the output signal corresponding to our mean square fidelity (or $1 - \epsilon^2$) assuming equal amplitudes at the input between reference and mesh signal (This can be adjusted using attenuation on the reference path.). (a) To measure power or sensitivity at any phase shifter, we simply perturb a phase shifter $\eta$ by some fixed amount $\delta \eta$ in the middle of the mesh and measure the corresponding change in the output power. (b) Measuring the Hessian in any feedforward mesh requires perturbing a descendant $n'$ and ancestor $n$ phase shift by the same amount and subtracting the  response resulting from going in the same and opposite directions. If there are no paths between the phase shifts, the Hessian contribution is zero, and if they are the same phase shift or $n = n'$, reduce to the case in (a). (c) The binary tree mesh problem that we consider involves sending just a single mode into a generator binary tree and evaluating the Hessian on the output vector.}
    \label{fig:sensitivityhessian}
\end{figure*}

\begin{figure*}
    \centering
    \includegraphics[width=\linewidth]{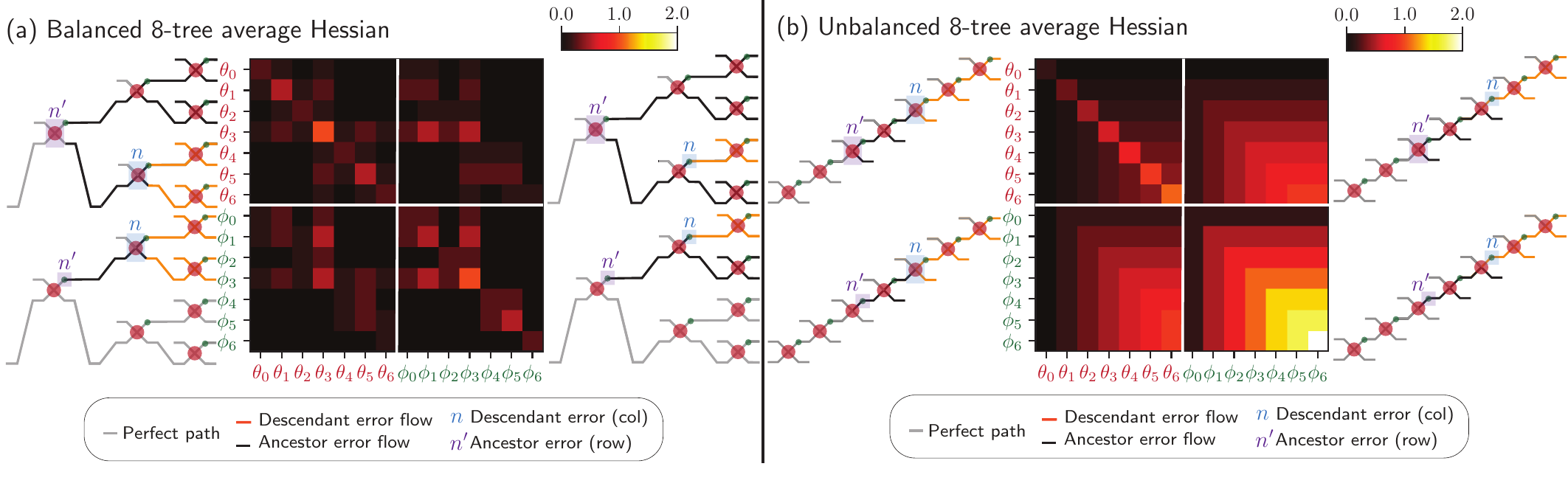}
    \caption{Assuming an MZI node as in Fig. \ref{eqn:node}, we calculate relative magnitudes of balanced (a) and unbalanced (b) Hessian terms based on Eqs. \ref{eqn:hessianbt}, \ref{eqn:hessianpower}. These results clearly depict larger correlations in unbalanced architectures and also larger sensitivities for the individual nodes (along the diagonal). Next to each plot, we include example diagrams of the correlation links between nodes lined up with the quadrants of the Hessian $\mathcal{H} = \begin{bmatrix} \mathcal{H}_{\theta \theta} & \mathcal{H}_{\theta \phi} \\ \mathcal{H}_{\phi \theta} & \mathcal{H}_{\phi \phi} \end{bmatrix}$ so it is possible to reason out why certain Hessian terms are zero (nodes not connected) or why some are larger than others (more expected power in the descendant node).}
    \label{fig:hessian}
\end{figure*}

\subsection{Sensitivity in feedforward networks}

We want to prove the following claim: ``Given some phase shifter $\eta$, a photonic feedforward mesh implementing $U$, and the input/output mode pair $\bm{x}, \bm{y}$, the sensitivity of any individual phase shifter $\eta$ in the device is equal to the power going through that phase shifter $p_\eta$ given a linear square error function as in Eq. \ref{eqn:vectorerror}.'' Note that we are given a single input and output mode pair because Eq. \ref{eqn:vectorerror} considers a single vector error, but we will consider a more general case later. Additionally, note that in previous scenarios, we considered the specific case where the desired $\bm{y} = \bm{e}_N$ ($N$th standard basis vector) for a vector unit successfully programmed to ideal $\bm{x}$. More generally, given any input $\bm{x}$ we attain some ideal outcome $\bm{y} = U \bm{x}$.

We define operators such that $U = R_\eta P_\eta L_\eta$, where $L_\eta, R_\eta$ represent operators before and after (to the left and right) of the phase shifter $\eta$ in a given device and $P_\eta$ is the operator for some applied $\eta$ phase shift, i.e., a diagonal unitary matrix where a $e^{i \eta}$ phase shift is applied to any single waveguide mode of the system as shown in Fig. \ref{fig:sensitivityhessian}(a).

Substituting into Eq. \ref{eqn:vectorerror}, we calculate the error of that phase shifter for some $\hat{\eta} = \eta + \delta \eta$ error:
\begin{equation} \label{eqn:powererror}
    \begin{aligned}
        \epsilon^2(\delta \eta) &= 2 - 2 \mathcal{R}(\bm{y}^\dagger \hat{\bm{y}})
        \\ &= 2 - 2 \mathcal{R}(\bm{x}^\dagger L_\eta^\dagger P_{-\eta} \cancelto{1}{R_\eta^\dagger R_\eta} P_{\hat{\eta}} L_\eta \bm{x}) \\
        &= 2 - 2 \mathcal{R}(\bm{x}^\dagger L_\eta^\dagger P_{\delta \eta} L_\eta \bm{x}) \\
        &= 2 - 2 \mathcal{R}(\bm{y}_\eta^\dagger P_{\delta \eta} \bm{y}_\eta) \\
        &:= 2 (1 - p_\eta \cos\delta \eta) \approx p_\eta \delta \eta^2,
    \end{aligned}
\end{equation}
where $\delta \eta = \hat{\eta} - \eta$, $p_\eta$ is the relative power in the phase shifter $\eta$ and $\bm{y}_\eta = L_\eta \bm{x}$ is the vector preceding that phase shifter as light propagates through the feedforward device. Again, note that this Eq; \ref{eqn:powererror} holds for any feedforward programmable optical device given any individual input/output mode pair $\bm{x}$ and $\bm{y} = U \bm{x}$. It is important to note that while Eq. \ref{eqn:powererror} holds specifically for $\bm{x}, \bm{y}$, but all phase shifters contribute equally to error in the overall matrix $U$ \cite{Bandyopadhyay2021HardwarePhotonics} as we later address. However, since the emphasis in this paper is that we care about only a subset of modes and not the full Hilbert space spanned by the rows of $U$, the expression of Eq. \ref{eqn:powererror} is of increased importance.

To help with understanding this concept, we show that it is possible to apply the results of Eq. \ref{eqn:powererror} to perform a direct measurement of the sensitivity (and thus monitor intermediate powers) in an arbitrary feedforward network in a direct way experimentally. This is shown diagrammatically in Fig. \ref{fig:sensitivityhessian}(a). We dump half of the light into a reference path between the original generator for $\bm{x}$ and the output of the final analyzer for $\bm{y}$ measured at the output of our feedforward mesh. We can define the ``real'' part of a signal as the contribution in the top waveguide of the final 50/50 coupler. This trivially follows from the definition of a 50/50 beamsplitter matrix when the output mode of the analyzer and the reference path light are out-of-phase by $\pi / 4$.

\subsection{Hessian sensitivity for vector units}

As shown in Fig. \ref{fig:sensitivityhessian}(b) and (c), it is possible to specify the various Hessian matrix elements by directly evaluating correlations betweden perturbations across phase shifts as they affect the mean square error. The sensitivities given by $\epsilon^2(\delta \eta)$ of Eq. \ref{eqn:powererror} are equivalent to diagonal terms $\mathcal{H}_{\eta \eta} / 2$ for the Hessian for any feedforward network. For binary tree vector unit architectures, we now consider the more general case of Hessian sensitivities that involve correlations across different phase shifters rather than just the individual phase shifters, i.e., where $\eta \neq \eta'$.

The Hessian off-diagonal terms relate to correlated errors while the on-diagonal terms relate to uncorrelated sensitivities as in Eq. \ref{eqn:powererror}. More explicitly, we can calculate the total error due to error in the individual phase shifters as shown previously in Fig. \ref{fig:node}(b) as:

\begin{equation} \label{eqn:errorcorr}
    \begin{aligned}
        \epsilon^2(\bm{\Delta}) &\approx \frac{1}{2}\bm{\Delta}^T \mathcal{H}_{\epsilon^2} \bm{\Delta} = \frac{1}{2}\sum_{\eta, \eta'}\mathcal{H}_{\eta \eta'} \delta\eta \delta\eta' \\&= \underbrace{\frac{1}{2}\sum_{\eta}\mathcal{H}_{\eta \eta} \delta\eta^2}_{\mathrm{uncorrelated}} + \underbrace{\frac{1}{2}\sum_{\eta \neq \eta'}\mathcal{H}_{\eta \eta'} \delta\eta \delta\eta'}_{\mathrm{correlated / bias}}
    \end{aligned}
\end{equation}

The statistics of the individual phase errors may be characterized using a covariance matrix, which give the distributions of $\delta\eta^2$ and $\delta\eta \delta \eta'$ and may be measured experimentally. If we assume Gaussian error or noise $\delta \eta \sim \mathcal{N}(0, \sigma_\eta)$, then $\mathbb{E}[\delta \eta^2] = \sigma_\eta^2$, where $\mathbb{E}$ represents the average or expected value. The correlation of a pair of phase errors $\delta \eta, \delta \eta'$ is given by $\mathbb{E}[\delta \eta \delta \eta']$, which is 0 only if the errors are both completely uncorrelated and centered at zero. When using an MZI node, bias may be removed such that the errors are zero by adjusting the input wavelength until the bias is zero (i.e. the expected MZI coupling is 50/50), ensuring $\mathbb{E}[\delta \eta \delta \eta'] = 0$.

In practical linear photonic network implementations, the reason the entire Hessian needs to be considered rather than just the on-diagonal uncorrelated errors is that crosstalk and wavelength errors in phase shifters or couplers might be correlated. Measuring the bandwidth of a photonic network, for example, relies on the measurement of correlated error across phase shifters in the photonic circuit. Additionally, lithography does in many cases introduce spatially correlated errors (e.g. errors in waveguide widths and coupling gaps that are spatially closer on the photonic circuit will have more correlation than those far away).

The Hessian off-diagonal terms require two pieces of information: the ancestor and the descendant phase shifts $\eta', \eta$. If the ancestor and descendant phase shifts are not connected, then there is no Hessian contribution because those phase shifts do not affect each other (This can be also deduced from explicit evaluation of the second derivative of the error with respect to the two phase shifts.). If they \textit{are} connected it should be evident that error in one phase shift will influence the error in the other phase shift.

The computation of the off-diagonal Hessian terms is more straightforward for binary tree vector units compared to other feedforward architectures that do not obey the tree property. Examples of ancestor and descendant ``error flow'' for balanced and unbalanced trees, are provided in our evaluation of the Hessian terms for 8-tree vector units (i.e., $N = 8$) in Fig. \ref{fig:hessian}(a) and (b). In the example 8-tree diagrams, we denote connected phase shifters of types $\theta \to \theta$, $\theta \to \phi$, $\phi \to \theta$, and $\phi \to \phi$ each of which account for a matrix element in each quadrant of the Hessian matrix. Following the same theme as our individual phase shifter sensitivities, the power in the descendant node is the only quantity needed for the relevant Hessian elements.

For any binary tree, the Hessian terms of Eq. \ref{eqn:hessian} given any phase shifter $\eta$ have different expressions based on whether the ancestor is an internal phase shift $\theta$ or external phase shift $\phi$:
\begin{equation} \label{eqn:hessianbt}
    \begin{aligned}
        \mathcal{H}_{\eta \eta} &= 2 p_\eta \\
        \mathcal{H}_{\theta \eta} &= \mathcal{H}_{\eta \theta} = p_\eta \\
        \mathcal{H}_{\phi \eta} &= \mathcal{H}_{\eta \phi} = 2 p_\eta
    \end{aligned}
\end{equation}
In words, ``each element of the Hessian matrix is nonzero if the phase shifters are connected, is proportional to the power going through the descendant phase shifter closer to the leaves of the tree with scaling of 1 if the ancestor is internal phase shifter and 2 if ancestor is an external phase shifter.'' This Hessian definition gives a complete picture of error flow and is ultimately a major step towards efficient and thorough error modeling of binary tree photonic architectures. Further details are provided in the Appendix.

\begin{figure*}
    \centering
    \includegraphics[width=\linewidth]{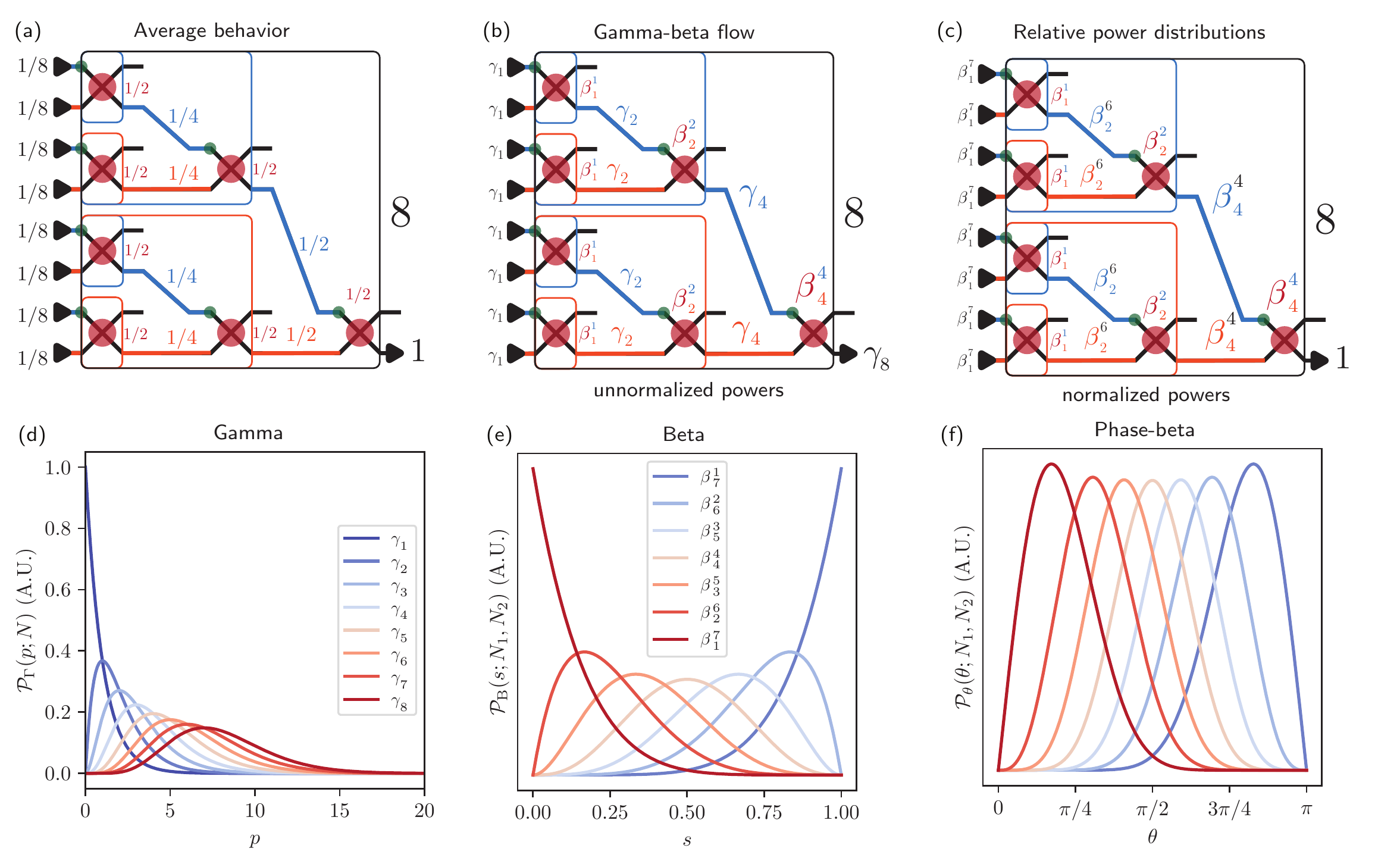}
    \caption{(a) Average distribution of power in a balanced binary tree. (b) Gamma-beta flow is a diagram labelling the statistics of $p_n$ and $s_n$ the power entering node $n$ and the transmissivity of node $n$. The statistics of the programmed analyzer settings are beta distributions ($\beta_{N_1}^{N_2}$) given gamma-distributed input powers ($\gamma_{N'}$). (c) Relative power distributions enforce the constraint that the total power in the system is 1, so the measured powers in each of the waveguides follow beta distributions ($\beta_{N'}^{N - N'}$) rather than gamma distributions ($\gamma_{N'}$). (d) Gamma distributions given an input power $p$. (e) Beta distributions given an input transmissivity or relative power $s$. (f) Phase-beta distributions (for the internal phase shifts).}
    \label{fig:treestats}
\end{figure*}

\begin{figure*}
    \centering
    \includegraphics[width=\textwidth]{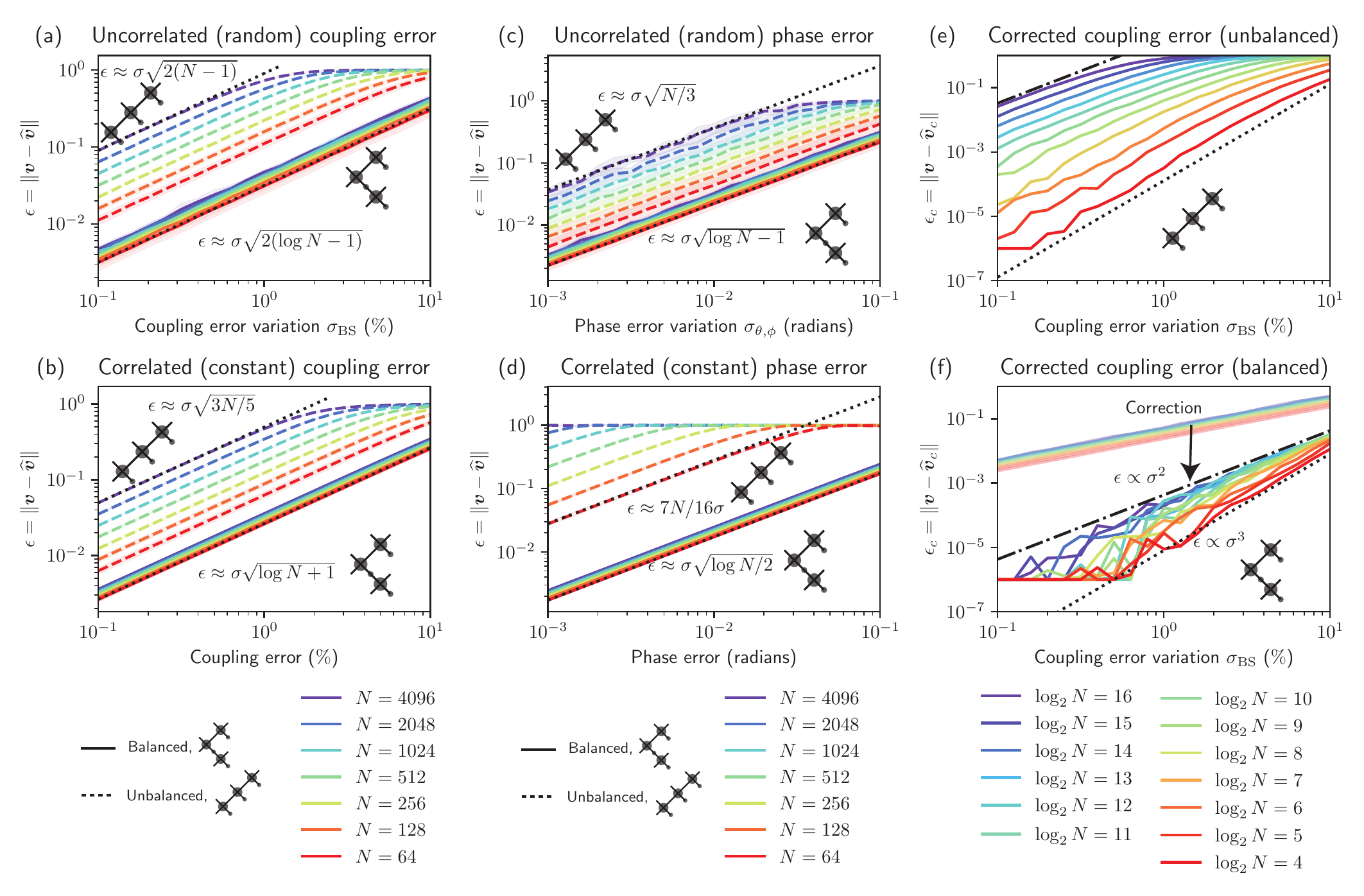}
    \caption{Here, we analyze correlated and uncorrelated error, in the presence and lack of error correction for balanced and unbalanced photonic vector units. Legends for each column of this figure are provided in the final row, specifying the various dimensions of our analysis. (a) Random coupling error for unbalanced and balanced trees from $N = 64 \to 4096$. (b) Constant coupling error for unbalanced and balanced trees from $N = 64 \to 4096$. (a) Random phase error for unbalanced and balanced trees from $N = 64 \to 4096$. (b) Constant phase error for unbalanced and balanced trees from $N = 64 \to 4096$. (e) After error correction, the unbalanced architecture now has error proportional to roughly $N \sigma^2$ for large $N$ and $N \sigma^3$ for small $N$. (f) After error correction, the balanced architecture now has error proportional to roughly $\log N \sigma^2$ for large $N$ and $\log N \sigma^3$ for small $N$.}
    \label{fig:error}
\end{figure*}

\section{Results} \label{sec:results}

We now have a prescription for determining the various sensitivities of any feedforward vector unit regardless of structure, but we still need to perform a fair comparison of robustness among various architecture choices. In this section, we define first a new statistical model of tree vector units which ultimately allows us to perform a thorough comparison of the robustness of varying photonic architectures. All calculations are performed in our simulation framework Simphox \cite{Pai2022Simphox:Library}.

\subsection{Statistical model of tree vector units}

We begin by defining how assumed random input distributions transform into powers in the binary tree architectures used to compute the Hessian as in Eq. \ref{eqn:hessian} and resulting in the average behavior of Fig. \ref{fig:hessian}(a) versus (b) (balanced versus unbalanced).

Recently, error models have been proposed for rectangular and triangular architectures under the assumption that inputs to such networks are complex normal vectors \cite{Russell2017DirectMatrices, Bandyopadhyay2021HardwarePhotonics}, which accounts for most realistic scenarios for these devices. We approach the problem from a similar angle, but with more of a statistical focus as we attempt to write a similar framework for the class of binary tree vector units, a larger class of architectures. We ultimately show that maximally balanced trees are the most robust to coupling matrix errors, both analytically and via simulation. Note we use the term ``maximally balanced'' to account for cases a binary tree is never entirely balanced, i.e. where $N \neq 2^K$ for nonnegative integer $K$.

Keeping things simple to start, consider the distribution of equal powers in what we may call the ``average'' case as shown in Fig. \ref{fig:treestats}. The average input has all equal power magnitudes entering the network. Depending on the structure of the network, for an analyzer to route the equally distributed powers to a single output, the appropriate splitting ratios must be defined at each node $n$. We need to come up with a theory to specify what these splitting ratios are as well as what happens when we no longer obey this average simple case. For this reason, we need to make some assumptions about the input distribution into the network, and this will allow us to fairly compare the error tolerances of balanced and unbalanced tree vector units.

The (standard) assumption that inputs to the network are complex normal random numbers means that the inputs are of the form $\bm{x} = \bm{a} + i \bm{b} \sim \mathcal{C}\mathcal{N}(0, 1)$, where $\mathcal{C}\mathcal{N}$ represents a complex normal distribution defined where $a_n, b_n \sim \mathcal{N}(0, 1 / 2)$ are independently distributed complex normal distributions centered at 0 with variance $1 / 2$:

\begin{equation}
    \mathcal{P}(x_n) = \frac{e^{-a_n^2 - b_n^2}}{\pi / 2}
\end{equation}

In physics, we typically work with the phasor representation $x_n = \sqrt{y_n} e^{i \varphi}$, where $y_n = a_n^2 + b_n^2$, so the above probability distribution becomes much simpler:
\begin{equation} \label{eq:exponential}
    \mathcal{P}(y_n, \varphi) = e^{-y_n}
\end{equation}
This proves that that $\varphi \sim \mathcal{U}(0, 2\pi)$ is a uniform distribution, i.e. all values between $0$ to $2\pi$ are equally likely. Additionally, the powers are exponentially distributed which by definition obeys the Gamma distribution defined as $y_n \sim \mathrm{Gamma}(1)$. More generally, the formula for a gamma-distributed random variable distributed as $\mathrm{Gamma}(N)$ is
\begin{equation} \label{eq:gamma}
    \mathcal{P}_\Gamma(y; N) := \frac{y^{N - 1}e^{-y}}{\Gamma(N)},
\end{equation}
where substituting $N = 1$ gives the exponential distribution in Eq. \ref{eq:exponential}. 

We can extend the definition in Eq. \ref{eq:gamma} even further. Specifically, we can use the convenient property that Gamma distributions, like powers in self-configured nodes, are additive. In particular, the average and variance of the distribution for $\mathrm{Gamma}(N)$ is both $N$. The sum of gamma-distributed powers add up both in average and variance to a total gamma distributed power, an interesting statistical analog of energy conservation. 

More formally in our scenario, the total power in a self-configured branch of the binary tree with $M$ inputs is distributed as $P_M \sim \mathrm{Gamma}(M)$ because it is the sum of identically distributed inputs distributed as $\mathrm{Gamma}(1)$. In Fig. \ref{fig:treestats}(b), we label edges (waveguides) with the distribution $\mathrm{Gamma}(N)$ as $\gamma_N$ edges, which gives a shorthand notation to denote the distribution of powers expected in that waveguide given a self-configured vector. The inputs are labelled $\gamma_1$ as expected by the random complex vector condition.

The final step is to find the distribution for $s, \phi$ in the binary tree nodes. A reparametrization is needed using the recursive update in Eq. \ref{eqn:selfconfiguringproof} from $(\widetilde{x}_{N_1}, \widetilde{x}_{N_2})$ to $(s, \phi, \widetilde{x}_{N})$. As we previously showed in Eq. \ref{eqn:ssc}, we can parametrize $s$ as:
\begin{equation}
    s = \frac{|\widetilde{x}_{N_1}|^2}{|\widetilde{x}_{N_1}|^2 + |\widetilde{x}_{N_2}|^2} = \frac{P_{N_1}}{P_{N_1} + P_{N_2}},
\end{equation}
where $P_{N_1} \sim \mathrm{Gamma}(N_1)$ and $P_{N_2} \sim \mathrm{Gamma}(N_2)$. Recall that $s \in [0, 1]$, so we also have $1 - s \in [0, 1]$. As before, we also have $\phi \sim \mathcal{U}(0, 2\pi)$.

The distribution for $s$ is determined using a change-of-basis, and is known as a \textit{beta distribution}, written as $s \sim \mathrm{Beta}(N_1, N_2)$. In statistics more generally, beta distribution can be thought of as a way to measure fairness of a coin given $N_1$ head trials and $N_2$ tail trials. The smaller $N_1, N_2$ are, the less certain we are of the fairness and the larger the variance of the corresponding beta distribution. Here, we propose a new analogy to optical power statistics; we apply the same concept to our physical platform where coin flip probabilities are instead represented as fractions of powers in various segments of the circuit. The beta distribution (also shown in Fig. \ref{fig:treestats}(e)) is defined as:
\begin{equation} \label{eqn:betadist}
\begin{aligned}
    \mathcal{P}_\mathrm{B}(s; N_1, N_2) &= \frac{s^{N_1 - 1} (1 - s)^{N_2 - 1}}{\mathrm{B}(N_1, N_2)} \\
    \mathrm{B}(N_1, N_2) &= \frac{\Gamma(N_1 + N_2)}{\Gamma(N_1) \Gamma(N_2)},
\end{aligned}
\end{equation}
where $\mathrm{B}(N_1, N_2)$ is just a normalization function for the beta distribution (similar to $\Gamma(N)$) that depends on the parameters of the beta distribution. Analogous to coin fairness in our above example, the beta distribution parameters $N_1, N_2$ tell us the average or  ``expected'' fraction of power expected for random variables, which is generally $\langle s\rangle = N_1 / (N_1 + N_2)$. We can further the analogy by relating coin fairness to the expected power fraction. Just as more trials increase coin fairness confidence, an increase in $N_1$ and $N_2$ (the number of inputs leading into the first and second subtrees of a node) corresponds to increased confidence (decreased variance) in allocating power to each subtree, with variance given by $\frac{N_1N_2}{(N_1 + N_2)^2 (N_1 + N_2 + 1)}$. If $N_1 = N_2 = N / 2$ (maximally balanced case), the variance is $\frac{1}{4(N+1)}$. If $N_1 = N - 1, N_2 = 1$ (maximally unbalanced case), then the variance is  $\frac{N - 1}{4N^2(N+1)}$, which is roughly a factor of $N$ less than the variance for the maximally balanced case. The increased confidence in the fraction to attribute to each subtree also decreases the error tolerance for $s$. Each node may be labelled with the notation $\beta_{N_2}^{N_1}$, which indicates that there were $N_1$ inputs that went into the top subtree and $N_2$ inputs that went into the bottom subtree, and that the results of these inputs are now being funneled or combined into this node. In summary, given input powers into a self-configured node labelled $\gamma_{N_1}, \gamma_{N_2}$ for the top and bottom, the node will be labelled $\beta_{N_2}^{N_1}$ and the output edge labelled $\gamma_{N_1 + N_2} = \gamma_N$. This process is applied recursively for $N = 8$ in Fig \ref{fig:treestats}(b, c). We also consider the change-of-variable $s \to \theta$ in Fig. \ref{fig:treestats}(f) for internal phase shifts $\theta$ based on the formula we derived earlier $s = \cos^2 \frac{\theta}{2}$:
\begin{equation} \label{eqn:phasebeta}
    \mathcal{P}_{\mathrm{B}, \theta}(\theta; N_1, N_2) = \frac{\left(\sin \frac{\theta}{2}\right)^{2N_1 - 1} \left(\cos \frac{\theta}{2}\right)^{2N_2 - 1}}{\pi \mathrm{B}(N_1, N_2)},
\end{equation}
which, in the case of $N_2 = 1$, reduces to the findings of Ref. \citenum{Russell2017DirectMatrices} for locally interacting photonic networks. We will call this the ``phase-beta distribution.''

The key difference between balanced tree and unbalanced tree architectures is in the beta distribution for leaf nodes (connecting directly to the inputs into the device) as shown in Fig. \ref{fig:treestats}(a-c).

Note that relative magnitudes in the mesh also behave like a beta distribution, which is useful for our phase and coupling sensitivity analysis. To see this, consider node $n \leq N - 1$ leading to $N'$ outputs with relative output power $p_n$:
\begin{equation} \label{eqn:waveguidestats}
    \begin{aligned}
        p_n &= \frac{P_{N'}}{P_{N'} + P_{N - N'}} \\
        p_n &\sim \mathrm{Beta}(N', N - N'),
    \end{aligned}
\end{equation}
which shows that \textit{relative} powers in the mesh follow beta distributions as well as the nodes of the mesh. Note that the powers in the mesh are not independent because of the tree structure, and therefore, these relative powers can only be used to evaluate sensitivities of the circuit to individual nodes, but not how the overall circuit responds to a perturbation of all nodes at once. We will address this point specifically later.

For specifying the power going through phase shifts $\eta$ in the device more specifically, we define the following based on previously defined $p_n, s_n$ at the various nodes:
\begin{equation} \label{eqn:phaseshiftpowers}
    \begin{aligned}
        p_{\eta} &= \begin{cases}
            p_n / 2 & \eta = \theta_n\\
            p_n s_n & \eta = \phi_n,
        \end{cases}
    \end{aligned}
\end{equation}
where the factor of $2$ comes from the fact that half the power entering the node goes through the top $\theta$ phase shift.

Interestingly, the variance of $\mathcal{P}_{\mathrm{B}, \theta}(\theta; N_1, N_2)$ scales such that it is near-constant for a given $N$, i.e. balance no longer affects the error model for $\theta$. However, the variance does depend strongly on $N$, and binary trees have a scale invariant property that half the nodes in the network always have $N = 2$, or by our notation are $\mathrm{B}_1^1$ nodes, which explains why balanced trees are so much more error tolerant and broadband as compared to unbalanced trees. More generally, balanced trees have the property that $N / 2^\ell$ nodes in column $\ell$ are $\beta_{2^\ell}^{2^\ell}$ nodes as shown in the labelled waveguides in Fig. \ref{fig:treestats}(c).

The number of possible binary tree architectures for a given $N$ is given by the Catalan number $C_{N - 1} = \frac{(2N)!}{(N + 1)! N!}$. Application of Stirling's approximation suggests that the number of possible vector unit designs scales as roughly $4^N$, and though we only consider the two extremes of these designs, our theory may be applied to any of these designs by considering an arbitrary choice of $N_1$ at each recursive step for Eqs. \ref{eqn:binarytree} and \ref{eqn:selfconfiguringproof}, and then applying the statistics of Eq. \ref{eqn:phasebeta}.


\subsection{Balanced trees are robust to phase error}

We now compare the Hessian and the error scaling in balanced and unbalanced binary tree networks, the core theoretical result of this paper. Our goal is to (1) analytically show that balanced trees are robust to phase error compared to unbalanced trees and (2) perform simulation analysis to numerically verify our scaling arguments given both coupling and phase errors.

Comparing these large class of networks requires incorporating the statistical analysis of flow of power in the network in Fig. \ref{fig:treestats}. Specifically, the Hessian elements for the 8-tree shown in Fig. \ref{fig:hessian} arise from the mean of the beta distributions representing powers in the various waveguide segments as labelled in Fig. \ref{fig:treestats}(c). As a result, we find that the Hessian matrix elements for the balanced tree are much smaller than those of the unbalanced tree. Ultimately, we attain expressions for the power in different segments of the network $p_n$ which is entirely sufficient to determine the necessary Hessian terms and even the distributions of those Hessian terms given random inputs.

Under a simplified assumption of uncorrelated error, it is possible to directly compare the performance of balanced and unbalanced trees by simply tracking the amount of light in various branches of the photonic network. In particular, we can assume that the uncorrelated error is the dominant contributing term in Eq. \ref{eqn:errorcorr} where $\delta \sim \mathcal{N}(0, \sigma^2)$. The overall sensitivity can therefore be thought of as simply the sum of optical powers in various branches of the network assuming a complex random input distribution from which $p_n$ statistics arise. 

Let us now assess the scaling relations according to this now simplified framework for uncorrelated error. We apply Eqs. \ref{eqn:waveguidestats} and \ref{eqn:phaseshiftpowers} to compare expected (average) errors $\mathbb{E}[\varepsilon^2(\bm{\Delta})]$ in balanced networks and unbalanced networks using linearity of expectation:
\begin{equation} \label{eqn:hessianpower}
    \begin{aligned}
        \mathbb{E}[\varepsilon^2(\bm{\Delta})] &= \sum_\eta \mathbb{E}[p_\eta] \\
        \mathbb{E}[\varepsilon^2_{\mathrm{bal}}(\bm{\Delta})]& = \sum_{k = 1}^{\log N} \sum_{k' = 1}^{N / 2^k} \frac{2^k}{N} \sigma^2 = \log N \sigma^2 \\
        \mathbb{E}[\varepsilon^2_{\mathrm{unbal}}(\bm{\Delta})] &= 2 \sum_{n = 1}^N \frac{n}{N} \sigma^2 \propto N \sigma^2 \\
    \end{aligned}
\end{equation}
where $N_\eta$ is the number of inputs spanned by phase shifter $\eta$'s subtree, and $\sigma$ represents uncorrelated phase shift error yielding $\log N$ scaling for balanced trees and $N$ scalings for unbalanced trees. This scaling argument is the key argument of this paper and applies not only for phases but for coupling too as we will see in our numerical analysis. This is not surprising due to the relationship between coupling and phase error uncovered by Ref. \citenum{Bandyopadhyay2021HardwarePhotonics}.

If $N$ is not a power of 2, the structure of a fully balanced architecture is slightly different but the Hessian scaling relations are similar. Additionally, we empirically find that the scaling laws for the phase shifter and for the coupler errors are roughly the same (shown in Fig. \ref{fig:error}), which as shown earlier in Eq. \ref{eqn:nodeerror} can be attributed to the fact that coupling and phase error map to each other \cite{Bandyopadhyay2021HardwarePhotonics}. One notable difference though is the constant phase error for unbalanced architectures in Fig. \ref{fig:error}(d) has an error proportional to $N$ rather than $\sqrt{N}$, and this is explained by the Hessian off-diagonal contributions as shown in Fig. \ref{fig:hessian}(b).

We have now shown how to evaluate the distribution of phase errors in any node within any vector unit implementing a random vector as well as various correlations across elements of the circuit. At a high level, the amount of light present in the ``average device'' (device given average input) is larger for an unbalanced tree as compared to a balanced tree. As previously indicated in Eq. \ref{eqn:powererror}, the power in each waveguide segment of an ``average device'' is also a measure of how sensitive that part of the circuit is and how it effectively couples its error to other parts of the circuit. This is not surprising qualitatively, but it is useful to also quantify the sensitivity of optical circuits based on powers expected in the circuit, and our Hessian formalism accomplishes exactly that. The proportion of inputs that are connected to a waveguide segment gives the statistics of the phase sensitivities of an individual node, and the unbalanced tree tends to ``load'' all of its nodes with the maximum possible inputs. later on, specifically Fig. \ref{fig:hessianstats}, we specifically plot the Hessian statistics for balanced and unbalanced trees and observe they obey the appropriate beta statistics predicted by Eq. \ref{eqn:powererror}.

\section{Mode decomposition networks} \label{sec:apps}

We now construct the unitary device operators $U$ to perform matrix-vector products $\bm{y} = U \bm{x}$ by cascading and/or interleaving vector units in various forms, which establishes the many applications of programmable photonic circuits discussed in this work. There are many useful implementations of $U$, some universal and some not universal (i.e., not all unitary matrices can be programmed).

\subsection{Binary tree cascade}

\begin{figure*}
    \centering
    \includegraphics[width=\linewidth]{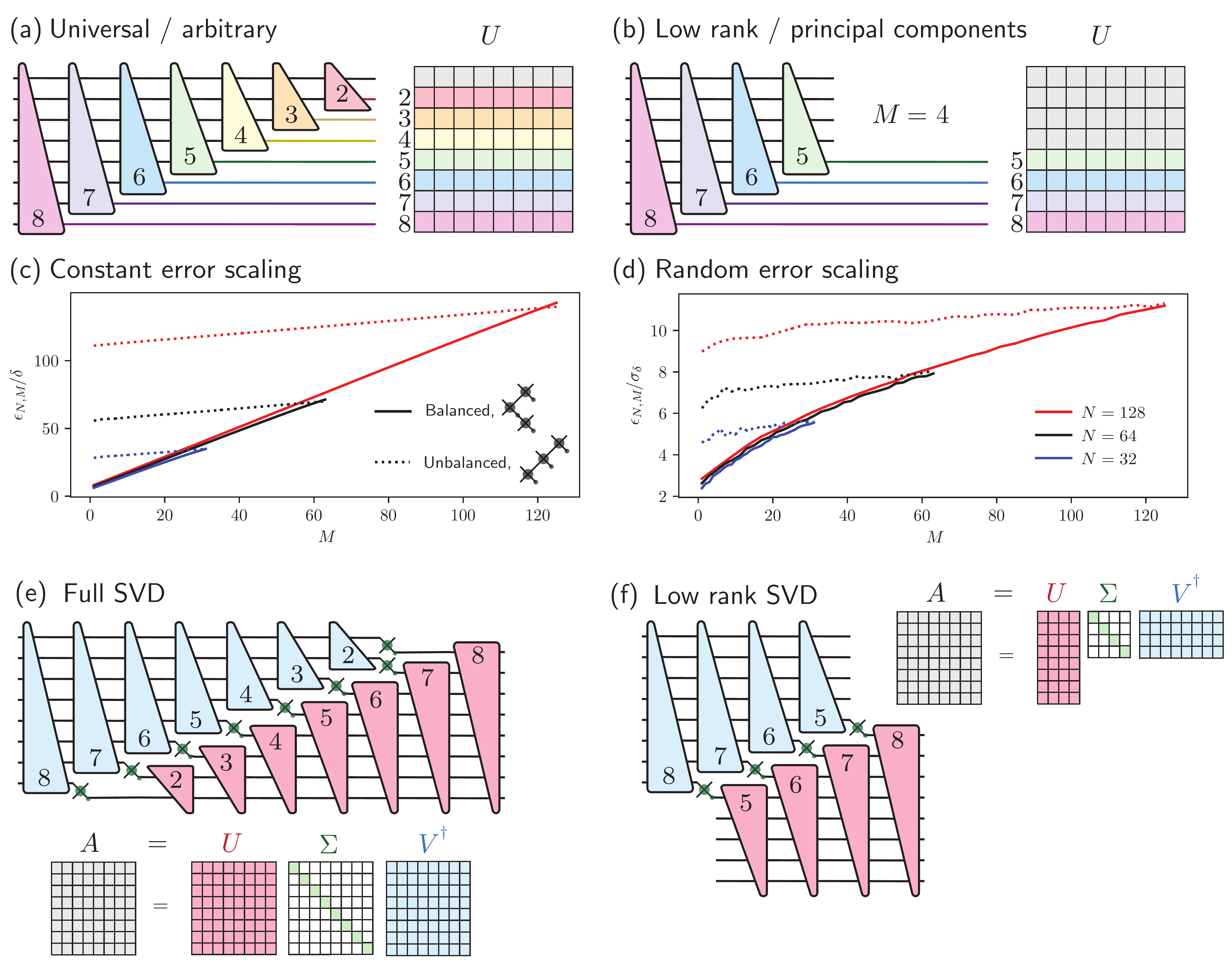}
    \caption{(a) Universal binary tree cascade consists of a sequence of analyzers that are programmed (or equivalently, mathematically computed) via self-configuration to implement rows of $U$ in order $8, 7, 6 \ldots 2$. (b) The SVD architecture implements any arbitrary unitary operator by connecting universal architectures on either end of a set of MZI attenuators with a free input and output. (c) When the error is a constant factor, the balanced architecture wins for for small $N$, but the margin decreases until it is roughly the same at $M = 3N / 4$. (d) When the error is random, the balanced architecture wins for small $N$, but the margin decreases until it is roughly the same at $M = N$. For (c, d), we compare the component error to the overall error and find that constant error is significantly larger than random error due to higher sensitivity to biased error (e.g. circuit bandwidth) versus random error (e.g. fabrication error). (e) Full singular-value decomposition consists of two universal cascades facing each other wih MZI attenuators in between. (f) Low-rank singular-value decomposition consists of two low-rank cascades facing each other with MZI attenuators connecting the relevant dimensions whose bases can be defined arbitrarily.}
    \label{fig:network}
\end{figure*}

For this section, we define a new ``multimode" error function that generalizes the ``single mode" error of Eq. \ref{eqn:vectorerror} to $M \leq N$ orthogonal basis vectors here denoted as $\bm{x}_m$, i.e. not necessarily $N$ ``full rank'' basis vectors. Specifically, we define $\epsilon_{N, M}$ as follows:
\begin{equation}
    \epsilon_{N, M}^2 = \sum_{k = 1}^M\frac{\|\bm{x}_m - \hat{\bm{x}}_m\|^2}{M} = \frac{\mathcal{R}(2 - 2\mathrm{tr}(U_M^\dagger \hat{U}_M))}{M},
\end{equation}
assuming $U_M$ is a $N \times M$ matrix with $M$ orthogonal basis vectors (so $U_M^\dagger U_M$ is $M \times M$). If $M = N$, $U$ is a square matrix and is unitary. The case $M = 1$ degenerates to our previous definition for single vectors in Eq. \ref{eqn:vectorerror}.

In a ``binary tree cascade,'' we use $M$ vector units to construct a set of $M \leq N$ normalized and mutually orthogonal basis vectors that form $U_M$ (which becomes square and unitary when $M = N$). Self-configuring networks in the default orientation defined in Eq. \ref{eqn:binarytree} can be cascaded to form universal unitary architectures, namely architectures that can be used to define any unitary matrix. This follows from the proofs in Refs. \citenum{Reck1994ExperimentalOperator, Miller2013Self-configuringInvited, Miller2013Self-aligningCoupler}, but we will describe a simplified proof here. A \textit{cascade} of self-configuring networks can be defined as:
\begin{equation} \label{eqn:cascade}
    \begin{aligned}
    U_{N, M} &\equiv \prod_{n = 1}^M R_n^{[N]}(\overline{\bm{u}}_n), \\
    R_n^{[N]} &= \begin{cases}
         R_N & n = N \\
         \begin{bmatrix}
                R_{n} & O \\
                O & I_{N - n}
        \end{bmatrix} & \mathrm{otherwise}
        \end{cases}
    \end{aligned}
\end{equation}
where $\bm{u}_n$ is the $n$th column of $U$, $\overline{\bm{u}}_n$ is the $n$th column of $U$ after passing through the first $n$ layers, and $I_{N - n}$ is an identity matrix of size $N - n$. Finally, $R_n^{[N]}$ means, in words, a self-configuring layer matrix representation over the first $n$ inputs of an $N$-waveguide rail system. More explicitly we define $\overline{\bm{u}}_m$ using the recurrence relation:

\begin{equation}
    \overline{\bm{u}}_m = \left(\prod_{n = 1}^{m - 1} R_n^{[N]}(\overline{\bm{u}}_n) \right) \bm{u}_m  \\
\end{equation}

Due to orthogonality, we always have that the last $m$ elements of $\overline{\bm{u}}_m$ are zero, since $\bm{u}_m$, a row in $U_N$ is orthogonal to all other rows in $U$, including the preceding $\bm{u}_1, \bm{u}_2, \ldots \bm{u}_{m - 1}$.

Now, we rewrite Eq. \ref{eqn:cascade} in terms of the actual parameters of the physical system that must be programmed:
\begin{equation}
    U_{N, M}(\mathbf{s}, \bm{\phi}) = \prod_{n = 1}^K D_n^{[N]}(\mathbf{s}_n, \bm{\phi}_n), \\
\end{equation}
where we have applied the self configuration process in Eqs. \ref{eqn:phisc}, \ref{eqn:ssc}, and \ref{eqn:selfconfiguringproof} for each layer in order from left-to-right starting from $R_N \to D_N$ to $R_2 \to D_2$.

While there exist architectures for any binary tree, some vector units can be more compactly cascaded than others. For instance, the unbalanced tree (diagonal line), can be cascaded to form a triangular architecture that is $2N - 3$ nodes deep. However, a balanced binary tree cannot be packed compactly and requires up to $N \log N$ photonic layers to implement. Therefore, for larger $N$, balanced binary tree cascades should ideally be used when $M \ll N$ is sufficient to solve some problem, with the key benefit being that $N$ can now be much larger than what would typically be used for a rectangular or triangular locally interacting network. In such a case, we would have a total of just $M \log N$ layers; an interesting case that warrants further investigation would be a binary tree with $M = N / \log N$ units, which would be $N$ layers deep, equal to the optical depth of a rectangular universal network with a tradeoff of fewer degrees of freedom \cite{Clements2016AnInterferometers}.

\subsection{Singular value decomposition networks}

A singular value decomposition (SVD) network, as first proposed in Ref. \cite{Miller2013Self-configuringInvited} and shown in Fig. \ref{fig:network}(e), is capable of performing any arbitrary complex linear operation (i.e., not just unitary), and is actually a specific case of the cosine-sine (CS) photonic mesh (discussed in the Appendix). The SVD architecture provides the necessary $2N^2$ degrees of freedom by decomposing a matrix in the form $A = U \Sigma V^\dagger$ where $A$ is $N \times M$ matrix, $U, V$ are $N \times N$ and $M \times M$ unitary matrices and $\Sigma$ is $N \times M$ matrix with singular values along the diagonal, i.e. $\Sigma_{nn} = \sigma_n \in \mathbb{C}$ and $\Sigma_{nm} = 0, n \neq m$. The $\Sigma$ matrix is represented as an array of $\min (M, N)$ coupling matrices $X(\theta)$ placed between the universal architectures implementing $U$ and $V^\dagger$, where each element of the coupling matrix act as an effective attenuator (tunable loss element) (with an additional phase shifter if require to implement a complex coupling element).

This feature of these networks is particularly important as we consider using wide or low-rank rather than square matrix multiplication techniques for matrix acceleration architectures. In particular, if $ M \ll N$, then a cascade of $M$ balanced tree architectures for inputs of size $N$ and a small unitary $M \times M$ architecture may be a prudent strategy for scaling up such architectures and reducing the optical depth (number of devices light has to pass) for a lower-loss and lower-systematic error device. This architecture is shown in Fig. \ref{fig:network}(f).

In signal processing applications, there are many cases where a small number of principal components are necessary to characterize some system, which can benefit from the architecture of Fig. \ref{fig:network}(f). This is certainly the case in mode conversion and telecommunications applications \cite{Miller2013EstablishingAutomatically, Annoni2017UnscramblingModes}. However, principal components analysis (PCA) is also a commonly employed tactic for reducing the dimensionality of a problem, first popularized in a 1991 paper by Turk and Pentland \cite{Turk1991EigenfacesRecognition}. Our scalability arguments suggest that future work in assessing the value of low-rank photonic architectures in principal components analysis could very possibly lead to a new paradigm for photonic network-based computing and analog signal processing that is significantly more robust and scalable to large numbers of input modes $N$.

As a final note, many of the error tolerance scaling properties for low-rank matrices in Figs. \ref{fig:error} and \ref{fig:network} apply to SVD architectures. Because random binary tree cascades also encode random unitary matrices, it is straightforward to assign $\mathbf{s}, \bm{\phi}$ to the appropriate beta distribution that depends on the number of inputs in the  cascade meshes corresponding to $V^\dagger, U$ respectively.

\section{Discussion and Conclusion}

A major theme of this work is that wide, balanced architectures are more robust than deep, unbalanced architectures by a factor of roughly $N / \log N$. This should be evident from the fact that light must pass through more components in deeper architectures, resulting in a larger sensitivity. This general trend can be explained by the theory of error tolerance provided in Sec. \ref{sec:error} which explores the effects of both uncorrelated and correlated errors on the overall photonic device performance. Our analysis in Sec. \ref{sec:error} indicates that much of the error is concentrated in waveguides that propagate more optical power and in Sec. \ref{sec:results}, it is revealed that in a balanced tree, those are the waveguides routed closer to the root of the tree. Defining shorter (i.e., less phase sensitive) and more robust components near the root of the tree is therefore paramount in scaling photonic technologies.

Equipped with this background, it is now important to discuss some specific applications where our theory may be considered, such as analog computing and sensing when the rank (number of supported optical modes) is low, i.e. $M \ll N$. The focus in this paper has been in matrix rank (Fig. \ref{fig:network}) and error tolerance (Figs. \ref{fig:error} and \ref{fig:network}), which are considered along with other application-specific criteria such as footprint, power consumption, loss, and speed.

In transceiver applications such as sensing, LIDAR, and telecommunications \cite{Bogaerts2020ProgrammableCircuits, Annoni2017UnscramblingModes}, the main factor is the speed of phase modulators as well as the scalability in the mesh needed for optical phased array communications. While speed can be a crucial factor in wireless data transfer nodes, high resolution optical phased arrays require large numbers of emitters (say, a million outputs or a $1024 \times 1024$ array), and our error analysis in this paper is highly relevant to reaching devices of such scale in photonics. In particular, as proven throughout this paper, our analysis can help elucidate the optical sensitivities of various parts of a optical phased array circuit. Such circuits are typically a balanced tree design (which may follow an H-tree fractal) with errors (sensitivities) in various waveguide segments in the passive splitter circuit leading up to the all-important final output phase shift array. Thankfully we have shown that the overall sensitivity of a balanced photonic network scales with $\log N$, making the prospects for scaling up to a million phase-controlled emitters more feasible given $\log N = 20$ layers. 

Separately, we have mentioned in the Introduction how photonic computing (e.g., matrix-vector multiplication) tasks can suffer from low error tolerance at sufficiently high circuit sizes, e.g. for machine learning \cite{Shen2017DeepCircuits, Pai2022ExperimentallyNetworks} and cryptography or blockchain \cite{Pai2022ExperimentalCryptocurrency}. Computing tasks are evaluated in terms of OPS (operations per second), which are limited by detection limits like integration time for sufficiently low signal-to-noise ratio and modulation switching speed limits for setting up inputs into the feedforward mesh. It is predicted that we might be able to achieve petaops level efficiencies using photonic mesh circuits of sufficient circuit size $N > 64$ \cite{Pai2022ExperimentalCryptocurrency, Nahmias2020PhotonicNetworks}. 

An alternative approach for photonic computing inspired by our theory is to embrace the ``wide'' over ``deep'' architectures. If we engineer an architecture to minimize the number of layers light has to propagate through, we can reduce both the overall photonic loss as well as the overall accumulated error. This concept is introduced as part of our ``splay architecture'' framework in the Appendix, which has a larger footprint but is significantly more error tolerant and possibly a less lossy scheme. This same theme is obeyed by Hadamard and FFT-like photonic mesh networks, also discussed in the Appendix, are nested binary trees that implement a subset of unitary space but use nonlocal connections to efficiently couple large numbers of modes ($N$) with fewer degrees of freedom ($N \log N$ vs $N^2$). Such efficient unitary representations have already been suggested for machine learning applications and also implement FFT and permutation operations, both of which have wide-ranging applications.

Another key advantage of wide, balanced architectures is speed. As discussed in the Appendix, fast calibration (self-configuration) is also important for any applications (including computing and possibly sensing) where interrupting device execution to re-calibrate the system is important. Self-configuration for instance is much faster in balanced trees and balanced tree cascades versus unbalanced trees and unbalanced tree cascades. In conjunction with higher tolerance to error, this means balanced trees need to be calibrated less often, and when they do need to be calibrated, the calibration is likely faster as well.

In conclusion, we find that low-rank ``wide'' photonic computing offers high error tolerance and bandwidth as well as more efficient device operation and calibration which can serve a large range of signal processing applications. Our introduction of binary tree cascades and splay networks (and analysis using our theoretical framework of binary tree architectures) is a key step to realizing higher-scale photonic circuits.

\section*{Acknowledgements}

We would like to thank Nathnael Abebe, Annie Kroo, Dirk Englund, Ryan Hamerly, Srikrishna Vadlamani, and Saumil Bandyopadhyay for helpful discussions. Additionally, we would also like to acknowledge funding from Air Force Office of Scientific Research (AFOSR) grants FA9550-17-1-0002 in collaboration with UT Austin and FA9550-18-1-0186 through which we share a close collaboration with UC Davis under Dr. Ben Yoo.

\section*{Data and software}
All software for running the simulations and Hessian calculations are available via Simphox \cite{Pai2022Simphox:Library}.

\appendix

\section{Self-configuration of a node}
As mentioned previously, a node guides light from its left ports, which we assume to be of the form $\bm{x} = (x_1, x_2)$ into either of its right ports $\bm{y} = (y_1, y_2)$. In other words, if $\bm{y} = T(s, \phi) \bm{x}$, we should be able to find $s, \phi$ such that $y_1 = 0$.

We first minimize $y_1$ with respect to $\phi$:
\begin{equation}\label{eqn:phisc}
\begin{aligned}
y_1 &= e^{i \phi} \sqrt{1 - s} x_{1} +  \sqrt{s} x_{2}\\
|y_1|^2 &= (1 - s) |x_1|^2 + s |x_2|^2 \\& \hspace{-0.1cm} + 2 \sqrt{s (1 - s)}|x_1||x_2| \mathrm{Re}(e^{i\mathrm{arg}(x_1)}e^{-i\mathrm{arg}(x_2)}e^{i\phi'})\\
\phi &:= \displaystyle{\min_{\phi' \in [0, 2\pi)} |y_1|^2} = -\arg\left(\frac{x_1}{x_2}\right),
\end{aligned}
\end{equation}
where $\arg(\cdot)$ refers to measuring the angle or phase of the quantity.

We then minimize with respect to $s$:
\begin{equation}\label{eqn:ssc}
\begin{aligned}
|y_1|^2 &= \left(\sqrt{1 - s} |x_1| - \sqrt{s} |x_2|\right)^2 \stackrel{?}{=} 0\\
s &= \frac{|x_1|^2}{|x_1|^2 + |x_2|^2} = \cos^2 \frac{\theta}{2} \\
\theta &= 2\arccos{\sqrt{s}}.
\end{aligned}
\end{equation}

In many practical cases, such as machine learning inference \cite{Shen2017DeepCircuits}, a lookup table or calibration curve generated by a phase calibration can be sufficient and quite stable over long periods of time, \cite{Taballione20188x8Waveguides} and these generally require more explicit error correction \cite{Bandyopadhyay2021HardwarePhotonics}. However, self-configuration is particularly useful for error correction and cases where the network needs to change often in response to external environmental cues, e.g. sensing and dynamic training of the network which requires no additional calculation beyond the input signal. Crucially, this definition differs from some previous work in this area \cite{Hamerly2021AccurateInterferometers, Pai2020ParallelNetwork}, which obey a different definition of configuration where additional calculations need to be done off-chip. This aspect of ``no additional calculation'' is an important requirement of self-configuration as defined here, which renders rectangular architectures \cite{Clements2016AnInterferometers} non-self-configurable despite their low device depth that confers clear advantages for machine learning inference \cite{Shen2017DeepCircuits} and quantum computing \cite{Arrazola2021QuantumChip} applications. 

\section{Error modeling}

The self-configuring networks in this paper automatically implement error-corrected values for $s, \theta$ because self-configuration is an inherently model-free programming approach \cite{Miller2013Self-configuringInvited, Hamerly2021InfinitelyInterferometers}. The model-free error correction approach of self-configuring networks can also be extended to any feedforward network \cite{Pai2020ParallelNetwork} via a procedure called ``parallel nullification'' or in this case ``parallel error correction'' without any need for calibration. This is a particularly important consideration for rectangular networks which are not self-configurable but offer advantages in loss variation balancing and optical depth.

In any case, error-corrected values are still needed to simulate the optimal performance after hardware correction, and thus we still consider the explicit phase values required to program error-corrected hardware. Assuming that the directional coupler or MMI is symmetric and the error in gap and/or waveguide width is also correspondingly symmetric, the error correction can be simplified from full coupled waveguide theory \cite{Huang1994Coupled-modeOverview} by a modified expression for $s$ as follows (following a similar calculation as Ref. \citenum{Bandyopadhyay2021HardwarePhotonics}, but not separating the $\delta$ error terms out):
\begin{equation}
    \begin{aligned}
        T_\mathrm{MZI}(\theta, \phi) &= \begin{bmatrix}
      C_r & iS_r \\
      iS_r & C_r
\end{bmatrix} \begin{bmatrix}
      e^{i \theta} & 0 \\
      0 & 1
\end{bmatrix} \begin{bmatrix}
      C_\ell & iS_\ell \\
      iS_\ell & C_\ell
\end{bmatrix} \begin{bmatrix}
      e^{i \phi} & 0 \\
      0 & 1
\end{bmatrix} \\
        s &= C_\ell^2S_r^2 + C_r^2S_\ell^2 + 2 C_\ell S_\ell C_rS_r \cos \theta \\
        \theta(s) &= \arccos\left(\frac{s - C_\ell^2S_r^2 - C_r^2S_\ell^2}{2 C_\ell S_\ell C_rS_r}\right)\\
        \phi &= -\arg\left(-\frac{x_1 \cdot (S_rS_\ell + C_rC_\ell e^{i \theta})}{x_2\cdot i(C_rS_\ell + S_rC_\ell e^{i \theta})}\right),
    \end{aligned}
\end{equation}
where $\ell, r$ refer to left and right beamsplitters, and $C_j = \cos\left(\frac{\pi}{4} + \delta_j\right), S_j = \sin\left(\frac{\pi}{4} + \delta_j\right)$ are the matrix elements of the beamsplitter. Note $\delta_\ell, \delta_r$ are the left and right phase-parametrized beamsplitter errors as $\alpha, \beta$ is defined in Ref. \citenum{Bandyopadhyay2021HardwarePhotonics}. With some trigonometric identities, it can be shown that:

\begin{equation}
    \sin^2(\delta_\ell -\delta_r)  \leq s \leq \cos^2(\delta_\ell +\delta_r)
\end{equation}

Note that now, in the case of an imperfect splitter, perfect (or near-perfect) self-configuration requires a true two-parameter optimization over $(\theta, \phi)$, owing to the ``tearing'' transformation of the Bloch or Riemann sphere given hardware error \cite{Miller2015PerfectComponents, Bandyopadhyay2021HardwarePhotonics, Hamerly2021InfinitelyInterferometers}. Self-configuration also considers a broader set of implementations of nodes including MZI nodes with asymmetric broadband splitters \cite{Cabanillas2019ExperimentalCouplers, Lu2015ComparisonPlatforms} that do not have the same phase error correction scheme used in Ref. \citenum{Bandyopadhyay2021HardwarePhotonics}, but for simplicity in modelling, we will only consider symmetric directional coupler-based splitters.

In the case of ``correlated errors'' where the two couplers have identical split error, we enforce $\delta = \delta_\ell = \delta_r$, $C = C_\ell = C_r$ and $S = S_\ell = S_r$:
\begin{equation} \label{eqn:nodeerror}
    \begin{aligned}
        s &= 2C^2S^2(1 + \cos \theta) = 4C^2S^2 \cos^2 \frac{\theta}{2} \\
        \theta(s) &= 2\arccos \frac{\sqrt{s}}{2CS} := 2\arccos \sqrt{\hat s}\\
        \phi &= -\arg\left(-\frac{x_1 \cdot (S^2 + C^2 e^{i \theta})}{x_2\cdot i(CS + SC e^{i \theta})}\right)
    \end{aligned}
\end{equation}
which reduces to the form of Eq. \ref{eqn:selfconfiguringproof} only when $C^2 = S^2 = 0.5$ (perfect 50/50 splitting).

As is evident from Eq. \ref{eqn:nodeerror}, splitting errors to the individual (passive) splitters of MZI nodes, while likely correlated in fabrication processes, can result in an upper limit $s \leq s_{\mathrm{max}} = 4C^2S^2 < 1$ \cite{Pai2019MatrixDevices}. This means that when $s > s_{\mathrm{max}}$ (the ``forbidden region''), the error-corrected MZI is programmed to its limit $s = s_{\mathrm{max}}, \theta = 0$ \cite{Bandyopadhyay2021HardwarePhotonics}. Since a large proportion of useful matrices require achieving cross state, an alternate MZI+Crossing architecture (adding a crossing element at the input) has been proposed such that $s \geq s_{\mathrm{min}} = 1 - 4C^2S^2$ \cite{Hamerly2021InfinitelyInterferometers}. In this case, when we desire $s < s_{\mathrm{min}}$, then $\theta$ is programmed such that $s = s_{\mathrm{min}}, \theta = 0$. This addition is mainly helpful for locally interacting meshes with MZI nodes such as triangular and rectangular meshes due to theoretical considerations which we discuss in the context of nonlocally interacting meshes in the main text.

Another type of error is dispersion error, which is useful to consider when multiple wavelengths are sent into the photonic network to perform some computation in parallel or in sensing applications that require larger bandwidth. We model the dispersion based on a given wavelength $\lambda \neq \lambda_c$, for a center wavelength, e.g. $\lambda_c = 1.55$ $\mu$m. When making a dispersion model, we find an expression for $\delta(\lambda)$, where $\delta$ is the beat phase error in the MMI or directional coupler parametrizing the splitting amplitudes $C^2, S^2$ defined as above. Similarly, there is an expression for the phase error $\theta(\lambda) = \theta(\lambda_c) + \delta_\theta(\lambda)$. All of these calculations can be done using a mode solver, which gives both $\delta_\theta(\lambda)$ and $\delta(\lambda)$, substituted into Eq. \ref{eqn:mzinode}.

\section{Correlated errors in binary trees}

In this section, we prove off-diagonal Hessian terms for binary tree structures, as claimed in Eq. \ref{eqn:hessianbt}. Unlike before, where we just had to consider a single phase shifter's sensitivity, we are interested here in how two independent phase shifters affect each other.

The key point to realize is that phase shifters that affect each other will correspond to Hessian nonzero terms equal to the power in the descendant phase shifter (belonging to nodes deeper in the recursion of Eq. \ref{eqn:binarytree}  closer to the leaves of the tree). To get the final Hessian matrix element, the relative power in the descendant phase shifter is multiplied by a factor of 1 or 2 depending on whether it is a $\theta$ or $\phi$ single-mode phase shifter. We will now prove this statement by explicitly evaluating the second-order derivatives of the error function with respect to the descendant phase shift.

First, to simplify our problem, we implement the ``clipping'' trick of Fig. \ref{fig:sensitivityhessian}(a) where we evaluate the error of the vector along the same column as descendant phase shifter $\eta$, which may be either in the internal arm or external arm of some MZI. As proven in Eq. \ref{eqn:powererror}, this is equal to the overall error assuming all other phase shifters are perfect. In other words, define $\epsilon^2(\delta\eta', \delta\eta)$:
\begin{equation}
    \begin{aligned}
        \epsilon^2(\delta\eta', \delta\eta) &= 2 - 2 \mathcal{R}(\bm{y}^\dagger \hat{\bm{y}}) \\
        &= 2 - 2 \mathcal{R}(\bm{y}_\eta^\dagger \hat{\bm{y}}_\eta)
    \end{aligned}
\end{equation}
where as before, we ``clip'' the architecture at the same point as the descendant phase shifter giving the vector $\bm{y}_\eta$ and only evaluate the error here as it is equivalent to the overall error. 

We also assume that the individual $\eta$ phase shifters corresponds to some number $y_\eta$ which is an element of the vector $\bm{y}_\eta$. Note that the power through phase shifter $\eta$ in the overall vector unit is simply $p_\eta = |y_\eta|^2$. The second derivatives only depend on this single $y_\eta$ term. We are now ready to evaluate the off-diagonal Hessian terms which in the main text we have claimed obey $\mathcal{H}_{\theta \to \eta} = p_\eta$ and $\mathcal{H}_{\phi \to \eta} = 2 p_\eta$.

Considering these two cases where $\eta$ is a descendant phase shifter of $\eta' = \theta_n$ or $\eta' = \phi_n$, the Hessian off-diagonal terms $\mathcal{H}_{\eta\eta'}$ in a binary tree is given direct second-derivative evaluation by:
\begin{equation} \label{eqn:hessianproof}
    \begin{aligned}
        \mathcal{H}_{\theta\eta} &= \frac{\partial^2 \epsilon^2}{\partial \eta \partial \theta_n} = 2 p_\eta \left(\frac{i}{2} e^{i \frac{\delta\theta_n}{2}} \right) \cdot i e^{i \delta \eta} + \cdots \Bigg|_{\bm{\Delta} = 0} = p_\eta \\
        \mathcal{H}_{\phi\eta} &= \frac{\partial^2 \epsilon^2}{\partial \eta\partial \phi_n} = 2 p_\eta \left(i e^{i \delta \phi_n} \right) \cdot i e^{i \delta \eta} \Bigg|_{\bm{\Delta} = 0} = 2 p_\eta,
    \end{aligned}
\end{equation}
where the vector of phase errors $\bm{\Delta}$ is defined as in Eq. \ref{eqn:hessian} and the $\cdots$ indicates terms that evaluate to zero when computing the real part at $\bm{\Delta} = 0$. To further clarify this last point, we take the derivative of $\cos \frac{\theta_n}{2} e^{i \frac{\theta_n}{2}}$ or $\sin \frac{\theta_n}{2} e^{i \frac{\theta_n}{2}}$ depending on the location of the descendant phase shifter $\eta$ in the tree. Only the $e^{i \frac{\theta_n}{2}}$ derivative in the product rule contributes because of the $i / 2$ leading term in the evaluated derivative. The other term (the $\cdots$ in Eq. \ref{eqn:hessianproof}) ends up being completely imaginary when $\delta \theta_n = 0$ after being multiplied by $i e^{i \delta \eta}$ term from taking the derivative with respect to $\eta$.

\section{Hessian evaluation and statistics}

In this section we show how the Hessian statistical distributions of Fig. \ref{fig:hessianstats} can be evaluated and how the relevant statistics can be determined. First, using the expressions of Eq. \ref{eqn:hessianbt}, we arrive at the power formulas for any vector unit are shown in Tbl. \ref{tbl:hessian} using Eq. \ref{eqn:hessianpower}, which are just scalar magnitudes of 0, 1, 0.5, and 2 multiplied by beta distributed powers $p_n$ and $p_n s_n$. Thus, using this table we can verify the simulated distributions of Hessian magnitudes shown in Fig. \ref{fig:hessianstats}. The resulting distribution values are from 0 to 2 along the $y$-axis, with the maximum value given by the scalar factor multiplied by the power.

\begin{table}[h]
\begin{tabular}{l|l|l|l}
$n' \to n$ & same node $n = n'$ & top tree $n$ & bottom tree $n$ \\ \hline
$\theta \to \theta$                                                          & $p_n$     & $p_n / 2$                   & $p_n / 2$                    \\\hline
$\theta \to \phi$                                                            & $p_ns_n$        & $p_ns_n$               & $p_ns_n$                \\\hline
$\phi \to \theta$                                                            & $p_ns_n$        & $p_n$                   & $0$                      \\\hline
$\phi \to \phi$                                                              & $2p_ns_n$ & $2p_ns_n$                      & $0$                     
\end{tabular}
\caption{The Hessian magnitudes for any vector unit (generator configuration) based on powers entering the node ($p_n$) or exiting the top output ($p_n s_n$). The top tree corresponds to $n \in \mathcal{T}_{n'}$ and the bottom tree corresponds to $n \in \mathcal{B}_{n'}$.} \label{tbl:hessian}
\end{table}

\begin{figure*}
    \centering
    \includegraphics[width=\linewidth]{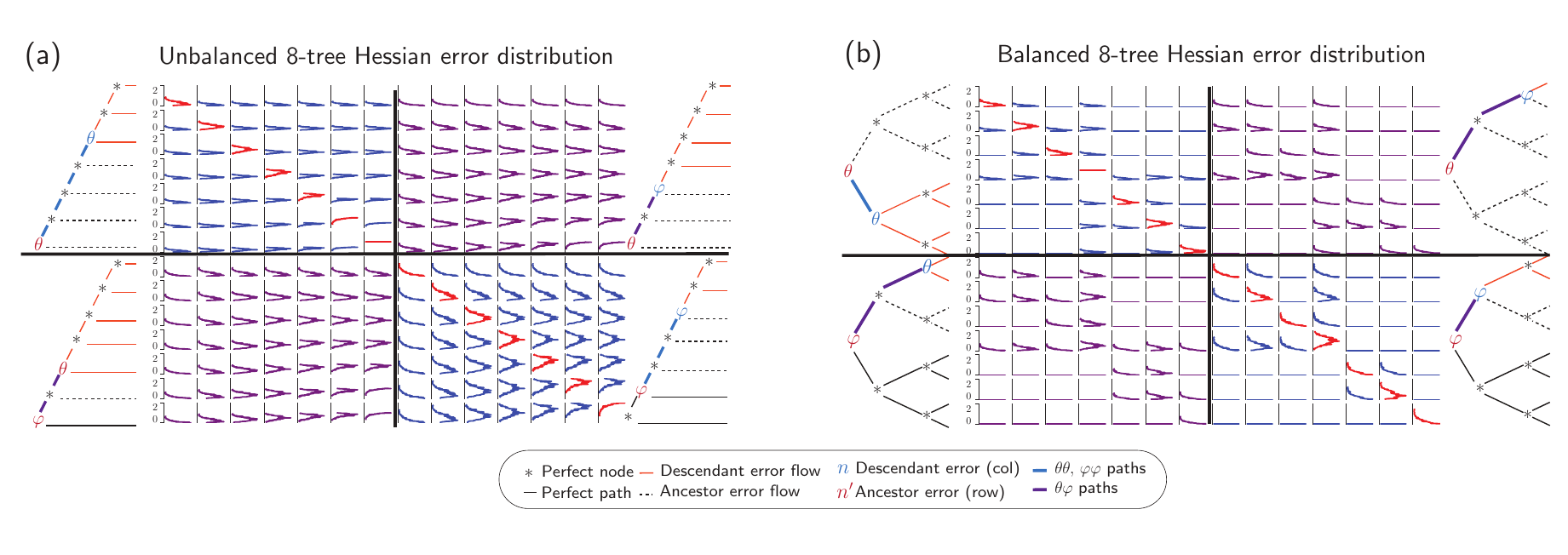}
    \caption{Beta distributed powers of Hessians for unbalanced (a) and balanced (b) trees based on Eq. \ref{eqn:hessianpower}, which map to the same matrix elements of Fig. \ref{fig:hessian}(a) and (b). Note that as labelled on the left, all $y$-axes range from 0 to 2. As in Fig. \ref{fig:hessian}, we include example diagrams of the correlation links between nodes, signifying how error in various network parameters affect each other. These diagrams are lined up with the various rows of the matrix so it is possible to reason out why certain Hessian terms are zero or why some are larger than others.}
    \label{fig:hessianstats}
\end{figure*}

\section{Comparison with other networks}

The binary tree cascade can benefit from comparisons to other types of architectures in terms of depth.

\begin{table*}[]
    \centering
    \begin{tabular}{l|c|l|l|c|c}
    \textbf{Architecture}    & \textbf{Functionality}     & \textbf{Depth} & \textbf{DoF}   & \textbf{Notation}                                & \textbf{Key feature} \\
    \hline
    Balanced tree \cite{Miller2013Self-aligningCoupler, Miller2020AnalyzingNetworks}           & $N$-vector  & $\log N$  &    $2N - 1$   & $D_N$       & Self-configurable, broadband                         \\
    Unbalanced tree \cite{Miller2013Self-aligningCoupler, Miller2020AnalyzingNetworks}            & $N$-vector  & $N$  &       $2N - 1 $    & $D_N$       & Self-configurable, narrowband                         \\
    Rectangular (Clements) \cite{Clements2016AnInterferometers}   & $N$-unitary & $N$  &    $N^2$       & $\Pi_N$     & Universal, loss-balanced, low-depth                          \\
    Triangular (Reck) \cite{Reck1994ExperimentalOperator, Miller2013Self-configuringInvited}        & Any $N$-unitary & $2N - 3$  &  $N^2$     & $\Lambda_N$ & Universal, self-configurable                         \\
    Balanced tree cascade   & $M \ll N$ basis vectors & $M (\log N - 1)$ & $NM$  & $U_{N, M}$    & Self-configurable, broadband                         \\
    Butterfly \cite{Flamini2017BenchmarkingProcessing}               & $N$-FFT, $N$-vector     & $\log N$ &   $N \log N$     & $F_N$       & Self-configurable, 1D/2D FFT, broadband \\
    Benes (double-butterfly) & $N$-FFT/$(N, 2)$-cascade   & $2 \log N$ &  $2 N \log N$  & $B_N$       & Permutation, 1D/2D conv, broadband     \\
    Cosine-sine \cite{Mottonen2004QuantumGates, Basani2022AInterferometers}  & $N$-unitary  & $N$  &    $N^2$    & $A_N$      & Universal, ($N / 2$)-SVD\\
    Splay \cite{Mottonen2004QuantumGates}  & $N$-unitary  & $3\log N$  &    $2N^2$    & $W_N$      & Any complex matrix, low-depth
    \end{tabular}
    \caption{Summary of photonic architectures, including both vector units and matrix units.}
    \label{tab:summary}
\end{table*}

\begin{figure*}
    \centering
    \includegraphics[width=\linewidth]{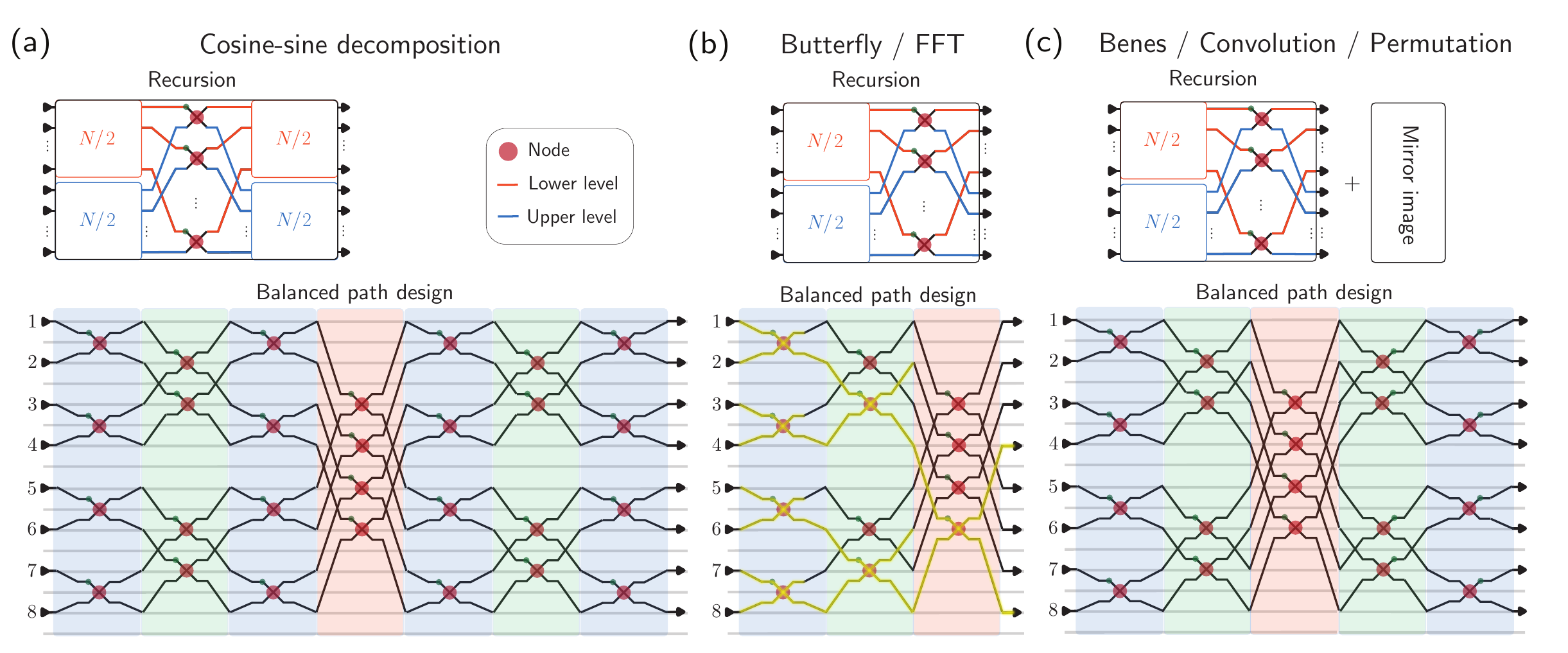}
    \caption{(a) The universal cosine-sine decomposition architecture. (b) The non-universal FFT butterfly architecture is simply a nested binary tree as seen by the recursion and thus has the same error scaling as a balanced binary tree. An example binary tree within the nested structure is highlighted in yellow. (c) The non-universal Benes architecture. Note: an additional orange (largest stride) layer may be required in case an arbitrary $N$-dimensional 1D convolution is desired.}
    \label{fig:matrixunit}
\end{figure*}

\subsection{Butterfly network}

In practical designs of ``balanced'' binary tree architectures, there are many potential issues to consider including routing nonlocal interactions and balancing optical loss across the circuit, and especially across layers. Additionally, there is a necessity to improve the compute density by implementing as many degrees of freedom in the allotted architecture as possible within a given footprint. While not universal, butterfly networks address some of these challenges \cite{Flamini2017BenchmarkingProcessing, Jing2017TunableRNNs}, and as found in Ref. \cite{Flamini2017BenchmarkingProcessing} have $\log N$ error scaling.

Balanced binary tree vector units are actually ``subgraphs'' of butterfly (or FFT) architectures shown in Fig. \ref{fig:matrixunit}(b). The butterfly architecture is an architecture of $L = \log N$ columns which interferes waveguides nonlocally at intervals of $2^\ell$ for $\ell = 1$ to $L$, which is more-or-less a nested binary tree architecture. Specificially, butterfly architectures implement $N / 2$ nested binary tree architectures in the most compact form possible, using $N / 2$ MZIs in each column (the most possible MZIs in such columns), which is indicated in the case of one such binary tree path in orange in the bottom panel of Fig. \ref{fig:matrixunit}(b). Such architectures therefore make the best use of the provided area so that each column of the optical network has the same number of nodes, and therefore the most degrees of freedom in that area. This can be defined recursively along the lines of Eq. \ref{eqn:binarytree} as follows:

\begin{equation} \label{eqn:butterfly}
    \begin{aligned}
        F_N(\mathbf{s}_N,\bm{\phi}_N) &= \Sigma_{N}(\mathbf{s}, \bm{\phi})\begin{bmatrix}
                F_{N / 2} & 0 \\
                0 & F_{N / 2}
        \end{bmatrix} \\
        \mathbf{s}_N &= [\mathbf{s}, \mathbf{s}_{N / 2}, \mathbf{s}_{N / 2}] \\
        \bm{\phi}_N &= [\bm{\phi}, \bm{\phi}_{N / 2}, \bm{\phi}_{N / 2}] \\
        \Sigma_{N}(\mathbf{s}, \bm{\phi}) &:= \prod_{n = 1}^{N / 2} T_{n, N / 2 + n}^{[N]}(s_n, \phi_n)
    \end{aligned}
\end{equation}

Comparing Eq. \ref{eqn:butterfly} to our earlier recursive definition Eq. \ref{eqn:binarytree}, the only difference in the recursive stem is the use of $T^{[N]}_{n, N / 2}$ instead of $\Sigma_N$, which is simply multiplying a column of $N / 2$ MZIs. Interestingly, this directly suggests that we actually have a nested binary tree which consists of the maximum number of root nodes given $N$, i.e. $N / 2$. This loss-balanced representation is particularly convenient for designing cascaded binary trees. Owing to the fact that butterfly networks are nested binary trees, cascaded balanced trees can be also achieved by cascading subnetworks (subgraphs) of FFT-style or butterfly photonic networks, generally with $N = 2^L$ for integer number of columns $L$. Similar architectures have been proposed and evaluated for photonic loss, robustness and other characteristics \cite{Flamini2017BenchmarkingProcessing, Fang2019DesignImprecisions}. It is reasonable to conclude that the statistical modelling (and thus the error more generally) for a butterfly network is identical to that of the binary tree above, in particular if we consider each of the individual vectors of the matrix alone.

Connecting two butterfly architectures back-to-back, for example as shown in Fig \ref{fig:matrixunit}(c) for $(N, K) = (8, 2)$, forms a ``Benes network,'' which is an architecture typically used in telecommunications capable of routing any $N \times N$ permutation. Interestingly, the Benes network can also be modified to also allow a convolution if attenuators and/or phase shifts are placed into a Benes network, since a convolution can be written in terms of a Fourier transform, elementwise multiply and inverse Fourier transform. Critically, we now have a low-depth architecture which can perform rank 2 matrix multiplication ($N \times 2$ SVD architecture using a binary tree subgraph), any permutation matrix, an FFT matrix, and convolutions. Thus, simply doubling the layers in a butterfly network opens the door to a host of new and useful computation without requiring universality.

As for photonic waveguide crossing routing, which is the biggest hurdle to realizing architectures such as the butterfly architecture that have nonlocally interacting waveguides, we suggest the use of a two-photonic-layer approach to avoid excessive high-loss crossings \cite{Chiles2017Multi-planarLoss}. Some CMOS foundries that support photonic integration might provide the option to implement an \textit{escalator}, which transfers light from a lower silicon layer to higher silicon nitride layers, with generally low loss (under 0.05 dB). If the waveguide turns up (goes to a rail assigned a lower index that the current rail) then we use an escalator to route those waveguides over any crossing waveguide that turns down and then de-escalate back down to the silicon layer for input into the next column of nodes. A key implementation detail is the need for integrated path length matching or dispersion compensation for each MZI, which may require using multiple silicon nitride layers or tunable dispersion compensation \cite{Bandyopadhyay2021HardwarePhotonics}. The alternative, using planar waveguide crossings, is likely not scalable due to the large number of required crossings in each layer (up to $N / 2$) and the non-negligible 0.1 dB loss per crossing \cite{Hamerly2021InfinitelyInterferometers}.

There are two methods for self-configuring a butterfly architecture implementing $U_N \in \mathrm{U}(N)$, where $N = 2^L$ for some integer (optical depth) $L$. One method involves parallel nullification of the vertical layers of the butterfly architecture requiring the input of just $L = \log N$ vectors \cite{Pai2020ParallelNetwork}. The other method involves tuning a unitary operator based on the first $N / 2$ columns of the matrix itself using the self-configuration approach in Eq. \ref{eqn:selfconfiguringproof}.

At each step of the algorithm, we send in photonic vectors for each column of the matrix $U$ and we perform the standard binary tree nullification routine \cite{Miller2013Self-aligningCoupler} until all light gathers at the appropriate input (indexed by the column vector index). Starting from the second vector of the nullification procedure, there will be MZIs in the light path that are already calibrated during the configuration process. However, there will always also be uncalibrated MZIs in the light path until $N / 2$ vectors of the matrix have been shined in.

\subsection{Cosine-sine decomposition matrix unit}

\begin{figure*}
    \centering
    \includegraphics[width=\textwidth]{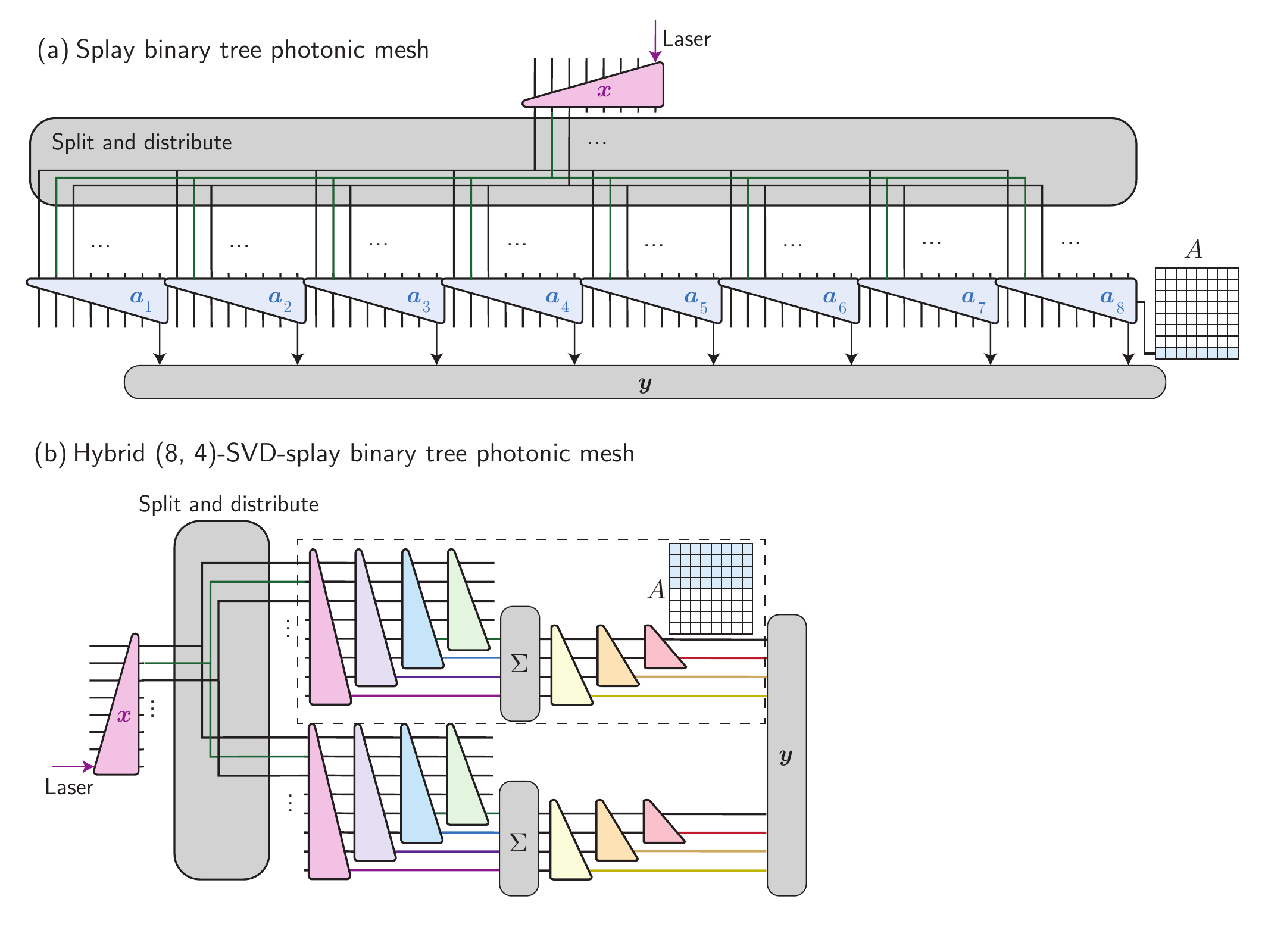}
    \caption{(a) The simplest proposal of a splay architecture for matrix-vector multiplication proceeds by directly routing the $N$-element input vector into $N$ vector units, here shown for $N = 8$. A split-and-distribute permutation layer is required for this effective ``copying'' procedure. (b) To reduce the loss by minimizing the number of splits of the input laser light, we can combine the SVD architecture at the expense of depth of the hybrid cascade and splay architectures, here shown for $M = 4, N = 8$ requiring just one additional copy of the original input (in general $N / M$).}
    \label{fig:splay}
\end{figure*}

We briefly propose a new universal matrix unit shown in Fig. \ref{fig:matrixunit}(a) based on the cosine-sine decomposition (CSD) \cite{Basani2022AInterferometers}, an architecture typically reserved for universal quantum computation \cite{Mottonen2004QuantumGates}. Like our other binary tree-inspired networks, the CSD architecture also provides nonlocal connections. The key detail is to realize that any unitary matrix (and, in fact, any matrix more generally) can be decomposed in the following form:

\begin{equation} \label{eqn:cosinesine}
    \begin{aligned}
        A_N(\bm{\theta}_N,\bm{\phi}_N) &= \begin{bmatrix}
        M_{11} & M_{12} \\
        M_{21} & M_{22}
        \end{bmatrix} = \begin{bmatrix} 
        L_0 S(\bm{\theta}) R_0 & L_0 C(\bm{\theta}) R_1 \\
        L_1 C(\bm{\theta}) R_0 & -L_1 S(\bm{\theta}) R_1
        \end{bmatrix} \\
        &= \begin{bmatrix} 
L_0 & O \\
O & L_1 
\end{bmatrix} \begin{bmatrix}
S(\bm{\theta}_M) & C(\bm{\theta}_M) \\
C(\bm{\theta}_M) & -S(\bm{\theta}_M)
\end{bmatrix} \begin{bmatrix} 
R_0 & O \\
O & R_1 
\end{bmatrix}
        \\ &= \begin{bmatrix}
                A^{[11]}_{N / 2} & 0 \\
                0 & A^{[21]}_{N / 2}
        \end{bmatrix} \Sigma_N(\mathbf{s}, \bm{\phi})\begin{bmatrix}
                A^{[12]}_{N / 2} & 0 \\
                0 & A^{[22]}_{N / 2}
        \end{bmatrix} \\
        \mathbf{s}_N &= [\mathbf{s}, \mathbf{s}^{[11]}, \mathbf{s}^{[12]}, \mathbf{s}^{[21]}, \mathbf{s}^{[22]}] \\
        \bm{\phi}_N &= [\bm{\phi}, \bm{\phi}^{[11]}, \bm{\phi}^{[12]}, \bm{\phi}^{[21]}, \bm{\phi}^{[22]}] \\
        \Sigma_N(\mathbf{s}, \bm{\phi}) &:= \prod_{n = 1}^{N / 2} T_{n, N / 2 + n}^{[N]}(s_n, \phi_n)
    \end{aligned}
\end{equation}

As suggested by Eq. \ref{eqn:cosinesine}, one method to perform CS decomposition is to partition $U$ into four non-unitary submatrices of size $\frac{N}{2} \times \frac{N}{2}$: $A_{11}, A_{21}, A_{12}, A_{22}$. First, one performs SVD on $M_{11}$ to generate $L_0 S(\bm{\theta}_N) R_0$ where $S(\bm{\theta}_N)$ is nonnegative. The only remaining submatrices to find are then $L_1, R_1$, which can be found by running QR decomposition on $A_{12}^\dagger L_0$ and $M_{21} R_0^\dagger$, which give $R_1^\dagger C(\bm{\theta}_N)$ and $L_1 C(\bm{\theta}_N)$ respectively. Again, we ensure in both cases that $C(\bm{\theta}_N)$ is nonnegative. Ultimately this nonnegativity assumption gives us our $\bm{\theta} \in \left[0, \pi\right]^{M}$ constraint.

Note that as referenced in the main text, the SVD architecture is a specific case of the CS decomposition. From the perspective of this CS decomposition, the SVD technically embeds an $N \times N$ arbitrary matrix in a $2N \times 2N$ unitary space. In the SVD architecture, we stop the recursion of Eq. \ref{eqn:cosinesine} after a single iteration. To see the equivalence with a specific case of the CS photonic mesh, note with the SVD architecture we are effectively embedding a circuit in a $2N \times 2N$ space, so we consider $A_{2N}$. Accordingly, in Eq. \ref{eqn:cosinesine}, we maintain $A^{[11]}_{N}$ and $A^{[12]}_{N}$ but prune $A^{[21]}_{N / 2}$ and $A^{[22]}_{N / 2}$ by setting them to an identity matrix $I$, which gives us a resulting matrix $A_N = A^{[11]}_{N} S(\bm{\theta}) A^{[12]}_{N} := U \Sigma V^\dagger$.

\subsection{Splay architecture}

The increased robustness of ``wide'' balanced over ``deep'' unbalanced architectures in our paper motivate a new ``splay'' architecture based on balanced tree meshes capable of low-loss and highly error-tolerant arbitrary matrix multiplication shown in Fig. \ref{fig:splay}. Instead of a cascade mesh, we might also have a $1 \times N^2$ balanced tree network which can \textit{also} achieve $O(N^2)$ OPS, where the depth and error both scale with $\log N^2 = 2 \log N$, a dramatic improvement over the $N$ depth and error scaling for triangular and rectangular architectures. This computation can be partitioned across $N$ $1 \times N$ trees to form a matrix-vector product of the same depth. A single $1 \times N$ tree is responsible for the input generation and that input is copied and fed into $N$ partitioned $1 \times N$ analyzer trees which compute vector-vector products in parallel.

There are three problems with the splay matrix-vector multiply approach: (1) large footprint, (2) output of $1 / N$ the power of a lossless cascade approach and (3) large passive split-and-permute to replicate the input vector and feed appropriately across the $N$ trees. The $1 / N$ factor comes from the fact that two random settings of the input and analyzer vector setting should result in $1 / N$ of the light to leave the root waveguide. As discussed in the Appendix, such a permutation network can leverage escalator technologies to avoid losses due to waveguide crossings.

The main consideration here is the tradeoff between the component loss $L_{\mathrm{comp}}$ affecting the circuit depth loss $L_{\mathrm{depth}} = D L_{\mathrm{comp}}$ (where $D$ is the depth generally of order $N$ or $M \log N$ for unbalanced and balanced respectively) and the dropout loss affected by the $1 / N$ factor (on average) $L_{\mathrm{drop}} = N / M$. The loss problem is actually not as bad once we consider the loss of individual components in the photonic network. For instance, a $64 \times 64$ matrix multiply can be performed in $D = 18$ photonic layers (including the splitter network and permutations) which actually saves $46$ device layers, reducing both loss and error significantly. The loss reduction happens in devices that incur a loss of $0.5$ to $1$ dB per node possibly 1 to 2 orders of magnitude and balancing the $L_{\mathrm{drop}}$ average loss factor. For this loss-limited case, it may actually make sense to settle for a compromise between a cascade (deep) and splay (wide) architecture as in Fig. \ref{fig:splay}(b). Both the loss and error tolerance are related by the rank $M$. If the error tolerance is allowed to be reduced by a factor of $M = 10$, for instance, the loss due to drop ports can also be reduced by an order of magnitude using $M = 10$ binary tree cascade.

\section{Tree coordinates introduction} \label{sec:intro}
In this mathematically-oriented supplement, we present a Bayesian framework for discrete linear transformations and number systems based on the Dirichlet distribution. In the main paper, as a concrete practical application, we showed that arbitrary linear transformations can be systematically programmed on linear optical devices and error scaling relations may be derived based on a \textit{tree coordinate system}, as first suggested in Ref. \cite{Miller2013Self-aligningCoupler}. 

However, our framework can be framed in an even more general context. Using our coordinate system, we develop a statistical unitary matrix model based on gamma and Dirichlet distributions that generalizes the seminal result of Hurwitz \cite{Hurwitz1897UberIntegration} in 1897 and its translation to physical linear optical computing platforms by Reck and Miller \cite{Miller2013Self-configuringInvited, Reck1994ExperimentalOperator} over a century later. We apply our framework to fault-tolerant, high-bandwidth linear optical architectures, which have been shown to have applications in linear optical quantum computing \cite{Carolan2015UniversalOptics}, energy efficient machine learning \cite{Shen2017DeepCircuits} and communications \cite{Annoni2017UnscramblingModes}.

Our decision tree graphical framework is simultaneously of use to multidimensional coordinate systems, parametrizations of unitary matrices, and design of fault-tolerant linear optical devices. The outline for this supplement proceeds as follows:
\begin{enumerate}
    \item In Section \ref{sec:treecoordinatestats}, we explain why decision tree models obey Gamma and Dirichlet distribution statistics.
    \item In Section \ref{sec:rotationstats}, we show how any $M$-dimensional rotation operator can be modelled by a decision tree with $M$ leaves.
    \item In Section \ref{sec:unitarystats}, we show that any $N \times N$ unitary matrix $U$ can be modelled by decision trees containing $1, 2, \ldots N$ leaves. If the decision tree model parameters for each graph obey the appropriate Dirichlet statistics, we arrive at the Haar measure of the unitary group.
\end{enumerate}

\section{Tree coordinate statistics} \label{sec:treecoordinatestats}

The tree coordinate system uses decision tree statistics to model multidimensional vectors. Decision tree statistics are generally used to model resource allocation strategies \cite{Dennis1996AProblems}.

For this section, we will consider this resource to be a string of length $Y_M$ (taking up the interval $[0, Y_M]$ on the number line). The ultimate goal is to find optimal strategies to split the string into $M$ pieces (which ultimately informs linear optical network designs discuss in the main text). The length of the cut strings represents how the total resource is allocated ($\boldsymbol{y} \in \mathbb{R}_{\geq 0}^M$), while the length of overall string represents the total resource available ($Y_M = \boldsymbol{1} \cdot \boldsymbol{y} \in \mathbb{R}_{\geq 0}$). 

\subsection{Tree coordinates}
At its core level, the tree coordinate system is generally a model for random complex vectors. Given a complex vector $\boldsymbol{v}_N \in \mathbb{C}^N$, the formula for each element is $v_n = a_n + i b_n$, where $a_n$ is the real part and $b_n$ is the imaginary part. In discrete linear optics, each vector element can be represented by measurable quantities: the power $y_n = |v_n|^2$ (denoted as the vector $\boldsymbol{y} \in \mathbb{R}_{\geq 0}^N$) and the relative phase $\varphi_n = \angle(v_n)$ (denoted as the vector $\boldsymbol{\varphi} \in [0, 2\pi)^N$) of a propagating mode in the $n$th single-mode waveguide. This phasor representation leads to a more intuitive representation of the statistics of coordinates in $N$-dimensional Euclidean space.

Reconfigurable beamsplitter trees \cite{Miller2013Self-aligningCoupler, Harris2017QuantumProcessor} can be fabricated on a photonic platform to guide light arbitrarily from a single waveguide to $N$ waveguides. In this paper, we represent the single input to the root node with power $Y_N := \boldsymbol{1} \cdot \boldsymbol{y} = \|\boldsymbol{v}_N\|^2$ (assuming a lossless optical system) using the $N$th standard basis vector, i.e. $\boldsymbol{v}_{\mathrm{in}} = \sqrt{Y_N} \boldsymbol{e}_{N}$. The operator implemented by the device (represented by tree graph $\mathcal{G}_N$) is capable of generating any $\boldsymbol{v}_N$ using the arbitrary unitary operator $R_{\mathcal{G}_N}(\boldsymbol{u}_N)$, where $\boldsymbol{u}_N = \boldsymbol{v}_N / \sqrt{Y_N}$. The operator $R_{\mathcal{G}_N}(\boldsymbol{u}_N)$ can be thought of as a ``complex rotation" computed entirely in the analog domain that is independent of the total power (or squared vector norm) $Y_N$. The device implements $ \boldsymbol{v}_N = R^\dagger_{\mathcal{G}_N}(\boldsymbol{u}_N)\sqrt{Y_N} \boldsymbol{e}_N$ as light propagates from the input port to the output ports. 

\subsection{Gamma and Dirichlet distributions}

Assume the string length $Y_K \sim \mathrm{Gam}(A)$ is a gamma-distributed random variable. If we make $K - 1$ simultaneous cuts in the string, we obtain a set of string lengths $\boldsymbol{y} \in \mathbb{R}_{\geq 0}^K$, a vector of $K$ positive real numbers. By virtue of the additive property of gamma-distributed variables, we require $\boldsymbol{y} \sim \mathrm{Gam}(\boldsymbol{\alpha})$ (iid $y_k \sim \mathrm{Gam}(\alpha_k)$), where $\mathbf{1} \cdot \boldsymbol{\alpha} = A$. Intuitively, the $\alpha_k$ define how long each string piece is on average, so $\boldsymbol{\alpha}$ are constants that represent the cut strategy. This concept can also be thought of in reverse; given cut strings of lengths $\boldsymbol{y} \sim \mathrm{Gam}(\boldsymbol{\alpha})$, we can glue the strings end-to-end to achieve a master string of length $Y_K \sim \mathrm{Gam}(A)$. In optical systems, we can consider this master string to be analogous to total power, and the cuts to be the allocation of that power to different optical paths.

Now define $\boldsymbol{x} = \boldsymbol{y} / Y_K$, so that each element $x_k$ represents the fractional string length of the $k$th piece. Then $\boldsymbol{x}$ follows a Dirichlet distribution parametrized by $\boldsymbol{\alpha}$, i.e. $\boldsymbol{x} \sim \mathrm{Dir}(\boldsymbol{\alpha})$. The proof of this relationship is a standard result in statistics provided explicitly in Appendix \ref{sec:gammatodirichlet} for convenience.

Given $\boldsymbol{\alpha}$, we define the probability distribution functions for $\boldsymbol{x}, \boldsymbol{y}$ to be
\begin{equation} \label{eqn:gammadirichletdist}
    \begin{aligned}
    \mathcal{P}_\Gamma(\boldsymbol{y}; \boldsymbol{\alpha}) &:= \prod_{k = 1}^K \frac{y_k^{\alpha_k - 1}e^{-y_k}}{\Gamma(\alpha_k)}\\
    \mathcal{P}_\mathrm{D}(\boldsymbol{x}; \boldsymbol{\alpha}) &:= \prod_{k = 1}^K \frac{x_k^{\alpha_k - 1}}{\mathrm{D}(\boldsymbol{\alpha})}\\
    \mathrm{D}(\boldsymbol{\alpha}) &:= \frac{ \Gamma(A)}{\prod_{k = 1}^K \Gamma(\alpha_k)},
    \end{aligned}
\end{equation}
where for integer values of $\alpha$, $\Gamma(\alpha) = (\alpha - 1)!$, and in general, $\Gamma(\alpha) = \int_0^\infty y^{\alpha - 1}e^{-y} dy$, the normalization constant for the Gamma distribution.

\begin{figure}
    \centering
    \includegraphics[width=0.48\textwidth]{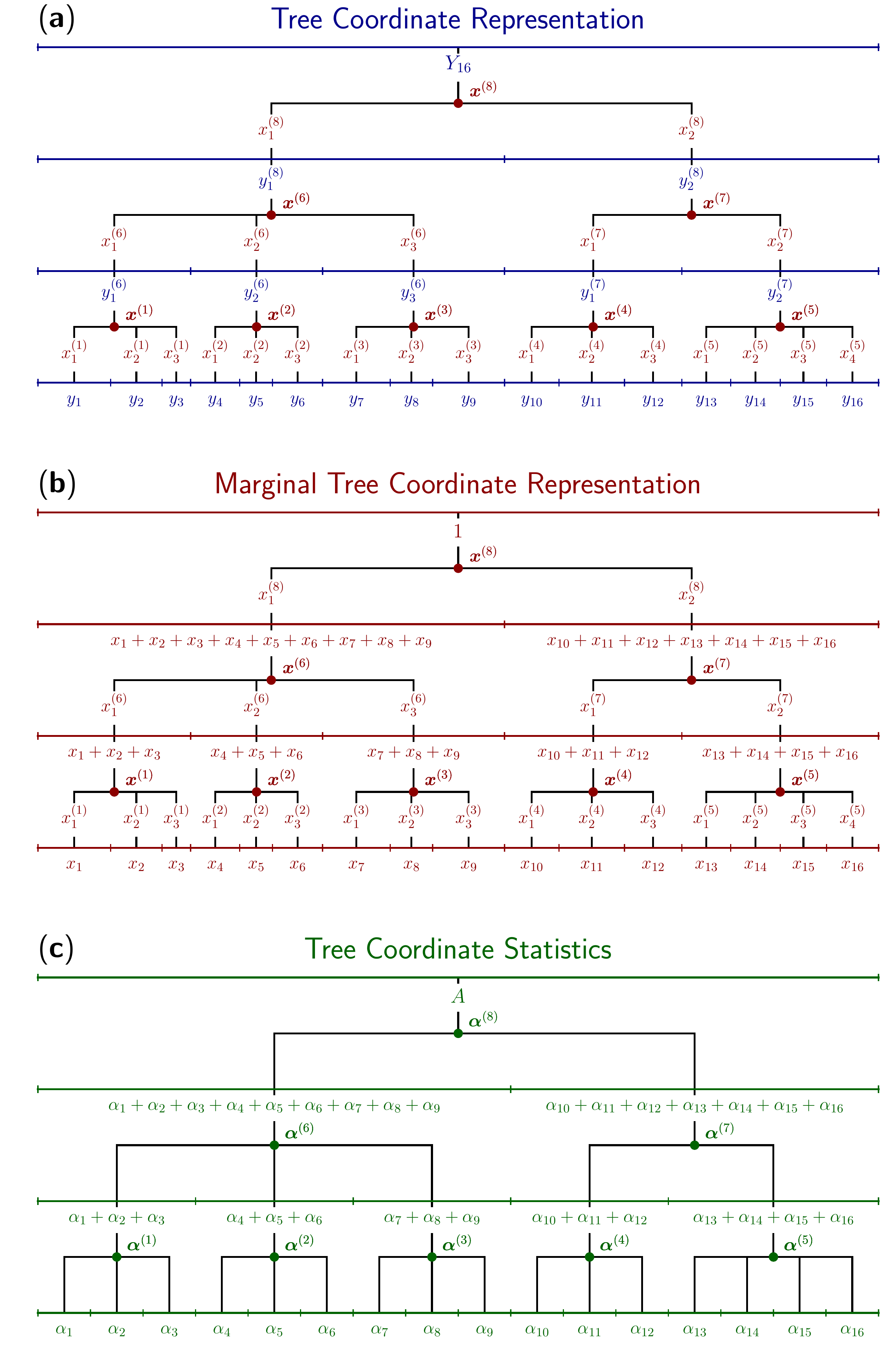}
    \caption{We plot various representations of tree coordinate system in terms of string cutting for a given $\mathcal{G}$ and $M = 16$ (ignoring $\boldsymbol{\varphi}$). (a) The tree coordinates $\boldsymbol{x}^{(j)}, Y_{16}$ for Gamma basis $\boldsymbol{y} \sim \mathrm{Gam}(\boldsymbol{\alpha})$ for $\boldsymbol{\alpha} = 10 \cdot \boldsymbol{1}$. (b) The tree coordinates $\boldsymbol{x}^{(j)}$ after marginalizing out $Y_{16}$ for Dirichlet basis $\boldsymbol{x} = \boldsymbol{y} / Y_{16} \sim \mathrm{Dir}(\boldsymbol{\alpha})$. (c) The propagation of the Dirichlet statistical parameters $\boldsymbol{\alpha}$ into node Dirichlet parameters $\boldsymbol{\alpha}^{(j)}$. Using these statistics, we find that, in panels (a) and (b), $\boldsymbol{y}^{(j)} \sim \mathrm{Gam}(\boldsymbol{\alpha}^{(j)})$ and $\boldsymbol{x}^{(j)} \sim \mathrm{Dir}(\boldsymbol{\alpha}^{(j)})$.}
    \label{fig:treecoordinates}
\end{figure}

\subsection{Dirichlet tree representation}

We refer to any group of $K$ cuts as a \textit{cut event}. In our string-cutting problem, we require $M - 1$ total cuts to get our $M$ pieces. As defined previously, $\boldsymbol{y} \in \mathbb{R}_{\geq 0}^M$ represents the length of the pieces and $\boldsymbol{x} = \boldsymbol{y} / Y_M$ represents the fractional length of the pieces. In general, we can have $J \leq M - 1$ cut events, where if the $j$th cut event involves $K_j$ simultaneous cuts, then $\sum_{j = 1}^J (K_j - 1) = M - 1$ must hold.

This rule matches the convenient property of a tree graph consisting of $J$ decision nodes, where each node is a cut event that maps a single input edge to $K_j$ output edges. In the extreme cases, we can cut the string once at a time ($M - 1$ cut events) or we can make all $M - 1$ cuts simultaneously ($1$ cut event). As an example, we show a tree graph of $J = 8$ nodes and $\{K_j\} = (3, 3, 3, 3, 4, 3, 2, 2)$ in Figure \ref{fig:treecoordinates}.


Any tree graph $\mathcal{G}$ can be represented as the set of connection sets $\{\mathcal{T}_1, \mathcal{T}_2, \ldots \mathcal{T}_J\}$. Each connection set $\mathcal{T}_j$ consists of $K_j$ indices $r \leq M + J$ linking node $j$ to other nodes ($r \leq J$) or leaves ($r > J$). For $\mathcal{G}$, the leaf subset $\mathcal{S}_r \subset \{1, 2, \ldots M\}$ can be calculated for each node or leaf:
\begin{equation} \label{eqn:leafspanset}
    \begin{aligned}
        \mathcal{S}_r &= \begin{dcases}
        \bigcup_{r' \in \mathcal{T}_r} \mathcal{S}_{r'} & r \leq J\\
        \{r - J\}& r > J
        \end{dcases}
    \end{aligned}
\end{equation}

We define $\boldsymbol{y}^{(j)} \in \mathbb{R}_{\geq 0}^{K_j}$ as the string lengths exiting node $j$. We invoke the property that the sum of gamma-distributed variables is also gamma distributed to find the length of the $k$th string cut by node $j$:
\begin{equation}\label{eqn:noderandomvars}
\begin{aligned}
    \alpha_k^{(j)} &:= \sum_{m \in \mathcal{S}_{\mathcal{T}_j[k]}} \alpha_m\\
    y_k^{(j)} &= \sum_{m \in \mathcal{S}_{\mathcal{T}_j[k]}} y_m \sim \mathrm{Gam}(\alpha_k^{(j)})
\end{aligned}
\end{equation}
where $\mathcal{T}_j[k]$ represents $r$ corresponding to edge $k$ in the connection set $\mathcal{T}_j$.

Given Equation \ref{eqn:noderandomvars}, we find $\boldsymbol{x}^{(j)}$, the fractional lengths of the strings cut by node $j$, satisfies the Dirichlet distribution needed for each node $j$, i.e.
\begin{equation} \label{eqn:unitdirichlet}
    \begin{aligned}
    \boldsymbol{y}^{(j)} \sim \mathrm{Gam}(\boldsymbol{\alpha}^{(j)}) &\to
    \boldsymbol{x}^{(j)} \sim \mathrm{Dir}(\boldsymbol{\alpha}^{(j)}).
    \end{aligned}
\end{equation}

Regardless of the graph structure, we have shown how the node statistical parameters $\boldsymbol{\alpha}^{(j)}$ can be defined such that any final cut strategy $\boldsymbol{\alpha}$ can be achieved.

Note that $\boldsymbol{x}$ can be written in terms of the node parameters $\boldsymbol{x}^{(j)}$ in $\mathcal{G}$. For the $m$th fractional cut length $x_m$, we define a unique \textit{path} $\mathcal{E}_{m, \mathcal{G}}$ as the set of node-edge pairs $(j, k)$ in the path from the root node of $\mathcal{G}$:
\begin{equation} \label{eqn:dirichlettotree}
    x_m = \prod\limits_{(j, k) \in \mathcal{E}_{m, \mathcal{G}}} x_k^{(j)},
\end{equation}
which we then substitute to find each complex vector element $v_m = \sqrt{Y_M x_m}e^{-i\varphi_m}$. For example, in Figure \ref{fig:treecoordinates}, we have that $x_6 = x_1^{(8)}x_2^{(6)}x_3^{(2)}$ and $|v_6| = \sqrt{y_6} = \sqrt{Y_{16}x_6} = \sqrt{Y_{16} x_1^{(8)}x_2^{(6)}x_3^{(2)}}$, which can be seen by following the path from the root node to leaf $m$. Note that the tree graphical structure ensures that there is exactly one path $\mathcal{E}_{m, \mathcal{G}}$ to leaf $m$.

\subsection{Applications}

The key analogy between string cutting and the resource allocation applications we have mentioned is that each cut to the string represents a component or branch operation in a physical resource allocation system (e.g., MZI in an interferometer tree). The lengths of the string represent quantities (e.g. light intensity, electrical current, volumetric flow rate, probability current) being allocated throughout the network.

Decision trees gives a straightforward way to think about what $\boldsymbol{x}^{(j)}, \boldsymbol{y}$ mean in terms of real applications: $\boldsymbol{y}$ represents how the resource ends up being allocated whereas $\boldsymbol{x}^{(j)}$ represents how each node $j$ has to split up the resource entering the node to achieve the final desired $\boldsymbol{y}$. We now provide a mathematical application of this idea to multidimensional rotations.

\section{Statistics of a rotation} \label{sec:rotationstats}

We have presented the statistics of string cutting and its general use in tree coordinate systems. Multidimensional rotations can also be statistically described as a resource allocation problem.

\subsection{Gamma and Dirichlet basis}

For the standard normal complex vector $\boldsymbol{v}_K \in \mathbb{C}^K$, there are two basis representations that we consider:
\begin{enumerate}
    \item the \textit{Gamma basis} $(\boldsymbol{y}, \boldsymbol{\varphi})$ where $y_k = |v_k|^2$ and $\varphi_k = \angle(v_k)$ for all $k$, i.e. $\boldsymbol{v}_K = \sqrt{\boldsymbol{y}} e^{i \boldsymbol{\varphi}}$.
    \item the \textit{Dirichlet basis} $(\boldsymbol{x}, \boldsymbol{\varphi}, Y_K)$ where $\boldsymbol{x} = \boldsymbol{y} / Y_K$ are fractional powers and $Y_K = \|v_k\|^2 = \sum_{k = 1}^K y_k$ is the radius or normalization factor.
\end{enumerate}

The Gamma and Dirichlet measures are defined as:
\begin{equation} \label{eqn:gammadirichletbasis}
    \begin{aligned}
    \mathcal{P}(\boldsymbol{v}_K) \mathrm{d}\boldsymbol{v}_K &:= \mathcal{P}_\Gamma(\boldsymbol{y}; \boldsymbol{\alpha}) \mathrm{d}\boldsymbol{y} \frac{\mathrm{d}\boldsymbol{\varphi}}{(2\pi)^K} \\&= \mathcal{P}_\mathrm{D, \Gamma}(\boldsymbol{x}, Y_K; \boldsymbol{\alpha}) \mathrm{d}\boldsymbol{x}\mathrm{d}Y_K \frac{\mathrm{d}\boldsymbol{\varphi}}{(2\pi)^K},
    \end{aligned}
\end{equation}
which follows naturally from the proof in Appendix \ref{sec:gammatodirichlet}, with support $\sum_{k=1}^K x_k = 1$. This definition, ignoring the addition of uniform-random phases $\boldsymbol{\varphi}$, follows decision tree statistics.

Consider the standard complex normal vector $\boldsymbol{v}_K \in \mathbb{C}^K$, where we require $a_k, b_k \sim \mathcal{N}(0, 0.5)$ for all $k \leq K$ where $v_k = a_k + ib_k$. Then it is straightforward to show that the Gamma basis for $\boldsymbol{v}_K$ is parametrized by $\boldsymbol{\alpha} = \mathbf{1}$.
\begin{equation}
    \begin{aligned} \label{eqn:jointgamma1}
    \mathcal{P}_{\mathcal{N}}(\boldsymbol{v}_K) \mathrm{d}\boldsymbol{v}_K &:= \prod_{k=1}^K \frac{e^{-a_k^2}e^{-b_k^2}}{\pi} \mathrm{d}a_k \mathrm{d}b_k \\
    &= \prod_{k=1}^K e^{-y_k} \mathrm{d}y_k \frac{\mathrm{d}\varphi_k}{2\pi}\\
    &= \mathcal{P}_\Gamma(\boldsymbol{y}; \mathbf{1}) \mathrm{d}\boldsymbol{y} \frac{\mathrm{d}\boldsymbol{\varphi}}{(2\pi)^K},
    \end{aligned}
\end{equation}
where we use the fact that the determinant of the Jacobian $\det \mathcal{J}_{(a_k, b_k)}^{(y_k, \varphi_k)} = 1/ 2$.

\subsection{Graphical rotation operator}

Given any $\boldsymbol{v}_M \in \mathbb{C}^M$ and $\boldsymbol{u}_M = \boldsymbol{v}_M / Y_M$, our goal is to find $R_{M, \mathcal{G}}(\boldsymbol{u}_M)$ (where $\mathcal{G}$ is a tree graph with $M$ leaves) such that:
\begin{equation} \label{eqn:rotationoperator}
    R_{M, \mathcal{G}}(\boldsymbol{u}_M)\boldsymbol{v}_M = O_{M, \mathcal{G}}(\boldsymbol{x}) D_M(-\boldsymbol{\varphi}) \boldsymbol{v}_M = \sqrt{Y_M}\boldsymbol{e}_M,
\end{equation}
where $\boldsymbol{e}_M$ is the standard Euclidean basis vector and $D_M$ represents a diagonal unitary of phases. We define the \textit{unit Dirichlet basis} as $(\boldsymbol{x}, \boldsymbol{\varphi})$, where $\boldsymbol{u}_M = \sqrt{\boldsymbol{x}}e^{i\boldsymbol{\varphi}}$. For convenience, we also define the Dirichlet basis rotation operator $R_M(\boldsymbol{u}_M) = O_M(\boldsymbol{x}) D_M(-\boldsymbol{\varphi})$.

As demonstrated in Equation \ref{eqn:rotationoperator}, a general rotation operatorin a unitary operator that can be constructed in two steps:
\begin{enumerate}
    \item \textit{Absolute value operator}: $D_M(-\boldsymbol{\varphi})$ is a diagonal unitary that removes the phases stored in $\boldsymbol{v}_M$, i.e. $D_M(-\boldsymbol{\varphi}) \boldsymbol{v}_M = |\boldsymbol{v}_M| = \sqrt{\boldsymbol{y}}$.
    \item \textit{Dirichlet tree operator}: $O_{M, \mathcal{G}}(\boldsymbol{x})$ is an orthogonal operator modelled by $\mathcal{G}$ that depends on the Dirichlet basis $\boldsymbol{x}$. Each node in the tree graph implements the \textit{Dirichlet node operator} $O_{K_j}(\boldsymbol{x}^{(j)})$, where $\boldsymbol{x}^{(j)} \sim \mathrm{Dir}(\boldsymbol{\alpha}^{(j)})$. The only requirement is $O_{K_j}(\boldsymbol{x}^{(j)}) \sqrt{\boldsymbol{x}^{(j)}} = \boldsymbol{e}_{K_j}$, and that the operator function itself is not factorizable  
\end{enumerate}
In summary, we remove phases so we are left with positive real numbers. We then explicitly construct the Dirichlet tree operator $O_{M, \mathcal{G}}(\boldsymbol{x})$ from Dirichlet node operators $O_{K_j}(\boldsymbol{x}^{(j)})$ using Lemma \ref{lm:mdirichlettree} of Appendix \ref{sec:dirichlettreeoperator}.

Note that for real rotations, we restrict $\boldsymbol{\varphi}$ to take values of only $0$ or $\pi$ (i.e., $e^{i\varphi_m} = \pm 1$). In this case, $\boldsymbol{\varphi}$ are no longer degrees of freedom stored in $R_{M, \mathcal{G}}(\boldsymbol{u}_M)$. This ultimately allows us to parametrize either real rotations or orthogonal matrices discussed further in Appendix \ref{sec:orthogonal}.

\subsection{Graphical coordinate systems}

We have found a tree coordinate system to represent any complex vector. These results are summarized in Table \ref{tbl:summary} and Figure \ref{fig:treecoordinates}.

\begin{table}[h] 
\makegapedcells
\begin{tabular}{p{0.1\textwidth}|p{0.24\textwidth}|p{0.125\textwidth}}
\textbf{Basis} & \textbf{Parameters} & \textbf{Distribution} \\ \hline
Euclidean \newline $\boldsymbol{a}, \boldsymbol{b}$  & $\boldsymbol{v}_M = \boldsymbol{a} + i \boldsymbol{b}$ & $a_m \sim \mathcal{N}\left(0, \frac{1}{2}\right)$ \newline $b_m \sim \mathcal{N}\left(0, \frac{1}{2}\right)$\\\hline
Gamma \newline $(\boldsymbol{y}, \boldsymbol{\varphi})$ & $\boldsymbol{v}_M = \sqrt{\boldsymbol{y}}e^{i\boldsymbol{\varphi}}$ & $y_m \sim \mathrm{Gam}(1)$\newline$\varphi_m \sim \mathcal{U}(0, 2\pi)$\\\hline
Dirichlet \newline $(\boldsymbol{x}, Y_M, \boldsymbol{\varphi})$ & $\boldsymbol{v}_M = \sqrt{Y_M\boldsymbol{x}}e^{i\boldsymbol{\varphi}}$\newline \newline $\boldsymbol{v}_M = \sqrt{Y_M} R^\dagger_{M}(\boldsymbol{x}, \boldsymbol{\varphi})\boldsymbol{e}_M$ & $\boldsymbol{x} \sim \mathrm{Dir}(\boldsymbol{1})$\newline $Y_M \sim \mathrm{Gam}(M)$\newline$\varphi_m \sim \mathcal{U}(0, 2\pi)$ \\\hline
Tree, $\mathcal{G}$ \newline $(\boldsymbol{x}^{(j)}, Y_M, \boldsymbol{\varphi})$ &  $\boldsymbol{v}_M = \sqrt{Y_M} R^\dagger_{M, \mathcal{G}}(\boldsymbol{x}, \boldsymbol{\varphi})\boldsymbol{e}_M$ \newline\newline  $v_m = \sqrt{Y_M\displaystyle\prod\limits_{(j, k) \in \mathcal{E}_{m, \mathcal{G}}} x_k^{(j)}} e^{i\varphi_m}$ & $\boldsymbol{x}^{(j)} \sim \mathrm{Dir}(\boldsymbol{\alpha}^{(j)})$\newline$Y_M \sim \mathrm{Gam}(M)$\newline$\varphi_m \sim \mathcal{U}(0, 2\pi)$
\end{tabular}
\caption{The tree coordinate system and corresponding distributions for an iid complex standard normal vector $\boldsymbol{v}_M$.}
\label{tbl:summary}
\end{table}

We therefore find that the Gamma basis, Dirichlet basis, and all tree bases are all equally valid ways to represent a complex vector. It is possible to reparameterize a Dirichlet basis into any tree basis (and vice versa) using Equation \ref{eqn:dirichlettotree}. Therefore, there are an exponential number of tree coordinate systems that represent a random complex vector, and each tree coordinate system obeys a different set of statistics based on the Dirichlet distribution.

A specific case of the tree coordinate system (binary tree) accounts for all possible Euler angle representations of a multidimensional rotation, which is relevant for canonical linear optical architectures since Euler angles correspond to phase shifts in linear optical devices \cite{Reck1994ExperimentalOperator}.

\section{Statistics of a unitary matrix} \label{sec:unitarystats}

The statistics of a random unitary matrix corresponds closely to the rotation statistics we have just described. This is because any unitary matrix $U_N$ of size $N$ can be constructed by multiplying general rotation operators in Hilbert spaces of size $1, 2 \ldots N$ \cite{Reck1994ExperimentalOperator}.

\subsection{Unitary construction}

Consider the graphical rotation operators $R_{1, \mathcal{G}_1}, R_{2, \mathcal{G}_2}, \ldots R_{N, \mathcal{G}_N}$ in an $N$-dimensional basis (notated as $R_{n, \mathcal{G}_n}^{[N]}$, an $n \times n$ block in the first $n$ rows of an $N \times N$ identity matrix).
\begin{equation} \label{eqn:unitary}
    U_N = \prod_{n = 1}^N R_{n, \mathcal{G}_n}^{[N]}(\widetilde{\boldsymbol{u}}_n)
\end{equation}
where the $\widetilde{\boldsymbol{u}}_n$ are recursively defined in terms of $\boldsymbol{u}_n$ (columns of $U_N$) as:
\begin{equation} \label{eqn:unitaryalg}
    \widetilde{\boldsymbol{u}}_n = \left(\prod_{n' = n + 1}^{N} R_{n', \mathcal{G}_{n'}}^{[N]}(\widetilde{\boldsymbol{u}}_{n'})\right) \cdot \boldsymbol{u}_n,
\end{equation}
where we note that $\widetilde{\boldsymbol{u}}_N = \boldsymbol{u}_N$.

\begin{figure}
    \centering
    \includegraphics[width=0.48\textwidth]{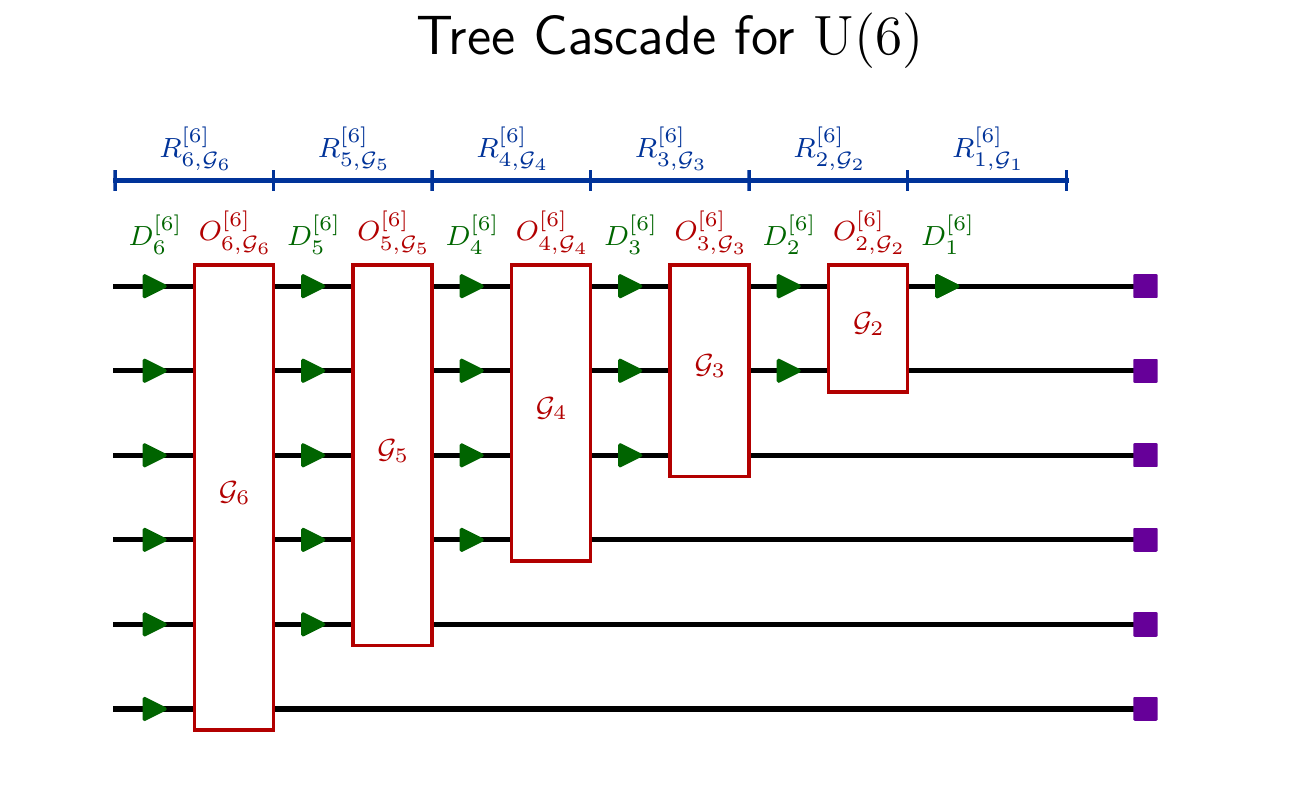}
    \caption{Graphical representation of the rotation operator cascade of Equation \ref{eqn:unitary}. Each rotation operator is represented in terms of two stages: absolute value operator (green) and Dirichlet tree operator (red). Note that $O_{1, \mathcal{G}_1}  = 1$ always.}
    \label{fig:unitary}
\end{figure}

We depict Equations \ref{eqn:unitary} in Figure \ref{fig:unitary} and \ref{eqn:unitaryalg} is more-or-less depicted in the main text. Each graph $\mathcal{G}_n$ is any unit Dirichlet tree basis model as in Fig. \ref{fig:treecoordinates}(b) and parameterizes $R^{[N]}_{n, \mathcal{G}_n}$. The graphs parametrize the overall unitary operator and corresponding statistics. In the case that each $\widetilde{\boldsymbol{u}}_n$ follows a unit Dirichlet basis parametrized by $\boldsymbol{\alpha}_n = \boldsymbol{1}$, we have a random unitary matrix which defines the Haar measure \cite{Russell2017DirectMatrices}.

\subsection{Haar measure of $\mathrm{U}(N)$}
Cascading probabilistic graphs $\mathcal{G}_n$ together forms a unitary operator $U_N \in \mathrm{U}(N)$ as shown in Equation \ref{eqn:unitary}. We can write the rotation measure for $\mathrm{d}\mu(R_{n, \mathcal{G}_n})$ by multiplying the Dirichlet PDFs $P_\mathrm{D}(\boldsymbol{x}_n^{(j)}; \boldsymbol{\alpha}_n^{(j)})$ of all nodes $j \leq J$ in $\mathcal{G}_n$ \cite{Russell2017DirectMatrices}:
\begin{equation}\label{eqn:haarrotation}
    \begin{aligned}
    \mathrm{d}\mu(R_{n, \mathcal{G}_n}) &= \frac{\mathrm{d}\boldsymbol{\varphi}_n}{(2\pi)^{n}} \prod_{j=1}^{J_n} P_\mathrm{D}(\boldsymbol{x}_n^{(j)}; \boldsymbol{\alpha}_n^{(j)}) \mathrm{d}\boldsymbol{x}_n^{(j)},
    \end{aligned}
\end{equation}
with support $x_{n, K_{nj}}^{(j)} = 1 - \sum_{k = 1}^{K_{nj}-1} x^{(j)}_{n, k}$ for each node $j$. Each of the $\boldsymbol{\alpha}_n^{(j)}$ represent the number of leaves in the graph $\mathcal{G}_n$ spanned by the $K_{nj}$ edges exiting node $j$. Intuitively, the $\boldsymbol{\alpha}_n^{(j)}$ are weights representing the total amount of resource that needs to be sent into each edge of the graph to ultimately achieve an approximately equal power distribution, i.e. $\boldsymbol{y}_n \sim \mathrm{Gam}(\boldsymbol{1})$ as required for a random normal vector $\boldsymbol{v}_n \in \mathbb{C}^n$. 

Due to the unitary construction of Equation \ref{eqn:unitaryalg}, we can define the most general parametrization of the unitary Haar measure as a product of the measures in Equation \ref{eqn:haarrotation} as is done in Ref. \cite{Russell2017DirectMatrices}:
\begin{equation}\label{eqn:haarmeasure}
    \begin{aligned}
    \mathrm{d}\mu(U_N) &= \prod_{n=1}^{N} \mathrm{d}\mu(R_{n, \mathcal{G}_n}) \\
    &= \prod_{n=1}^{N} \frac{\mathrm{d}\boldsymbol{\varphi}_n}{(2\pi)^{n}} \prod_{j=1}^{J_n} P_\mathrm{D}(\boldsymbol{x}_n^{(j)}; \boldsymbol{\alpha}_n^{(j)})  \mathrm{d}\boldsymbol{x}_n^{(j)}
    \end{aligned}
\end{equation}

\subsection{Spherical coordinate parametrizations}
While we have determined the general parametrization of a Haar measure in terms of Dirichlet tree probabilistic graphs, it is also important to consider the implications of our model for  spherical coordinate systems that parametrize physically realizable linear optical devices.

We first note that general spherical coordinates are a specific case of Dirichlet tree where $J_n = n - 1$ and $K_{nj} = 2$ for all $n, j$. Since $K_{nj} = 2$, we can represent each node $j \leq J_n$ by a single transmissivity parameter $t_{nj}$ (where $\boldsymbol{x}_n^{(j)} = (t_{nj}, r_{nj})$ and $r_{nj} = 1 - t_{nj}$ is the reflectivity). 

Define a vector of angles $\boldsymbol{\theta}_n \in [0, \pi]^{n - 1}$ where $\boldsymbol{\theta}_n = 2 \arccos{\sqrt{\boldsymbol{t}_n}}$. Based on this definition, we find that each node can be represented by:
\begin{equation} \label{eqn:mzi}
    \begin{aligned}
    O_2(\theta) &= \begin{bmatrix}\sin \frac{\theta}{2} & \cos \frac{\theta}{2} \\ \cos \frac{\theta}{2} & -\sin \frac{\theta}{2} \\
    \end{bmatrix}\\
    O_2(t) &= \begin{bmatrix}\sqrt{1 - t} & \sqrt{t} \\ \sqrt{t} & -\sqrt{1 - t} \\
    \end{bmatrix},
    \end{aligned}
\end{equation}
where $\theta \in [0, \pi]$, $t = \cos^2(\theta / 2) \in [0, 1]$.

This leads to a Haar measure of the unitary group modelled by the binary tree sequence $\{\mathcal{G}_1, \mathcal{G}_2, \ldots \mathcal{G}_N\}$. For ease of notation, we let $\boldsymbol{\alpha}^{(j)}_n = (\alpha_{nj}, \beta_{nj})$ parametrize the beta (Dirichlet for $K_j = 2$) distributions for each node.
\begin{equation} \label{eqn:binaryhaarmeasure}
    \begin{aligned}
    \mathrm{d}\mu(U_N) &= \prod_{n=1}^{N} \mathrm{d}\mu(R_{n, \mathcal{G}_n}) \\
    &= \prod_{n=1}^{N} \frac{\mathrm{d}\boldsymbol{\varphi}_n}{(2\pi)^n} \prod_{j=1}^{n - 1} \mathcal{P}_{\mathrm{B}}(t_{nj}; \alpha_{nj}, \beta_{nj}) \mathrm{d}t_{nj}\\
    &= \prod_{n=1}^{N} \frac{\mathrm{d}\boldsymbol{\varphi}_n}{(2\pi)^n} \prod_{j=1}^{n - 1} \mathcal{P}_{\mathrm{B}, \theta}(\theta_{nj}; \alpha_{nj}, \beta_{nj}) \mathrm{d}\theta_{nj}\\
    &:= \prod_{n=1}^{N} \frac{\mathrm{d}\boldsymbol{\varphi}_n}{(2\pi)^n} \prod_{j=1}^{n - 1} \mathrm{d}\xi_{nj},
    \end{aligned}
\end{equation}
where
\begin{align*}
    \mathcal{P}_\mathrm{B}(t; \alpha, \beta) &= \frac{t^{\alpha - 1} (1 - t)^{\beta - 1}}{\mathrm{B}(\alpha, \beta)}\\
    \mathcal{P}_{\mathrm{B}, \theta}(\theta; \alpha, \beta) &= \frac{\left(\cos \frac{\theta}{2}\right)^{2\alpha - 1} \left(\sin \frac{\theta}{2}\right)^{2\beta - 1}}{\pi \mathrm{B}(\alpha, \beta)} \\
    \xi_{nj} &= I_{t_{nj}}(\alpha_{nj}, \beta_{nj}).
\end{align*}

In Equation \ref{eqn:binaryhaarmeasure}, we introduce the Haar phase $\xi_{nj} \in [0, 1]$, which when uniformly distributed yields the Haar measure. The Haar phase is related to the transmissivity $t_{nj}$ by the incomplete regularized beta function $I_{t_{nj}}$, the CDF of the beta distribution. Note that for canonical spherical coordinates \cite{Russell2017DirectMatrices}, which has the most unbalanced beta-distributed transmissivities (all $\alpha_{nj} = j, \beta_{nj} = 1$), the Haar phase has the simple expression
\begin{equation}\label{eqn:unbalancedhaarphase}
    \begin{aligned}
    \xi_{nj} &= I_{t_{nj}}(j, 1) = t_{nj}^{j},
    \end{aligned}
\end{equation}
which is the special case of our framework for the Haar phase of triangular and rectangular architectures proven in Ref. \cite{Pai2019MatrixDevices}.


\section{Tree coordinates appendix}

\subsection{Gamma basis to Dirichlet basis proof} \label{sec:gammatodirichlet}

\begin{lemma}
For a Gamma basis $\boldsymbol{y}$ parametrized by $\boldsymbol{\alpha}$, the corresponding Dirichlet basis has Dirichlet parameters $\boldsymbol{x} := (x_1, x_2, \ldots x_{K - 1}) \sim \mathrm{Dir}(\boldsymbol{\alpha})$ and total magnitude $Y_K \sim \mathrm{Gam}(A)$, where $A = \sum_{k=1}^{K}\alpha_k$.
\end{lemma}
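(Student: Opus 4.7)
The plan is to prove this by a direct change of variables from the Gamma basis $(y_1, \ldots, y_K)$ to the Dirichlet basis $(x_1, \ldots, x_{K-1}, Y_K)$ and show that the resulting joint density factorizes as a product of a Gamma density in $Y_K$ and a Dirichlet density in $\boldsymbol{x}$. Independence of $Y_K$ and $\boldsymbol{x}$ will fall out automatically from this factorization, which simultaneously establishes both marginal statements in the lemma.

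First I would write down the starting joint density. By hypothesis the $y_k$ are independent with $y_k \sim \mathrm{Gam}(\alpha_k)$, so
\begin{equation*}
\mathcal{P}_\Gamma(\boldsymbol{y};\boldsymbol{\alpha})\,\mathrm{d}\boldsymbol{y} = \prod_{k=1}^{K} \frac{y_k^{\alpha_k - 1} e^{-y_k}}{\Gamma(\alpha_k)}\,\mathrm{d}y_k.
\end{equation*}
Next I would substitute $y_k = x_k Y_K$ for $k = 1,\ldots,K-1$, $y_K = Y_K\bigl(1 - \sum_{k=1}^{K-1} x_k\bigr)$, and compute the Jacobian of the map $(x_1,\ldots,x_{K-1},Y_K) \mapsto (y_1,\ldots,y_K)$. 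A short row-reduction (subtracting $x_k$ times the last column off each of the first $K-1$ columns) gives $|\det \mathcal{J}| = Y_K^{K-1}$. Using $\sum_k y_k = Y_K$ and collecting all powers of $Y_K$ in the exponent (one gets $\sum_k(\alpha_k - 1) + (K-1) = A - 1$), the joint density becomes
\begin{equation*}
\frac{Y_K^{A-1} e^{-Y_K}}{\Gamma(A)}\,\mathrm{d}Y_K \;\cdot\; \frac{\Gamma(A)}{\prod_{k=1}^{K}\Gamma(\alpha_k)} \prod_{k=1}^{K} x_k^{\alpha_k - 1}\,\mathrm{d}x_1 \cdots \mathrm{d}x_{K-1}.
\end{equation*}
By inspection this is exactly $\mathcal{P}_\Gamma(Y_K; A)\,\mathrm{d}Y_K \cdot \mathcal{P}_\mathrm{D}(\boldsymbol{x};\boldsymbol{\alpha})\,\mathrm{d}\boldsymbol{x}$ as defined in Eq.~\ref{eqn:gammadirichletdist}, with $x_K := 1 - \sum_{k=1}^{K-1} x_k$ enforced on the simplex.

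The only step that requires any care is the bookkeeping in the Jacobian and the exponent-counting on $Y_K$; everything else is algebra. The factorization above shows simultaneously that the marginal of $Y_K$ is $\mathrm{Gam}(A)$, that the marginal of $\boldsymbol{x}$ is $\mathrm{Dir}(\boldsymbol{\alpha})$, and that these two random objects are independent, which is precisely the content of the lemma. An inductive proof by successively peeling off one coordinate at a time using the two-variable identity $(U,V) \mapsto (U+V, U/(U+V))$ for independent Gammas is an equivalent alternative route, but the single change-of-variables calculation sketched above is more direct and matches the Gamma/Dirichlet measure identity $\mathcal{P}_\Gamma(\boldsymbol{y};\boldsymbol{\alpha})\,\mathrm{d}\boldsymbol{y} = \mathcal{P}_{\mathrm{D},\Gamma}(\boldsymbol{x},Y_K;\boldsymbol{\alpha})\,\mathrm{d}\boldsymbol{x}\,\mathrm{d}Y_K$ already asserted in Eq.~\ref{eqn:gammadirichletbasis}.
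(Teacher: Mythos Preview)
Your proposal is correct and follows essentially the same route as the paper: a direct change of variables from $(y_1,\ldots,y_K)$ to $(x_1,\ldots,x_{K-1},Y_K)$, computation of the Jacobian determinant $Y_K^{K-1}$ via elementary row/column operations, and recognition of the resulting factorization as $\mathcal{P}_\Gamma(Y_K;A)\,\mathcal{P}_\mathrm{D}(\boldsymbol{x};\boldsymbol{\alpha})$. The only cosmetic difference is that the paper reduces the Jacobian by adding rows to the last row to obtain an upper-triangular matrix, whereas you do a column operation, and the paper finishes by explicitly integrating out $Y_K$ rather than reading off independence from the factorization; both are equivalent.
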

\begin{proof}
The joint distribution for $\boldsymbol{y}$ (by definition) behaves as follows:
\begin{equation} \label{eqn:jointgamma}
    \begin{aligned}
    \mathcal{P}_\Gamma(\boldsymbol{y}; \boldsymbol{\alpha}) &= \prod_{k = 1}^K \frac{e^{-y_k}y_k^{\alpha_k - 1}}{\Gamma(\alpha_k)}
    \end{aligned}
\end{equation}

Our Jacobian determinant has the form
\begin{equation}
\begin{aligned}
    \mathrm{det} \mathcal{J}^{\boldsymbol{y}}_{(\boldsymbol{x}, Y_K)} &:= \left|\frac{\partial \boldsymbol{y}}{\partial(\boldsymbol{x}, Y_K)}\right|\\&= \begin{vmatrix} Y_K&  0 &  \ldots &  0 &  x_1 \\
0& Y_K& \ldots& 0& x_2 \\ \vdots& \vdots& \ddots & \vdots &\vdots \\0 & 0&\ldots& Y_K &x_{K - 1} \\ -Y_K & -Y_K &\ldots &-Y_K& x_K \end{vmatrix} \\
&= \begin{vmatrix} Y_K&  0 &  \ldots &  0 &  x_1 \\
0& Y_K& \ldots& 0& x_2 \\ \vdots& \vdots& \ddots & \vdots &\vdots \\0 & 0&\ldots& Y_K &x_{K - 1} \\ 0 & 0 &\ldots &0& 1 \end{vmatrix}\\
&= \det \widetilde{\mathcal{J}} = \prod_k \widetilde{\mathcal{J}}_{kk} = Y_K^{K - 1}
\end{aligned}
\end{equation}
where in the second step we add all of the rows to the final row of the determinant and in the final step, we use the fact that the determinant of an upper triangular matrix is always the same as the product of the diagonal elements of that matrix.

Now, we can write the joint distribution for the new basis by applying the change-of-basis theorem (assuming the constraint $x_K = 1 - \sum_{k = 1}^{K - 1} x_k$):
\begin{equation}
    \begin{aligned}
    \mathcal{P}_{\mathrm{D}, \Gamma}(\boldsymbol{x}, Y_K; \boldsymbol{\alpha}) &= \det \mathcal{J}^{\mathrm{D}}_{\Gamma} e^{-Y_K} \prod_{k = 1}^K \frac{(Y_K x_k)^{\alpha_k - 1}}{\Gamma(\alpha_k)} \\
    &= Y_K^{K - 1} e^{-Y_K} \prod_{k = 1}^K \frac{(Y_K x_k)^{\alpha_k - 1}}{\Gamma(\alpha_k)} \\
    &= Y_K^{A - 1} e^{-Y_K}\prod_{k = 1}^K \frac{ x_k^{\alpha_k - 1}}{\Gamma(\alpha_k)}\\
    &= \Gamma(A) \mathcal{P}_{\Gamma}(Y_K; A) \prod_{k = 1}^K \frac{ x_k^{\alpha_k - 1}}{\Gamma(\alpha_k)}\\
    \mathcal{P}_{\mathrm{D}}(\boldsymbol{x}; \boldsymbol{\alpha}) &= \int \mathcal{P}_{\mathrm{D}, \Gamma}(\boldsymbol{x}, Y_K; \boldsymbol{\alpha}) \mathrm{d}Y_K \\
    &= \Gamma(A) \prod_{k = 1}^K \frac{ x_k^{\alpha_k - 1}}{\Gamma(\alpha_k)}\\
    &= \frac{\prod_{k = 1}^K x_k^{\alpha_k - 1}}{\mathrm{D}(\boldsymbol{\alpha})}.
    \end{aligned}
\end{equation}

Therefore, we have that $Y_K$ follows a gamma distribution, and we integrate out $Y_K$ to get the marginal PDF $\mathcal{P}_\mathrm{D}(\boldsymbol{x}; \boldsymbol{\alpha})$, which is the Dirichlet distribution for $\boldsymbol{x}$ where $\mathrm{D}(\boldsymbol{\alpha})$ is the normalization constant.

\end{proof}

\subsection{Haar measure of $\mathrm{SO}(N)$} \label{sec:orthogonal}

Our theory extends to parametrizing orthogonal matrices using general Dirichlet trees. This parametrization might be useful for designing compact optical devices designed to parametrize arbitrary orthogonal operators (not necessarily all arbitrary \textit{unitary} operators).

We begin by showing that a real Gaussian-distributed vector has a different Gamma basis than the complex Gaussian-distributed vector.

\begin{lemma}
An iid Gaussian-distributed real vector $\boldsymbol{v}$ has a Gamma basis $\boldsymbol{y}$ with $\boldsymbol{\alpha} = \frac{1}{2}\cdot \boldsymbol{1}$.
\end{lemma}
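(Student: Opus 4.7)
The plan is to mirror the derivation leading to Equation \ref{eqn:jointgamma1} for the complex case, but with a single real Gaussian variable per coordinate in place of the complex pair $(a_k, b_k)$. First I would fix the normalization by taking $v_k \sim \mathcal{N}(0, \tfrac{1}{2})$ iid, so that the exponent $-v_k^2$ comes out with the clean coefficient needed to match a Gamma density, and write the joint PDF $\mathcal{P}_{\mathcal{N}}(\boldsymbol{v}) \mathrm{d}\boldsymbol{v} = \prod_k \pi^{-1/2} e^{-v_k^2} \mathrm{d}v_k$.

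Next I would change variables from $v_k$ to the ``power'' coordinate $y_k = v_k^2$ together with a discrete sign $\varphi_k \in \{0, \pi\}$ playing the role the continuous phase played in the complex case (so that $v_k = \sqrt{y_k}\, e^{i\varphi_k}$ with $e^{i\varphi_k} = \pm 1$). The map $v_k \mapsto y_k$ is two-to-one, so the Jacobian factor $|\mathrm{d}v_k/\mathrm{d}y_k| = 1/(2\sqrt{y_k})$ combines with a factor of $2$ from summing over the two sign branches to give the marginal
\begin{equation}
    \mathcal{P}(y_k)\,\mathrm{d}y_k \;=\; \frac{e^{-y_k}}{\sqrt{\pi\, y_k}}\,\mathrm{d}y_k \;=\; \frac{y_k^{1/2 - 1} e^{-y_k}}{\Gamma(1/2)}\,\mathrm{d}y_k,
\end{equation}
where I use $\Gamma(1/2) = \sqrt{\pi}$. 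Comparison with Equation \ref{eqn:gammadirichletdist} identifies this as $\mathrm{Gam}(1/2)$. Taking the product over $k$ and treating the discrete signs $\varphi_k$ as the real-case analogue of the uniform phases in Equation \ref{eqn:gammadirichletbasis}, I get $\boldsymbol{y} \sim \mathrm{Gam}(\tfrac{1}{2}\cdot\boldsymbol{1})$, which is precisely the claim.

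As a sanity check that ties the result to the complex statement of Equation \ref{eqn:jointgamma1}, I would note that a standard complex normal $v_k = a_k + i b_k$ with $a_k, b_k \sim \mathcal{N}(0, \tfrac{1}{2})$ independent gives $y_k = a_k^2 + b_k^2$ as a sum of two iid $\mathrm{Gam}(1/2)$ variables, which by the additive property of gamma distributions used throughout Section \ref{sec:treecoordinatestats} reproduces $\mathrm{Gam}(1)$; the real case is simply ``half'' of this.

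There is no substantive obstacle here: the only subtlety is bookkeeping the factor of $2$ from the two-sheeted change of variables $v_k = \pm\sqrt{y_k}$, and recognizing that the continuous phase $\varphi_k \in [0, 2\pi)$ of the complex case is replaced by a binary sign variable in the real case, so that ``integrating out'' the phase reduces to summing over two branches. Once the Jacobian and $\Gamma(1/2) = \sqrt{\pi}$ are in place, independence across $k$ immediately yields the product form and the stated parameter vector $\boldsymbol{\alpha} = \tfrac{1}{2}\cdot\boldsymbol{1}$.
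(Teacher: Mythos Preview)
Your proposal is correct and follows essentially the same route as the paper: change variables from $v_k$ to $y_k = v_k^2$, pick up the $y_k^{-1/2}$ Jacobian factor, and recognize the result as a $\mathrm{Gam}(1/2)$ density. The paper's version is terser---it works only up to proportionality and does not explicitly track the discrete sign $\varphi_k \in \{0,\pi\}$ or the $\Gamma(1/2) = \sqrt{\pi}$ normalization---so your treatment is in fact more complete, but the underlying argument is identical.
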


\begin{proof}
We perform a change of basis:
\begin{equation}
    \begin{aligned} \label{eqn:jointgammareal}
    \mathcal{P}_{\mathcal{N}}(\boldsymbol{v}_K) \mathrm{d}\boldsymbol{v}_K &:\propto \prod_{k=1}^K e^{-v_k^2} \mathrm{d}v_k \\
    &\propto \prod_{k=1}^K \frac{1}{ \sqrt{y_k}} e^{-y_k} \mathrm{d}y_k
    \end{aligned}
\end{equation}
Therefore, we have that the Gamma basis $y_k \sim \mathrm{Gam}\left(\frac{\alpha_k}{2}\right)$, i.e. $\boldsymbol{\alpha} = \frac{1}{2}\cdot \boldsymbol{1}$.
\end{proof}

We find that the orthogonal Haar measure can thus be determined using the same procedure as outlined in the main text for the unitary Haar measure. One key difference is that the phase shifts $\boldsymbol{\varphi} \in \{0, \pi\}^{M}$ (multiply by either $1$ or $-1$) and thus are not continuous degrees of freedom. The other difference is that the input $\boldsymbol{\alpha}$ all have values of $1 / 2$ rather than $1$ as was the case for the complex random Gaussian vector. The Haar measure for any $N \times N$ orthogonal matrix $U_N \in \mathrm{SO}(N)$ in our decision tree framework can be written as:
\begin{equation} \label{eqn:generalorthohaarmeasure}
    \begin{aligned}
    \mathrm{d}\mu(U_N) &= \prod_{n=1}^{N} \mathrm{d}\mu(R_{n, \mathcal{G}_n}^{[N]}) \\
    &= \prod_{n=1}^N \prod_{j=1}^{J_n} \mathcal{P}_{\mathrm{D}}(\boldsymbol{x}^{(j)}_n, \boldsymbol{\alpha}^{(j)}_n) \mathrm{d}\boldsymbol{x}^{(j)}_n
    \end{aligned}
\end{equation}
where we have the set of possible $\boldsymbol{x}_n^{(j)}$ (i.e., the support) must satisfy $x_{n, K_{nj}}^{(j)} = 1 - \sum_{k = 1}^{K_{nj}-1} x^{(j)}_{n,k}$ for all $j, n$.

The special case of binary tree networks is:
\begin{equation} \label{eqn:binaryorthohaarmeasure}
    \begin{aligned}
    \mathrm{d}\mu(U_N) &= \prod_{n=1}^{N} \prod_{n'=1}^{n - 1} \mathcal{P}_\mathrm{B}(t_{nj}; \alpha_{nj}, \beta_{nj}) \mathrm{d}t_{nj}\\
    &= \prod_{n=1}^{N} \prod_{n'=1}^{n - 1} \mathcal{P}_{\mathrm{B}, \theta}(\theta_{nj}; \alpha_{nj}, \beta_{nj}) \mathrm{d}\theta_{nj} \\
    &= \prod_{n=1}^{N} \prod_{n'=1}^{n - 1} \mathrm{d}\xi_{nj},
    \end{aligned}
\end{equation}
where
\begin{align*}
    \mathcal{P}_\mathrm{B}(t; \alpha, \beta) &= \frac{t^{\alpha - 1} (1 - t)^{\beta - 1}}{\mathrm{B}(\alpha, \beta)}\\
    \mathcal{P}_{\mathrm{B}, \theta}(\theta; \alpha, \beta) &= \frac{\left(\cos \frac{\theta}{2}\right)^{\alpha - 1} \left(\sin \frac{\theta}{2}\right)^{\beta - 1}}{\mathrm{B}(\alpha, \beta)} \\
    \xi_{nj} &= I_{t_{nj}}\left(\alpha_{nj}, \beta_{nj}\right).
\end{align*}
where $\theta_{nj} \in [0, \pi]$ and $\alpha, \beta$ are the number of outputs spanned by each subtree of the node. The tree coordinate system for real numbers in Equation \ref{eqn:binaryorthohaarmeasure} additionally matches the result for hyperspherical harmonics in Ref. \cite{Nikiforov1991ClassicalVariable}.

\section{Dirichlet node operator} \label{sec:irwin}
In this section, we explicitly define one possible definition for the Dirichlet node operator $O_K(\boldsymbol{y})$ in terms of the Gamma basis as defined in Section \ref{sec:treecoordinatestats}. We note $y_k = |v_k|^2$ for any complex vector $\boldsymbol{v}$ and the sum of $y_k$ is $Y_K$. Applying Gram-Schmidt orthogonalization to the vector $\boldsymbol{y}$ gives the matrix representation \cite{J.O.Irwin1942OnWeighting} (where $O_K = [O_{jk}]$):

\begin{equation} \label{eqn:irwingamma}
    \begin{aligned}
        O_{jk}(\boldsymbol{y}) &= \begin{cases}
        \sqrt{\frac{y_j}{Y_K}} & k = K\\
        \sqrt{\frac{y_jy_k}{\sum_{i=1}^{j-1} y_i \sum_{i=1}^{j} y_i}} & j < k < K\\
        \frac{-\sum_{i=1}^{j} y_i}{\sqrt{\sum_{i=1}^{j-1} y_i \sum_{i=1}^{j} y_i}} & 1 < j = k < K\\
        0 & k < j < K
        \end{cases}
    \end{aligned}
\end{equation}

Equation \ref{eqn:irwingamma} can also be written in terms of the unit Dirichlet basis defined in Section \ref{sec:treecoordinatestats} as 
\begin{equation} \label{eqn:irwindirichlet}
    \begin{aligned}
        O_{jk}(\boldsymbol{x}) &= \begin{cases}
        \sqrt{x_k} & k = K\\
        \sqrt{\frac{x_jx_k}{\sum_{i=1}^{j-1} x_i \sum_{i=1}^{j} x_i}} & j < k < K\\
        \frac{-\sum_{i=1}^{j} x_i}{\sqrt{\sum_{i=1}^{j-1} x_i \sum_{i=1}^{j} x_i}} & 1 < j = k < K\\
        0 & k < j < K
        \end{cases}
    \end{aligned}
\end{equation}

Any network component implementing $R_K$ would be parametrized by $\boldsymbol{x}$, which is independent of $Y_K$, and follows the form of Eq. \ref{eqn:irwindirichlet}. 

Note that this representation is \textit{not} necessarily unique. There may be many equivalent representations of $O_{jk}(\boldsymbol{x})$ in the Dirichlet basis that do not assume the form of Eq. \ref{eqn:irwindirichlet}. Empirically, we find that the absolute values of the matrix in Eq. \ref{eqn:irwindirichlet} is exactly equal to those of unbalanced binary tree representation in the main text, but a proof of this equivalence is not explicitly provided here.

\subsection{Dirichlet tree operator construction} \label{sec:dirichlettreeoperator}

In this section, we define a recursive protocol for defining any Dirichlet tree operator, which is a generalization of the definition in the main text.

\begin{definition}
Assume we are given a set of $K$ Dirichlet tree operators $O_{M_1, \mathcal{G}_1}, O_{M_2, \mathcal{G}_2}, \ldots O_{M_K, \mathcal{G}_K}$ programmed respectively to vectors $\boldsymbol{x}_{M_1}, \boldsymbol{x}_{M_2}, \ldots \boldsymbol{x}_{M_K}$. This set can be \textit{connected} to form a $M$-Dirichlet tree operator $O_{M, \mathcal{G}}$ programmed to $\boldsymbol{x}_M = [\boldsymbol{x}_{M_1}, \boldsymbol{x}_{M_2}, \ldots \boldsymbol{x}_{M_K}]$, where $M = \sum_{k = 1}^K M_k$. In the base case, if $M_k = 1$, the tree $\mathcal{G}_k$ is an empty set suggesting there are no further connections to other subgraphs, and we define $O_{1, \{\}} := O_1 = 1$. For the sake of the recursive definition, we define $\mathcal{G} = \{\mathcal{G}_1, \mathcal{G}_2, \ldots \mathcal{G}_K\}$.

We then define the overall connected ``OMG'' operator $O_{M, \mathcal{G}}(\boldsymbol{x}_M)$ recursively as:
\begin{equation}\label{eqn:connection}
    \begin{aligned}
    T_{M_1, \ldots M_K} &:= P_{M, K} O_{K}^{[M]}(\overline{\boldsymbol{x}}_K) \left(\prod_{k = 1}^K P_{k, M_k}\right)\\
    O_{\mathcal{G}_1, \ldots \mathcal{G}_K} &:= \begin{bmatrix}
        O_{M_1, \mathcal{G}_1} & 0 & \cdots & 0\\
        0 & O_{M_2, \mathcal{G}_2} & \cdots & 0\\
        \vdots & \vdots & \ddots & \vdots\\
        0 & 0 & \cdots & O_{M_K, \mathcal{G}_K}
    \end{bmatrix} \\
    O_{M, \mathcal{G}} &:= T_{M_1, \ldots M_K} (\widetilde{\boldsymbol{x}}_M) O_{\mathcal{G}_1, \ldots \mathcal{G}_K}(\boldsymbol{x}_M),
    \end{aligned}
\end{equation}
where $\widetilde{\boldsymbol{v}} := O_{\mathcal{G}_1, \ldots \mathcal{G}_K}(\boldsymbol{v}) \boldsymbol{v}$ is the vector propagated to , $O_{K}^{[M]}$ is a $K$-Dirichlet node embedded in the first $K$ dimensions of $M$-dimensional space, $\overline{\boldsymbol{x}}_K = [\overline{x}_1, \overline{x}_2, \ldots \overline{x}_K]$ where $\overline{x}_k = \sqrt{\sum_{m = 1}^{M_k}{x_{k, m}}}$, and $P_{i, j}$ is a permutation matrix switching indices $i, j$. As is shown in Lemma \ref{lm:mdirichlettree}, we have $O_{M, \mathcal{G}}(\boldsymbol{x}_M) \sqrt{\boldsymbol{x}_M} = \boldsymbol{e}_M$.
\end{definition}

\begin{lemma} \label{lm:mdirichlettree}
The product of a Dirichlet tree operator and its vector always has the form $O_{M, \mathcal{G}}(\boldsymbol{x}_M) \sqrt{\boldsymbol{x}_M} = \boldsymbol{e}_M$.
\end{lemma}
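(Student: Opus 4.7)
The plan is to prove Lemma~\ref{lm:mdirichlettree} by strong induction on $M$, the total number of leaves of $\mathcal{G}$. The base case $M=1$ is immediate from the stated convention $O_{1,\{\}} := 1$ applied to $\sqrt{\boldsymbol{x}_1} = 1$, which trivially equals $\boldsymbol{e}_1$. For the inductive step I assume the lemma for every Dirichlet tree operator with fewer than $M$ leaves, so in particular for each subtree operator $O_{M_k,\mathcal{G}_k}$ appearing in the connection formula~\eqref{eqn:connection}.

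The second step is to unpack $O_{M,\mathcal{G}}(\boldsymbol{x}_M)\sqrt{\boldsymbol{x}_M}$ in the two stages given by \eqref{eqn:connection}. First apply the block-diagonal operator $O_{\mathcal{G}_1,\ldots,\mathcal{G}_K}$. The $k$th block receives $\sqrt{\boldsymbol{x}_{M_k}}$, which by linearity factors as $\overline{x}_k\sqrt{\boldsymbol{x}_{M_k}/\overline{x}_k^{\,2}}$, where the vector in the square root is a unit Dirichlet vector (its entries sum to $1$ by the definition of $\overline{x}_k$). The inductive hypothesis applied to each subtree then collapses that block to $\overline{x}_k\,\boldsymbol{e}_{M_k}$. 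Concatenating over $k=1,\ldots,K$, the output $\widetilde{\boldsymbol{x}}_M$ of the first stage is supported only at the ``last slot'' of each block, carrying values $\overline{x}_1,\overline{x}_2,\ldots,\overline{x}_K$.

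The third step is to push $\widetilde{\boldsymbol{x}}_M$ through $T_{M_1,\ldots,M_K}(\widetilde{\boldsymbol{x}}_M)$. The inner permutations $P_{k,M_k}$ shuffle each of those $K$ nonzero entries into the first $K$ coordinates, producing a vector whose first $K$ entries equal $\overline{\boldsymbol{x}}_K$ and whose remaining $M-K$ entries vanish. Equivalently this is $\sqrt{\boldsymbol{z}}$ embedded into $\mathbb{R}^M$, where $\boldsymbol{z}\in\mathbb{R}_{\ge 0}^K$ is the Dirichlet vector with $z_k = \overline{x}_k^{\,2}$ (note $\sum_k z_k = \|\boldsymbol{x}_M\|_1 = 1$, so $\boldsymbol{z}$ is indeed a unit Dirichlet vector feeding the root node). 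By the defining property of the $K$-Dirichlet node operator, $O_K(\overline{\boldsymbol{x}}_K)\sqrt{\boldsymbol{z}} = \boldsymbol{e}_K$; embedded in $\mathbb{R}^M$ via $O_K^{[M]}$, this yields $\boldsymbol{e}_K\in\mathbb{R}^M$. The final permutation $P_{M,K}$ maps it to $\boldsymbol{e}_M$, completing the induction.

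The part I expect to require the most care is the permutation bookkeeping, specifically verifying that the ``last slot'' convention for each subtree output $\boldsymbol{e}_{M_k}$ is compatible with the permutations $P_{k,M_k}$ that funnel the $\overline{x}_k$ values into the first $K$ coordinates, and that after the root-node contraction the single surviving coordinate is the one swapped to position $M$ by $P_{M,K}$. Once one fixes an ordering convention consistent with \eqref{eqn:connection} (for instance, the last rail of each block is the one carrying nonzero amplitude after the inductive step), all the intermediate vectors align and the calculation above goes through. No additional analytic input beyond the node identity $O_K(\boldsymbol{v})\sqrt{\boldsymbol{v}}=\boldsymbol{e}_K$ and the inductive hypothesis is needed.
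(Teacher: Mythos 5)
Your proof is correct and follows essentially the same route as the paper's: both unwind the recursive connection definition, collapse each subtree block to a single coordinate (the paper does this implicitly as one recursive step, you as an explicit strong induction), and then apply the root-node identity $O_{K}(\boldsymbol{z})\sqrt{\boldsymbol{z}} = \boldsymbol{e}_K$ with $z_k = \overline{x}_k^{\,2}$ together with the permutation bookkeeping. The only refinement worth noting is that your renormalization of each block tacitly uses scale-invariance of the subtree operator's programming (only ratios of the $x$'s enter, as in the explicit node construction of Eq. \ref{eqn:irwindirichlet}); the paper avoids this by effectively proving the unnormalized form $O_{M, \mathcal{G}}\,\sqrt{\boldsymbol{y}} = \sqrt{Y_M}\,\boldsymbol{e}_M$ directly.
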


\begin{proof}
Assume we have some vector $\boldsymbol{v}_{M} \in \mathbb{C}^M$ and some graphical representation $\mathcal{G}$ defined as above.

We then perform the following calculation as defined in Equation \ref{eqn:connection}:

\begin{equation} \label{eqn:mprojectiontreeconnection}
\begin{aligned}
    O_{M; 1, M_1} \boldsymbol{v}_{M} &= \|\boldsymbol{v}_{M; 1, M_1}\| \boldsymbol{e}_{M_1} + \boldsymbol{v}_{M; M_1 + 1, M} \\&= \widetilde{\boldsymbol{v}}_{M; M_1, M} \\
    O_M \boldsymbol{v}_M &= O_{M; M_1, M} P_{m_2, M_1} \widetilde{\boldsymbol{v}}_{M; M_1, M} \\&= \sqrt{Y_M} \boldsymbol{e}_M,
\end{aligned}
\end{equation}
where we get $\sqrt{Y_M} = \|\boldsymbol{v}_{M}\|$ since all operators in Equation \ref{eqn:mprojectiontreeconnection} are unitary. Note $P_{m_2, M_1}$ does not affect the first $M_1 - 1$ elements of $\widetilde{\boldsymbol{v}}_{M; M_1, M}$, which should be zero.
\end{proof}

\bibliography{binarytree}

\end{document}